\documentclass[dvipsnames,11pt]{article}
\usepackage{amsmath,amssymb,amsthm,color,bm,mathrsfs,extarrows,tikz,graphicx,mathtools,enumitem,fancybox,makecell}
\usepackage[square,sort,comma,numbers]{natbib}
\usepackage{subcaption}
\usepackage[ruled]{algorithm2e}
\usepackage[margin=1in]{geometry}
\usepackage{xcolor,bbm}
\usepackage[pagebackref,hypertexnames=false]{hyperref}
\usepackage{comment}
\usepackage{soul}
\usepackage{thm-restate}
\usepackage{ifthen}
\usepackage{mathdots}
\usepackage{enumitem}
\usepackage{float}

\allowdisplaybreaks

\setlist[itemize]{itemsep=0pt}
\setlist[enumerate]{itemsep=0pt}
\hypersetup{colorlinks=true,urlcolor=blue,linkcolor=blue,citecolor=[rgb]{.42,.56,.14},}
\usetikzlibrary{decorations.pathreplacing,decorations.markings,decorations.pathmorphing,decorations.shapes,arrows.meta,positioning,patterns, quantikz2}

\usepackage[capitalise,nameinlink]{cleveref}
\Crefname{lemma}{Lemma}{Lemmas}
\Crefname{fact}{Fact}{Facts}
\Crefname{question}{Question}{Questions}
\Crefname{theorem}{Theorem}{Theorems}
\Crefname{corollary}{Corollary}{Corollaries}
\Crefname{claim}{Claim}{Claims}
\Crefname{example}{Example}{Examples}
\Crefname{problem}{Problem}{Problems}
\Crefname{definition}{Definition}{Definitions}
\Crefname{notation}{Notation}{Notations}
\Crefname{assumption}{Assumption}{Assumptions}
\Crefname{subsection}{Subsection}{Subsections}
\Crefname{section}{Section}{Sections}
\Crefformat{equation}{(#2#1#3)}

\newtheorem{theorem}{Theorem}[section]
\newtheorem*{theorem*}{Theorem}

\newtheorem*{proposition*}{Proposition}

\newtheorem*{property*}{Property}
\newtheorem{lemma}[theorem]{Lemma}
\newtheorem*{lemma*}{Lemma}
\newtheorem{corollary}[theorem]{Corollary}
\newtheorem*{corollary*}{Corollary}
\newtheorem*{conjecture*}{Conjecture}
\newtheorem{fact}[theorem]{Fact}
\newtheorem*{fact*}{Fact}

\newtheorem*{exercise*}{Exercise}

\newtheorem*{hypothesis*}{Hypothesis}

\theoremstyle{definition}
\newtheorem{definition}[theorem]{Definition}

\newtheorem{example}[theorem]{Example}

\newtheorem{exercise-easy}[theorem]{Exercise}
\newtheorem{exercise-med}[theorem]{Exercise}
\newtheorem{exercise-hard}[theorem]{Exercise$^\star$}
\newtheorem{claim}[theorem]{Claim}
\newtheorem*{claim*}{Claim}

\newtheorem{remark}[theorem]{Remark}
\newtheorem*{remark*}{Remark}

\newtheorem*{observation*}{Observation}

\DeclareSymbolFont{extraup}{U}{zavm}{m}{n}
\DeclareMathSymbol{\varheart}{\mathalpha}{extraup}{86}
\DeclareMathSymbol{\vardiamond}{\mathalpha}{extraup}{87}

\DeclareMathOperator*{\E}{\mathbb E}

\renewcommand{\Pr}{\operatorname*{\mathbf{Pr}}}

\newcommand{\eps}{\varepsilon}
\newcommand{\abs}[1]{\left| #1 \right|}
\newcommand{\vabs}[1]{\left\| #1 \right\|}

\newcommand{\pbra}[1]{\left( #1 \right)}
\newcommand{\sbra}[1]{\left[ #1 \right]}
\newcommand{\cbra}[1]{\left\{ #1 \right\}}
\newcommand{\floorbra}[1]{\left\lfloor #1 \right\rfloor}
\newcommand{\ceilbra}[1]{\left\lceil #1 \right\rceil}

\renewcommand{\mid}{\,\middle\vert\,}

\newcommand{\bin}{\{0,1\}}

\newcommand{\poly}{\mathsf{poly}}

\newcommand{\indicator}{\mathbbm{1}}

\newcommand{\supp}[1]{\mathsf{supp}\pbra{#1}}

\newcommand{\ac}{\mathsf{AC^0}}
\newcommand{\nc}{\mathsf{NC^0}}
\newcommand{\ext}{\mathsf{Ext}}
\newcommand{\rext}{\mathsf{rExt}}
\newcommand{\iso}{\mathsf{Iso}}
\newcommand{\add}{\mathsf{addr}}

\newcommand{\Fbb}{\mathbb{F}}

\newcommand{\Ecal}{\mathcal{E}}

\newcommand{\Hcal}{\mathcal{H}}

\newcommand{\Xcal}{\mathcal{X}}
\newcommand{\Ycal}{\mathcal{Y}}

\newcommand{\Ab}{\mathbf{A}}
\newcommand{\Bb}{\mathbf{B}}
\newcommand{\db}{\mathbf{d}}
\newcommand{\Db}{\mathbf{D}}
\newcommand{\Gb}{\mathbf{G}}
\newcommand{\gb}{\mathbf{g}}

\newcommand{\hb}{\mathbf{h}}
\newcommand{\Mb}{\mathbf{M}}
\newcommand{\Pb}{\mathbf{P}}

\newcommand{\Qb}{\mathbf{Q}}

\newcommand{\Sb}{\mathbf{S}}
\newcommand{\ub}{\mathbf{u}}
\newcommand{\Ub}{\mathbf{U}}
\newcommand{\xb}{\mathbf{x}}
\newcommand{\Xb}{\mathbf{X}}
\newcommand{\Yb}{\mathbf{Y}}
\newcommand{\Zb}{\mathbf{Z}}
\newcommand{\zb}{\mathbf{z}}
\newcommand{\Sigmab}{\mathbf{\Sigma}}

\newcommand{\tvdist}[1]{\vabs{#1}_\mathsf{TV}}

\renewcommand{\tilde}{\widetilde}

\title{Hard-to-Sample Distributions from Robust Extractors}

\author{
Farzan Byramji\thanks{UC San Diego. Email: \href{mailto:fbyramji@ucsd.edu}{\texttt{fbyramji@ucsd.edu}}. Supported by Simons Investigator Award \#929894, and NSF Awards CCF-2425349 and AF: Medium 2212136.}
\and 
Daniel M.~Kane\thanks{UC San Diego. Email: \href{mailto:dakane@ucsd.edu}{\texttt{dakane@ucsd.edu}}. Supported by NSF Medium Award CCF-2107547.}
\and 
Jackson Morris\thanks{UC San Diego. Email: \href{mailto:jrm035@ucsd.edu}{\texttt{jrm035@ucsd.edu}}.}
\and 
Anthony Ostuni\thanks{UC San Diego. Email: \href{mailto:aostuni@ucsd.edu}{\texttt{aostuni@ucsd.edu}}. Partially supported by Simons Investigator Award \#929894 and NSF Award CCF-2425349.}
}

\date{}

\begin{document}

\maketitle

\begin{abstract}
    We provide a unified method for constructing explicit distributions which are difficult for restricted models of computation to generate.
    Our constructions are based on a new notion of \emph{robust extractors}, which are extractors that remain sound even when a small number of points violate the min-entropy constraint.
    Using such objects, we show that for a broad range of sampling models (e.g., low-depth circuits, small-space sources, etc.), every output of the model has distance $1 - o(1)$ from our target distribution, qualitatively recovering essentially all previously known hardness results.
    Our work extends that of Viola (SICOMP '14), who developed an earlier unified framework based on traditional extractors to rule out sampling with very small error.
    
    As a further application of our technique, we leverage a recent extractor construction of Chattopadhyay, Goodman, and Gurumukhani (ITCS '24) to present the first explicit distribution with distance $1 - o(1)$ from the output of any low-degree $\mathbb{F}_2$-polynomial source.
    We note that a similar bound was obtained concurrently and independently by Khodabandeh and Shinkar (ECCC '26).
    We also describe a potential avenue toward proving a similar hardness result for $\mathsf{AC^0}[\oplus]$ circuits.
\end{abstract}

\section{Introduction}

The quest to prove unconditional hardness results for restricted models of computation (e.g., low-depth circuits) is one of the major programs of complexity theory.
This program saw a number of major successes in the 1980s when researchers exposed the limitations of $\ac$ and $\ac[p]$ circuits with simple and explicit hard-to-compute functions \cite{FSS84, Ajt83, yao1985separating, hastad1986almost, haastad1986computational, smolensky1987algebraic, razborov1987lower}.
Unfortunately, the current state of affairs on this front remains rather humble: even four decades later we still cannot rule out the preposterous claim that all of $\textsf{PSPACE}$ can be computed by polynomially-sized $\ac$ circuits with mod 6 gates.

It would be disingenuous, however, to suggest that no modern progress has occurred.
Even just the past few years have seen impressive developments in understanding the limitations of these models through the lens of pseudorandom generators \cite{DILV24, DH25, HL25, Vin25, LV25} (see also the survey \cite{HH24}), extractors \cite{chattopadhyay2024extractors,GGH+24, AGMR25}, and quantum analogs \cite{NPVY24, ADOY25, JTVW25, FGPT25, GMW26}.\footnote{This is only a minute fraction of contemporary work, and we encourage the interested reader to consult references within those cited.}

One particularly exciting line of inquiry is understanding the capabilities of these weak models to generate specific distributions (when taking random bits as input).
The landscape here is dramatically different from the traditional task of computing specific functions.
For example, if a circuit can compute a function $f\colon \bin^n \to \bin$, then it can output the uniform distribution over $(x,f(x))$ by simply computing $f$ on each random input.
The converse, however, is not true.
While \textsf{PARITY} has long been known to be difficult for $\ac$ circuits to compute \cite{FSS84, Ajt83, yao1985separating, hastad1986almost, haastad1986computational, smolensky1987algebraic, razborov1987lower}, one can produce the uniform distribution over input-output pairs by simply mapping the uniformly random bits $(x_1, x_2, \dots, x_{n+1})$ to $(x_1 \oplus x_2, x_2 \oplus x_3, \dots, x_{n}\oplus x_{n+1}, x_{n+1}\oplus x_1)$ \cite{babai1987random, boppana1987one}. 

Even still, for many common models including small $\ac$ circuits \cite{lovett2011bounded, beck2012large}, small-space sources \cite{chattopadhyay2022space}, and communication protocols \cite{AST+03, GW20, chattopadhyay2022space, yu2024sampling}, researchers have discovered explicit distributions which have total variation distance $1-o(1)$ from any distribution produced by the model.
In comparison to our knowledge of computation, the glaring shortcoming in our understanding of sampling is that we do not know of any explicit distribution which cannot be (approximately) produced by polynomially-sized $\ac$ circuits with mod $p$ gates for any prime $p$.

Beyond its inherent intrigue, the complexity of distributions has intimate connections with data structure lower bounds \cite{viola2012complexity, lovett2011bounded, beck2012large, viola2020sampling, chattopadhyay2022space, viola2023new, yu2024sampling, kane2024locality, alekseev2025sampling}, quantum supremacy \cite{WP26, kane2024locality, GKM+26}, explicit codes \cite{shaltiel2024explicit}, and learning theory \cite{KOW25}.
Moreover, the field has had a fruitful relationship with the study of \emph{extractors} \cite{trevisan2000extracting, viola2012extractors, de2012extractors, viola2014extractors, ball2025extractors, Sha25}, which are objects that convert particular sources of randomness into approximately uniform ones.

This work further develops the latter connection by introducing a general paradigm based on \emph{robust extractors} to prove sampling lower bounds.
Intuitively, one can view robust extractors as extractors which remain sound even when a small number of points violate the min-entropy condition.
There are several ways to formalize this notion, and we defer our exact definition and discussion of alternatives to \Cref{ssec:ext_and_rext}.
For now, we will say a robust extractor $\rext\colon \bin^n \to \bin^m$ for a class of distributions $\Xcal$ satisfies the following two conditions:
\begin{enumerate}
    \item \label{itm:ext_property} (Extractor Property) For every source $\Xb \in \Xcal$ with sufficient min-entropy, the output of $\rext$ on $\Xb$ is close in total variation (TV) distance to the uniform distribution over $\bin^m$, and

    \item (Robustness Property) For every source $\Xb \in \Xcal$, the probability that $\xb \sim \Xb$ lands in the set of $\Xb$'s low probability points and $\rext(\xb) = 0^m$ is not much more than $2^{-m}$.
\end{enumerate}
With these objects in hand, we can generically prove strong sampling lower bounds.
This result further advances a similar approach based on traditional extractors \cite{viola2014extractors}, which we discuss in greater detail in \Cref{sec:main_pf}.
Below, $\Ub^n$ denotes the uniform distribution over $\bin^n$, and for a function $f\colon\bin^n\to\bin^m$, $f(\Ub^n)$ denotes the output distribution of $f$ with $n$ uniformly random bits as input.

\begin{theorem}[Informal version of \Cref{thm:main}]\label{thm:main_informal}
    Let $\Xcal$ be a class of distributions over $\bin^{t(n+1)}$, and let $\Ycal$ be the class of distributions over $\bin^n$ obtained from ``low complexity'' functions of $\Xcal$.
    Suppose $\rext\colon \bin^n \to \bin^m$ is a sufficiently good robust extractor for $\Ycal$, and define the function $f\colon \bin^n \to \bin$ by $f(x) = \indicator(\rext(x)=0^m)$.
    If we consider the distribution
    \[
        \Db = (\Ub_1, \Ub_2, \dots, \Ub_t, f(\Ub_1), f(\Ub_2), \dots, f(\Ub_t)),
    \]
    where $\Ub_1, \dots, \Ub_t$ are independent copies of $\Ub^n$, then every source $\bf{X}\in \Xcal$ has TV distance
    \[
        \tvdist{\Xb - \Db} \ge 1 - o_{t,n}(1),
    \]
    where $o_{t,n}(1) \to 0$ as $t,n \to \infty$.
    
    Moreover, if $\tilde{\rext}\colon \bin^n \to \bin$ is a sufficiently good robust extractor (in a different parameter regime) for $\Ycal$, then $\tvdist{\Xb - (\Ub^n, \tilde{\rext}(\Ub^n))} \ge \frac{1}{4} - o_n(1)$.
\end{theorem}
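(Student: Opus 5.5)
Let $\Xb = g(\Ub^\ell)$ be a source in $\Xcal$ over $\bin^{t(n+1)}$, and write its output as $(Y_1,\dots,Y_t,b_1,\dots,b_t)$ with $Y_i\in\bin^n$. The plan is to exhibit a statistical test $T$ that accepts $\Db$ with probability $1-o(1)$ and accepts $\Xb$ with probability $o(1)$. Take
\[
T(y,b)=\indicator\Big(\textstyle\sum_i b_i \ge \tfrac12 t2^{-m}\text{ and } f(y_i)=b_i\ \forall i\Big).
\]
Under $\Db$ the consistency condition always holds. The count $\sum_i f(\Ub_i)$ is binomial with mean $t2^{-m}$, up to the extractor error. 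So if $t2^{-m}\to\infty$, Chernoff gives acceptance $1-o(1)$. The parameters will be chosen so that $t\gg 2^m$.

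Under $\Xb$, I will bound the expected number of indices $i$ with $b_i=1$ and $f(Y_i)=1$, and then apply Markov's inequality. The key step is to control the joint behaviour of $Y_i$ together with the bit $b_i$. The heuristic is that if $b_i$ is determined by $Y_i$ in a "low complexity" way, then conditioning on $b_i=1$ leaves $Y_i$ as a source in $\Ycal$. Because $f(Y_i)=1$ means hitting the preimage of $0^m$, an event of density about $2^{-m}$, $Y_i$ can force it only if it places mass on few points.

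To make this precise, I will fix index $i$ and consider the $Y_i$-marginal, which lies in $\Ycal$ by the closure assumption. For $y\in\bin^n$, split into heavy points ($\Pr[Y_i=y]\ge 2^{-k}$) and light points.
\begin{itemize}
\item \textbf{Light part.} By the robustness property, $\Pr[Y_i \text{ light},\ \rext(Y_i)=0^m]\lesssim 2^{-m}$.
\item \textbf{Heavy part.} There are at most $2^k$ heavy points, and this is where the paired bits must be used. Under $\Db$, the bits $b$ together with heavy $y$'s would rarely be consistent. Instead I will charge the heavy mass differently: collisions. If $Y_i$ puts total mass $\alpha_i$ on heavy points, then the tuple $(Y_1,\dots,Y_t)$ has a large-probability atom structure that $\Db$ (uniform on the $y$-part) does not. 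Concretely, extend the test to also reject when many $Y_i$ land in a set $H$ of size $t2^{k}$, which has density $t2^{k-n}$ under uniform.
\end{itemize}
So the final test is: accept iff consistency holds, at least $\tfrac12 t2^{-m}$ ones appear, and fewer than $\tfrac14 t2^{-m}$ indices fall in the heavy sets $H_i$ of $\Xb$. This test depends on $\Xb$, which is fine for a TV bound.

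Under $\Db$, the expected number of indices in the heavy sets is $t\cdot 2^{k-n}\ll t2^{-m}$. Under $\Xb$, the expected number of consistent light ones is $O(t2^{-m})\cdot$(robustness slack). Choosing the robustness guarantee as $\eps 2^{-m}$ extra with $\eps\to0$, Markov's inequality shows that reaching $\tfrac14 t 2^{-m}$ light consistent ones has probability $O(1)$, not $o(1)$. I therefore expect the main obstacle to be getting $o(1)$ rather than constant distance.

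The fix is to exploit independence across $i$ in $\Db$ against the extractor property. The extractor property says that conditioned on high entropy, $\rext(Y_i)$ is near uniform. The consistency constraint, however, lets $\Xb$ set $b_i$ to mimic $f(Y_i)$ only if $b_i$ can be computed from the randomness, which the low-complexity closure permits. The real argument should therefore be a concentration statement: the number of $i$ with $f(Y_i)=1$ under $\Xb$ concentrates at $\approx t2^{-m}(1\pm o(1))$ for light parts. I would obtain this by applying the extractor to the conditional sources (conditioning on a small fixing of the input keeps the sources in $\Ycal$), which yields variance bounds. Meanwhile $\Db$'s count is a Binomial with its own fluctuations. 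I would aim to show that the $\Xb$ count is either too concentrated or too small, and that $\Db$'s anticoncentration separates the two.

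For the "moreover" part with one output bit, I would use the simpler test $\indicator(b=\tilde\rext(y))$. It has acceptance $1$ on $\Db$, and on $\Xb$ acceptance at most $\tfrac12+\Pr[\text{heavy}]+o(1)$ by the extractor and robustness properties. Combining this with a heavy-set test gives a constant gap, yielding $\tfrac14-o(1)$.
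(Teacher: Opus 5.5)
There is a genuine gap. In your analysis of $\Xb$, the consistency clause only turns the number of ones into $\#\{i: f(Y_i)=1\}$, which is a statistic of the $Y$-part alone. Everything you then bound concerns the marginals $Y_i$: the light/heavy split, robustness applied to $Y_i$, and the extractor applied to $Y_i$ under input fixings. The flags $b_i$ never enter, and so neither does the sampler's inability to correlate $b_i$ with $f(Y_i)$.

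Marginal information of this kind cannot separate $\Xb$ from $\Db$. The class $\{(\Ub_1,\dots,\Ub_t,g(\Ub_1),\dots,g(\Ub_t)) : g\colon\bin^n\to\bin\}$ has every $Y_i$-marginal uniform, and uniform on a subcube after fixing input bits; such flat sources make any extractor robust by \Cref{clm:unif_sources}. Yet this class contains $\Db$ itself. Concretely:
\begin{itemize}
\item The $Y$-part of $\Xb$ can be exactly i.i.d.\ uniform, so $\#\{i: f(Y_i)=1\}$ can have exactly the binomial law it has under $\Db$. Your concentration-versus-anticoncentration plan therefore has nothing to separate.
\item Your Markov step compares an expectation of $t(2^{-m}+\delta)$ against a threshold of $\tfrac14 t2^{-m}$. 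It yields only the bound $4(1+\delta 2^m)>1$, so not even a constant gap follows.
\item In the ``moreover'' part, marginal properties of $Y$ say nothing about $\Pr[b=\tilde{\rext}(Y)]$, since $(\Ub^n,\tilde{\rext}(\Ub^n))$ again has uniform $Y$-marginal. The bound $\tfrac12+o(1)$ you assert would give distance $\tfrac12$, which the paper lists as open.
\end{itemize}

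The missing idea is to fold the flags into a \emph{single} source of $\Ycal$ by first-index selection. In \Cref{thm:main}, $\Ycal$ consists of the sources $\add(\Xb)$, which output $Y_i$ for the first $i$ with $b_i=1$, or a fresh uniform string if there is none. Your heuristic of conditioning on $b_i=1$ points this way, but done index by index it has two problems. It leaves you to control overlaps among the events $\{b_i=1\}$, and it multiplies point masses by up to $1/\Pr[b_i=1]$, which destroys lightness. Choosing the first index makes the events disjoint. Moreover, a point of mass at most $2^{-(k+\log t+1)}$ under every $Y_j$ has mass at most $2^{-k}$ under $\add(\Xb)$.

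The test is then: every block $y_i$ lies in $L:=\{x:\Pr[Y_j=x]\le 2^{-(k+\log t+1)}\text{ for all }j\}$, and $f(\add(y,b))=1$.
\begin{itemize}
\item Under $\Xb$, its probability is at most $\beta+2^{-(n-k)}$, where $\beta\approx 2^{-m}+\delta$ is the robustness bound applied once to $\add(\Xb)$.
\item Under $\Db$, the event $f(\add(\Db))=0$ forces $f=0$ on all $t$ blocks and on the fresh string. So the test passes with probability at least $1-(1-\alpha)^{t+1}-2t^3 2^{-(n-k)}$, where $\alpha\le\Pr[f(\Ub^n)=1]$.
\end{itemize}
No counting, concentration, or consistency check is needed. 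Taking $t\approx m2^m$ gives distance $1-O(2^{-m})$, and $t=1$ with $\alpha\approx\beta\approx\tfrac12$ gives $\tfrac14$.
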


\begin{remark}
    Robust extractors are the motivating object behind our approach, but actually a weaker one-sided notion in place of the typical extractor guarantee (\ref{itm:ext_property}) suffices for our purposes.
    In particular, \Cref{thm:main_informal} holds as long as $\rext$ on the uniform distribution assigns decent probability to $0^m$.
    The details can be found in \Cref{sec:main_pf}.
\end{remark}

We highlight that the ``moreover'' part of \Cref{thm:main_informal} can be viewed as a stronger type of lower bound than those for approximate computation.
If, for example, the functions $f, g\colon \bin^n \to \bin$ agree on a $(1-\delta)$-fraction of their inputs, then it is easy to generate the uniform distribution over $(\Ub^n, g(\Ub^n))$ to TV distance at most $\delta$ by simply computing $f$ on each random input.  

Using \Cref{thm:main_informal}, we can recover unconditional hardness results for sampling in essentially every model where they are currently known (albeit with weaker decay rates).
We are also able to provide the first explicit\footnote{One can essentially use a standard counting argument to show such a distribution exists; see \Cref{app:non-constructive}.} distribution which has distance $1-o(1)$ from the output of any low-degree $\Fbb_2$-polynomial source, making progress toward a similar result for $\ac[\oplus]$ circuits.\footnote{Recall these are $\ac$ circuits (discussed in \Cref{ssec:circuit}) with mod 2 gates.}
Previous arguments \cite{viola2014extractors, chattopadhyay2024extractors} only forbade distance $2^{-\Omega(n)}$.

\begin{theorem}[{Informal instantiations of \Cref{thm:main_informal}}]\label{thm:hard_dist_combined}
    Let $\Xcal$ be one of the following classes of sources over $\bin^N$:
    \begin{itemize}
        \item low-degree $\Fbb_2$-polynomial (see \Cref{thm:hard_dist_poly}), 
        \item local {\normalfont($\nc$)} (see \Cref{cor:hard_dist_nc0}), 
        \item circuit {\normalfont($\ac$)} (see \Cref{thm:hard_dist_ac0}), 
        \item communication (see \Cref{thm:hard_dist_communication}), 
        \item small-space (see \Cref{cor:hard_dist_small_space}), 
        \item Turing machine (see \Cref{cor:hard_dist_TM}).
    \end{itemize}
    Then every source $\bf{X}\in \Xcal$ satisfies
    \[
        \tvdist{\Xb - \Db} \ge 1 - o(1),
    \]
    where $\Db$ is defined by taking many independent copies of $(\Ub^n, \indicator(\rext(\Ub^n)=0^m))$ for an explicit robust extractor $\rext\colon\bin^n \to \bin^m$ (depending on the choice of $\Xcal$).
    
    Moreover, there exists an explicit robust extractor $\tilde{\rext}\colon \bin^{N-1} \to \bin$ (in a different parameter regime) such that every source $\bf{X}\in \Xcal$ satisfies
    \[
        \tvdist{\Xb - (\Ub^{N-1}, \tilde{\rext}(\Ub^{N-1}))} \ge \frac{1}{4} - o(1).
    \]
\end{theorem}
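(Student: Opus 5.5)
The plan is to derive each bullet of \Cref{thm:hard_dist_combined} from \Cref{thm:main_informal} (formally \Cref{thm:main}). For each model the work splits into two parts. First, identify the class $\Ycal$ that arises when a source $\Xb \in \Xcal$ over $\bin^{t(n+1)}$ is conditioned and projected in the way the main theorem requires. Second, exhibit an explicit robust extractor for $\Ycal$ in both parameter regimes: output length $m=\omega(1)$ for the $1-o(1)$ bound, and $m=1$ for the $\tfrac14-o(1)$ bound.

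The first step is a closure lemma for each class. Fixing a prefix of outputs of, or restricting to one block of, a degree-$d$ polynomial source, an $\nc$ source, an $\ac$ source, a small-space source, or a communication/Turing-machine source yields a source of the same type. It may also be a convex combination of such sources with essentially the same parameters. For the polynomial, local and $\ac$ cases I would phrase this in terms of the sampling functions themselves: fixing input bits preserves degree, locality and circuit size. For small-space sources it is the standard fact that conditioning on a prefix leaves a small-space source. For communication and Turing-machine sources it is the analogous fact that conditioning on the transcript leaves a product structure.

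The second step supplies the extractors, using the earlier extractor results already cited in the paper:
\begin{itemize}
\item for polynomial sources, the Chattopadhyay--Goodman--Gurumukhani extractor;
\item for local and $\ac$ sources, the Viola and De--Watson style extractors (for $\ac$, via the reduction of $\ac$ sources to local sources after a random restriction);
\item for small-space sources, the Kamp--Rao--Vadhan--Zuckerman extractors;
\item for communication and Turing-machine sources, two-source or independent-source extractors.
\end{itemize}

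The third step, which I expect to be the main obstacle, is upgrading each extractor to a robust one. The extractor property is given by the cited results. Robustness asks that $\Pr_{\xb\sim\Yb}[\Yb(\xb)\ \text{small} \wedge \rext(\xb)=0^m] \lesssim 2^{-m}$. My first attempt would be a generic reduction: decompose $\Yb$ into a mixture over level sets $L_k=\{x : \Yb(x)\approx 2^{-k}\}$. Conditioned on a low-probability level set $L_k$, $\Yb$ has min-entropy about $k$. The condition ``$\Yb(\xb)\approx 2^{-k}$'' is not itself low-complexity, though, so the conditioned distribution need not stay in $\Ycal$. For structured classes I would instead get around this using the known decomposition theorems: polynomial, local and small-space sources are convex combinations of affine, bit-fixing or independent-block sources with entropy close to the original. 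Low-probability points then come from components of high entropy, where the extractor error bound applies directly. The total mass of low-entropy components that could hit $0^m$ is controlled by taking extractor error $\ll 2^{-m}$.

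The remaining parameter bookkeeping consists of choosing $t,n,m$ so that the error terms in \Cref{thm:main} vanish, and checking explicitness.
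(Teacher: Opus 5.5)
Your proposal has two genuine gaps: the class $\Ycal$ is misidentified, and the robustness mechanism does not work as sketched.

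\textbf{The class $\Ycal$.} In \Cref{thm:main}, $\Ycal$ is not obtained by conditioning or projecting $\Xb$. It is the image $\add_{n,t}(\Xcal)$ of the randomized addressing map, which outputs the first block $\Xb_i$ whose flag bit is $1$, or fresh uniform bits if all flags are $0$. This selection is essential. The test event evaluates $\iso$ on $\add(\Xb)$, and that is what couples the isolator to the flag bits; a single projected block can be exactly uniform and then reveals nothing. So you need closure under $\add_{n,t}$, and the facts you list do not give it. Moreover, conditioning a polynomial or $\nc$ source on output values does not in general yield a source of the same type.

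This closure has a cost that drives all the parameters:
\begin{itemize}
\item $\add_{n,t}$ is a degree-$(t+1)$ polynomial, so for polynomial sources the isolator must handle degree $\Delta(t+1)$, which grows with $t$. The paper takes $d=\log n$ and $t\approx d/\Delta$, which is why the decay is only of the form $\frac{\Delta}{\log\log N}\log\frac{\log\log N}{\Delta}$.
\item For $\ac$ sources it costs depth $+2$ and size $+n(t+2)$.
\item For communication sources, Bob sends the selected index using $\ceilbra{\log(t+1)}$ extra bits.
\end{itemize}
Small-space sources emit the flag bits last and are not evidently closed under $\add_{n,t}$. The paper sidesteps this by passing to communication sources: a contiguous split of a space-$s$ source costs $s+1$ bits, and Viola's simulation handles Turing machines. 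It does not use KRVZ-type extractors.

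\textbf{Robustness, polynomial sources.} No decomposition of polynomial sources into affine sources of comparable entropy is known; this is exactly why CGG use input reduction plus brute force. You would also need such a decomposition at degree $\Delta(t+1)$. The paper instead modifies CGG's construction. A union bound over all degree-$d$ maps on $\ell=5k$ inputs finds a good $\indicator(h(x)=0^m)$ in a $t$-wise independent hash family (\Cref{lem:bdd_f2_tind_rext}). The bucketing of \Cref{clm:smoothing} combined with input reduction (\Cref{lem:input_redux}) then shows that points light for an arbitrary $f(\Ub^r)$ stay light, at threshold $2^{-(k-1)}$, for a reduced source on $O(k)$ inputs (\Cref{lem:f2_input_red}).

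\textbf{Robustness, local and $\ac$ sources.} ``Entropy close to the original'' is the wrong property, since a few heavy points can make the original min-entropy tiny. Extractor error also says nothing about low-entropy components. What is needed is Viola's polarization (\Cref{thm:entropy_polarization}): each of the $\ell\le 2^{n-n^{\Omega(1)}}$ components is either constant or has min-entropy $n^{0.9}$. High-entropy components are handled by the extractor. Constant components sitting on light points are handled by counting: there are at most $\ell$ such points, each of $\Xb$-mass at most $2^{-k'}$. Components of intermediate entropy would defeat your argument.

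\textbf{Robustness, communication sources.} The fix for your level-set obstacle is as follows. For independent $(\Xb,\Yb)$, the light set is covered by the product $L^1_{\Xb}\times L^1_{\Yb}$ of marginal level sets plus two pieces of mass at most $2^{n-k_2}$, and conditioning on a product event preserves independence. The isolator then transfers to mixtures of $2^c$ products. Every component of weight at least $2^{-\tau}$ sees each light point of the mixture as light at threshold $2^{-(k-\tau)}$, and the lighter components have total mass at most $2^{c-\tau}$.
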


We emphasize that bounds of this form were already known (with better quantitative behavior) for local and circuit \cite{lovett2011bounded, beck2012large}, communication \cite{AST+03, GW20, chattopadhyay2022space, yu2024sampling}, small-space \cite{chattopadhyay2022space}, and Turing-machine sources \cite{viola2012extractors, chattopadhyay2022space}.
Our novel contribution is obtaining these bounds in a \emph{unified} way, as well as providing strong bounds for low-degree $\Fbb_2$-polynomial sources.

\paragraph*{Concurrent Work.}
In independent and concurrent work, Khodabandeh and Shinkar also investigate the complexity of sampling with low-degree $\Fbb_2$-polynomials \cite{MKS26}.
Using very different techniques from our own, they prove that the output of any such polynomial sampler has TV distance $1-o(1)$ from the $(1/3)$-biased product distribution.
As in the present work, the $o(1)$ term has suboptimal decay, but in the case of degrees 1, 2, and 3, they prove it can be taken to be $\exp(-\Omega(N))$, $\exp(-\Omega(\log(N) / \log\log(N)))$, and $\exp(-\Omega(\sqrt{\log\log(N)})$, respectively, which is stronger than we are able to obtain via our robust extractor framework.
(Note, however, that an extractor-based approach can match their bound for degree-1 polynomials -- see \Cref{sec:open_problems}.)

\paragraph*{Paper Organization.}
We state and prove \Cref{thm:main}, the precise version of \Cref{thm:main_informal}, in \Cref{sec:main_pf}, along with additional background and context.
We then review some preliminary material in \Cref{sec:prelim} before instantiating \Cref{thm:main} in \Cref{sec:hard_dist_instantiation} to obtain \Cref{thm:hard_dist_combined}.
More specifically, we address polynomial sources in \Cref{ssec:poly}; local and circuit sources in \Cref{ssec:circuit}; and communication, small-space, and Turing-machine sources in \Cref{ssec:communication}.
We conclude with some open problems in \Cref{sec:open_problems}.
Supplementary material on non-explicit constructions can be found in \Cref{app:non-constructive}.

\section{Background and Main Result}\label{sec:main_pf}

In this section, we state and prove our main result, \Cref{thm:main}.
Before doing so, it will be instructive to review existing arguments and their limitations to develop intuition for our approach.
For concreteness, we will focus our discussion on distributions generated by low-degree $\Fbb_2$-polynomials, but essentially all of the analysis holds more generally.
The meaning of any unfamiliar terminology or notation in this section can be found in \Cref{sec:prelim}.

Let $P \colon \Fbb_2^r \to \Fbb_2^n$ be a polynomial map defined by $n$ arbitrary degree-$d$ polynomials $\{p_i \colon \Fbb_2^r \to \Fbb_2\}_{i=1}^n$ acting on the same $r$ input bits, where we view $r$ as some arbitrarily large integer.
Our goal is to construct an explicit distribution $\Db$ over $\bin^n$ such that regardless of how the $p_i$'s are defined, the output distribution of $P$ on $r$ uniformly random bits, denoted $P(\Ub^r)$, has total variation distance at least $1 - o(1)$ from $\Db$.
(Here, we identify $\Fbb_2^n$ with $\bin^n$.)

\subsection{A Nonzero Lower Bound}\label{ssec:nonzero_LB}

We begin with the more modest goal of showing that $P$ cannot \emph{exactly} generate some explicit distribution $\Db$.
Here, we may invoke an argument of Viola \cite{viola2014extractors, viola2016quadratic} based on extractors, whose formal definition we now recall.

\begin{definition}[Extractor]
\label{def:extractor}
    A function $\ext\colon \bin^n \to \bin$ is an \emph{$(\eps,k)$-extractor} for a class $\Xcal$ of distributions over $\bin^n$ if for every source $\Xb \in \Xcal$ with min-entropy at least $k$, we have
    \[
        \Pr[\ext(\Xb) = 1] = \frac{1}{2} \pm \eps.
    \]
\end{definition}

We will take the hard distribution to be uniform over input-output pairs of an explicit $(\eps,k)$-extractor $\ext\colon \bin^{n-1} \to \bin$ for polynomial sources of degree $2d$, where $\eps$ is some small constant and $k = n - O(1)$.
That is, $\Db = (\Ub^{n-1}, \ext(\Ub^{n-1}))$.
Such extractors which are computable in time $\poly(n)$ are known \cite{chattopadhyay2024extractors}.
For clarity, we express $P(\Ub^r)$ similarly as $(Q(\Ub^r), q(\Ub^r))$, where $Q\colon \bin^r \to \bin^{n-1}$ and $q\colon \bin^r\to\bin$ are degree-$d$ polynomials acting on the same set of input bits, corresponding to the first $n-1$ output bits and last output bit of $P$, respectively.

Suppose for contradiction that $P(\Ub^r)$ exactly generates $\Db$.
Viola's argument considers the random variable $\Mb$ over $\bin^{n-1}$ defined by sampling $\ub \sim \Ub^r$ and outputting $Q(\ub)$ if $q(\ub) = 1$, and otherwise $n-1$ uniformly random bits independent of $\Ub^r$, which we denote by $\tilde{\Ub}^{n-1}$.
Written suggestively similar to a polynomial, we have
\begin{equation}\label{eq:M_def1}
    \Mb = q(\Ub^r)Q(\Ub^r) + (1-q(\Ub^r)) \tilde{\Ub}^{n-1},
\end{equation}
or equivalently by our assumption,
\begin{equation}\label{eq:M_def2}
    \Mb = \ext(\Ub^{n-1})\Ub^{n-1} + (1 - \ext(\Ub^{n-1})) \tilde{\Ub}^{n-1}.
\end{equation}

Now consider the effect of applying $\ext$ to $\Mb$.
By \Cref{eq:M_def1}, $\Mb$ can be generated by a degree-$(2d)$ $\Fbb_2$-polynomial source with $n - O(1)$ bits of min-entropy (recall $Q(\Ub^r)$ is assumed to be uniform), so $\Pr[\ext(\Mb)=1] \le  \frac{1}{2} + \eps$.
By contrast, whenever the sample $\ub\sim\Ub^{n-1}$ satisfies $\ext(\ub) = 1$, we must have $\ext(\Mb)=1$ by \Cref{eq:M_def2}, so
\begin{align*}
    \Pr[\ext(\Mb)=1] &= \Pr\sbra{\ext(\Mb)=1 \mid \ext(\Ub^{n-1}) = 1}\cdot \Pr[\ext(\Ub^{n-1}) = 1] \\
    & \qquad + \Pr\sbra{\ext(\Mb)=1 \mid \ext(\Ub^{n-1}) = 0}\cdot \Pr[\ext(\Ub^{n-1}) = 0] \\
    &= 1 \cdot \Pr[\ext(\Ub^{n-1}) = 1]  + \Pr[\ext(\tilde{\Ub}^{n-1}) = 1] \cdot \Pr[\ext(\Ub^{n-1}) = 0] \\
    &\approx \frac{1}{2} + \frac{1}{4} \gg \frac{1}{2} + \eps,
\end{align*}
a contradiction.
That is, no low-degree polynomial can exactly sample $\Db$.
In fact, a more careful analysis forbids $P$ from sampling $\Db$ to distance better than $2^{-\Omega(k)}$ (which is $2^{-\Omega(n)}$ for known explicit constructions of such extractors \cite{chattopadhyay2024extractors}).

\subsection{A Constant Lower Bound}

It is initially unclear how to strengthen the previous argument to obtain even some constant distance lower bound.
The primary issue is that even if $\Db$ has large min-entropy, it might be reasonably close to some distribution $P(\Ub^n)$ with small min-entropy, so we do not have any guarantees on the output of the extractor on $\Mb$.

\paragraph{Min-Entropy Polarization.}
One approach that was successful in the analysis of distributions generated by low-depth circuits is \emph{min-entropy polarization} \cite{viola2020sampling}.
Using random restrictions and hypercontractivity, Viola proved that any distribution $C(\Ub^r)$ produced by a small circuit $C\colon \bin^r \to \bin^n$ could be approximated by a not-too-large collection of restrictions of $C$ whose output was either constant or had large min-entropy, in which case an extractor $\ext$ could be meaningfully applied (see \Cref{ssec:circuit} for additional details).
The event witnessing the total variation distance can then be defined as the union of a small set of outputs, corresponding to restrictions under which the circuit is constant, and pairs $(x,b)$ where $\ext(x) \ne b$.

Unfortunately, one cannot always polarize the output distribution of an arbitrary sampler in this fashion; consider the following example from \cite{viola2020sampling}.

\begin{example}
    Let $\Gb$ be the distribution over $\bin^n$ obtained by sampling a random string $\ub \sim \Ub^n$, and outputting $\ub$ if $|\ub| \bmod{2} = 1$ and otherwise outputting $0^n$.
    This sampler cannot have its min-entropy polarized via restrictions for the following reason. For any restriction which leaves some variable free, the min-entropy remains $1$. This means that every restriction in a polarizing collection must fix every variable, but then $\Omega(2^n)$ many such restrictions are needed to approximate the distribution.
\end{example}

Note, however, that this sampling procedure \emph{can} be implemented by the degree-2 $\Fbb_2$-polynomial map
\[
    P(y) = \pbra{\sum_{i=1}^n y_i}\cdot (y_1, \dots, y_n) + \pbra{1-\sum_{i=1}^n y_i}\cdot 0^n
\]
to express $\Gb$ as $P(\Ub^n)$.
Hence, we are not able to apply this framework for our purposes.

\paragraph{Discarding Heavy Points.}
An alternative approach, still somewhat in the spirit of \cite{viola2020sampling}, was taken by Chattopadhyay, Goodman, and Zuckerman \cite{chattopadhyay2022space} when considering communication sources (see \Cref{ssec:communication} for a formal definition).
They also defined their witness event to be the union of a small ``bad'' set and input-output pairs which contradict a particular extractor's definition.
Unlike Viola, however, they define their bad set to be strings which are assigned substantial probability by the sampler (excluding the last bit).
Since there cannot be many ``heavy'' strings, this set is indeed small.

The downside to this approach is that the extractor one requires does not necessarily correspond to an extractor for the class of sources being considered.
In particular, the argument requires an extractor that works on the source after conditioning on membership in some set.
For communication sources, one can reduce to the case of being able to apply a two-source extractor, but it is unclear how to generically perform a similar reduction to known extractors in the case of other sources, such as those generated by low-degree polynomials.

\paragraph{Robust Extractors.}
To overcome the previous two obstacles, we ask for more from the \emph{extractor} rather than from the \emph{source}.
Recall that the issue with trying to strengthen the argument from \Cref{ssec:nonzero_LB} is that the generated distribution $P(\Ub^r)=(Q(\Ub^r), q(\Ub^r))$ might assign too much mass to a small set of ``bad'' points, so the auxiliary distribution of $\Mb$ does not have enough min-entropy to apply an extractor.
Our solution is to simply ask for an extractor whose soundness holds on the light (i.e., low probability) points, rather than the entire domain.
In other words, we seek an extractor which is robust to a small number of points violating the min-entropy constraint.

To be a bit more precise, suppose that the $(\eps,k)$-extractor $\ext\colon \bin^{n-1} \to \bin$ we have been considering had the property that every degree-$(2d)$ $\Fbb_2$-polynomial source $\Xb$ satisfies
\begin{equation}\label{eq:og_robust_ext_def_overview}
    \Pr[\Xb \in L \text{ and } \ext(\Xb) = 1] \le \frac{1}{2} + \eps,
\end{equation}
where $L = \{x \in \bin^{n-1} : \Pr[\Xb = x] \leq 2^{-k}\}$ is the set of light points.\footnote{There are other ways to capture a similar notion of robustness, but this definition seems to be particularly advantageous; see the discussion in \Cref{ssec:ext_and_rext}.}
(Recall that a ``traditional'' extractor is only guaranteed to satisfy \Cref{eq:og_robust_ext_def_overview} when $L = \bin^{n-1}$.)
A priori, it is not obvious that these objects even exist, but it turns out that a straightforward modification to the extractor construction of \cite{chattopadhyay2024extractors} yields such an object.

In general, we believe that essentially any natural class of distributions should have extractors with a similar guarantee, and much of the present work is devoted to showing that existing extractor constructions (or simple modifications of them) are robust.
In fact, one can interpret the aforementioned min-entropy polarization results in \cite{viola2020sampling} as saying that any extractor for circuit sources enjoys the robustness property (see \Cref{ssec:circuit}).

Now that we have a \emph{robust extractor}, we can run a version of our earlier argument.
Once again, let $\Mb$ be the random variable over $\bin^{n-1}$ defined by sampling $\ub \sim \Ub^r$ and outputting $Q(\ub)$ if $q(\ub) = 1$, and otherwise $n-1$ random bits using fresh randomness.
We also define $\tilde{\Mb}$ similarly, but with $\Db = (\Ub^{n-1}, \ext(\Ub^{n-1}))$ in place of $P(\Ub^r)$.
By \Cref{eq:og_robust_ext_def_overview}, we morally have that $\ext(\Mb)$ is near balanced on its light points, whereas by construction, $\ext(\tilde{\Mb})$ is likely to output 1 on such points.
Working out the details (see \Cref{ssec:main_result}), one finds that $P(\Ub^r)$ must be $\Omega(1)$-far from $\Db$, as desired.

\subsection{A $1-o(1)$ Lower Bound}

It remains to amplify the total variation distance lower bound from constant to $1-o(1)$.
Toward this end, we will take many independent copies of the distribution above, and prove that generating this product of distributions is much harder than generating any individual distribution.
We note that such \emph{direct product theorems} have been examined previously in the sampling context \cite{chattopadhyay2022space, GKM+26}.

Proceeding more formally, we redefine the distribution $\Db$ to be
\[
    \Db = (\Ub_1, \Ub_2, \dots, \Ub_t, \ext(\Ub_1), \ext(\Ub_2), \dots, \ext(\Ub_t)),
\]
where the $\Ub_i$'s are independent copies of $\Ub^{n-1}$ and $t \ge 1$ is an integer to be chosen later.
Observe that now $\Db$ is over $\bin^N$ for $N \coloneqq tn$, so we must redefine our polynomial map $P$ to also be on $N$ output bits.
We emulate much of our previous analysis with the auxiliary random variable $\tilde{\Mb}$ generalized to output $\ub_i$ if $\ext(\ub_i) = 1$ for any $i \in \cbra{1,\dots, t}$ (and $\Mb$ similarly generalized with respect to $P(\Ub^r)$).
This greatly increases the probability that $\ext(\tilde{\Mb}) = 1$, which further improves our distance bound.

Unfortunately, the approach as written does not provide distance approaching 1.
The problem can essentially be traced back to a loss of $\Pr[\Xb \in L \text{ and } \ext(\Xb) = 1]$ coming from our test event.
Since the set of low probability points $L$ might be the entire domain, we cannot upper bound this quantity by anything appreciably better than 1/2 without violating the extractor guarantee.
In order to make further progress, we extend our extractor to have multiple output bits.

This multi-output extractor $\ext\colon \bin^{n-1} \to \bin^m$ is defined similarly to the single-output version (\Cref{def:extractor}), only now with the pseudorandomness condition being a bound on the total variation distance $\tvdist{\ext(\Xb) - \Ub^m} \le \eps$.
One can again obtain a robust version (with $m = \Omega(\log\log n)$) by modifying the construction of \cite{chattopadhyay2024extractors}; we defer the formal definition to \Cref{ssec:ext_and_rext}.

The upshot is that now we can replace the bottleneck of the previous argument with an analysis of $\Pr[\Xb \in L \text{ and } \ext(\Xb) = 0^m]$.
This, in turn, can be more tightly bounded by $2^{-m}+\eps$.
In our construction, $\eps$ will tend to 0 as $m$ grows,\footnote{We had previously defined $\eps$ to be a small constant, but our construction of robust extractors allows for this smaller setting.}
so working out the calculations, we obtain a bound roughly of the form 
\[
    1 - \exp(-2^{-m}\cdot t) - 2^{-\Omega(n)}\cdot \poly(t)
\]
for sufficiently large $n$.
Setting $t$ to be slightly larger than $2^{m}$, we can finally obtain our desired $1 - o(1)$ bound.
The full details can be found in the subsequent subsection.

\paragraph{Isolators.}

Before formalizing our discussion into a theorem, it is worth reflecting on exactly what properties of the robust extractor we used.
(These properties may be further illuminated by consulting the proof of \Cref{thm:main}.)
In analyzing our target distribution $\Db$, all that we truly needed is for some output $y \in \bin^m$ to be assigned substantial mass by $\ext(\Ub^{n-1})$.
By contrast, our analysis of the generated distribution $P(\Ub^r)$ only required that the extractor did not map too many of the light points to the same output $y$.
In other words, our argument does not require the full power of robust extractors.
We distill the properties we need into the following weaker object, which we call an \emph{isolator}, since it in some sense controls the isolated light points.

\begin{restatable}[Isolator]{definition}{defisolator}\label{def:isolator}
    A function $\iso\colon \bin^n \to \bin$ is an \emph{$(\alpha, \beta, k)$-isolator} for a class $\Xcal$ of distributions over $\bin^n$ if
    \begin{enumerate}
        \item $\Pr[\iso(\Ub^n) = 1] \ge \alpha$, and

        \item Every source $\Xb \in \Xcal$ satisfies $\Pr_{x \sim \Xb}[x \in L \text{ and } \iso(x) = 1] \le \beta$, where $L = \{x \in \bin^n : \Pr[\Xb = x] \leq 2^{-k}\}$.
    \end{enumerate}
\end{restatable}

It is perhaps worth highlighting that although all our constructions of isolators essentially work by ``robustifying'' existing extractor constructions, this is not strictly necessary to satisfy \Cref{def:isolator}.
It is plausible one could explicitly construct an isolator for $\ac[\oplus]$ sources, for example, without first explicitly constructing seemingly elusive extractors for that class.

\subsection{Main Result}\label{ssec:main_result}

We conclude \Cref{sec:main_pf} by stating our main result, \Cref{thm:main}, and its proof.
Below, the following notation will be convenient for generalizing the random variable $\Mb$ considered above.

\begin{definition}[Randomized Addressing]
    For positive integers $t$ and $n$, we define the randomized function $\add\colon (\bin^n)^t \times \bin^t \to \bin^n$ as
    \[
        \add(A_1, A_2, \dots, A_t, b_1, b_2, \dots, b_t) = 
        \begin{cases}
            A_1 & \text{ if } b_1 = 1, \\
            A_2 & \text{ if } b_1 = 0, b_2 = 1, \\
            A_3 & \text{ if } b_1 = b_2 = 0, b_3 = 1, \\
            &\vdots \\
            A_t & \text{ if } b_1 = b_2 = \dots =b_{t-1} = 0, b_t = 1, \\
            \Ub^n & \text{ if } b_1 = b_2 = \cdots = b_t = 0.
        \end{cases}
    \]
\end{definition}
Though the parameters $n$ and $t$ will mostly be clear from context, we will sometimes write $\add_{n, t}$ for clarity.

We can now finally present our main result.

\begin{theorem}\label{thm:main}
    Let $\alpha, \beta \in [0,1]$ and $t,k,n$ be positive integers where $k \le n-1$.
    Let $\Xcal$ be a class of distributions over $\bin^{t(n+1)}$, and let $\Ycal$ be the class of distributions over $\bin^n$ of the form $\add_{n, t}(\Xb)$ for some $\Xb \in \Xcal$.
    Suppose $\iso\colon \bin^n \to \bin$ is an $(\alpha, \beta, k)$-isolator for $\Ycal$, and define the distribution
    \[
        \Db = (\Ub_1, \Ub_2, \dots, \Ub_t, \iso(\Ub_1), \iso(\Ub_2), \dots, \iso(\Ub_t)),
    \]
    where $\Ub_1, \dots, \Ub_t$ are independent copies of $\Ub^n$.
    Then every source $\bf{X}\in \Xcal$ satisfies
    \[
        \tvdist{\Xb - \Db} \ge 1 - (1-\alpha)^{t+1} - 2^{-(n-k)}\pbra{2t^3 + 1} - \beta.
    \]
    In particular, $\tvdist{\Xb - (\Ub^n, \iso(\Ub^n))} \ge 2\alpha - \alpha^2 - 2^{-(n-k-2)} - \beta$.
\end{theorem}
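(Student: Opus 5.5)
The plan is to exhibit a single (randomized) distinguishing test whose acceptance probability is large under $\Db$ and at most roughly $\beta$ under $\Xb$. Since a randomized test is an average of deterministic tests, the difference of acceptance probabilities lower bounds $\tvdist{\Xb - \Db}$. Fix $\Xb \in \Xcal$, let $\Yb = \add_{n,t}(\Xb) \in \Ycal$, and let $L = \{y : \Pr[\Yb = y] \le 2^{-k}\}$ and $H = \bin^n \setminus L$. Since $\Yb$ is a distribution, $|H| \le 2^k$. The test $T$ takes $z = (a_1,\dots,a_t,b_1,\dots,b_t) \in \bin^{t(n+1)}$ and draws fresh $\ub \sim \Ub^n$. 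It forms $y = \add(z)$, using $\ub$ in the all-zero case, and accepts iff $y \notin H$ and $\iso(y) = 1$.

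\emph{Under $\Xb$.} The test's $y$ is distributed exactly as $\Yb$. Hence
\[
\Pr[T(\Xb)\text{ accepts}] = \Pr_{y\sim\Yb}[y \in L \text{ and } \iso(y)=1] \le \beta
\]
by the second isolator property applied to $\Yb \in \Ycal$. This is the only place the isolator's robustness is used, and it is why the class $\Ycal$ must consist of the $\add$-images of $\Xcal$.

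\emph{Under $\Db$.} Here $b_i = \iso(a_i)$ with the $a_i$ independent and uniform. If some $b_i = 1$, then the selected $y = a_i$ has $\iso(y) = 1$ automatically. Otherwise $y = \ub$ is fresh uniform, and $\iso(\ub) = 1$ with probability at least $\alpha$. So
\[
\Pr[\iso(y)=1] \ge 1 - (1-\alpha)^{t}\cdot(1-\alpha) = 1-(1-\alpha)^{t+1}.
\]
It remains to bound $\Pr[y \in H]$. For any fixed $h$, the event $y = h$ implies $a_i = h$ for some $i$ or $\ub = h$, so $\Pr[y=h] \le (t+1)2^{-n}$. A union bound over $H$ gives $\Pr[y\in H] \le (t+1)2^{-(n-k)}$. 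Therefore
\[
\Pr[T(\Db)\text{ accepts}] \ge 1-(1-\alpha)^{t+1} - (t+1)2^{-(n-k)}.
\]

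Subtracting the two acceptance probabilities gives
\[
\tvdist{\Xb-\Db} \ge 1-(1-\alpha)^{t+1} - (t+1)2^{-(n-k)} - \beta.
\]
Since $t+1 \le 2t^3+1$, this implies the claimed bound. For $t=1$ it gives $2\alpha-\alpha^2 - 2\cdot 2^{-(n-k)} - \beta \ge 2\alpha-\alpha^2-2^{-(n-k-2)}-\beta$, which is the ``in particular'' statement.

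The step needing the most care is justifying the randomized test. The fresh $\ub$ must be coupled identically in both experiments, so that under $\Xb$ the variable $y$ is literally distributed as $\add_{n,t}(\Xb)$. The definition of $H$ must then refer to that same distribution, and $H$ does not depend on $\Db$, so the test is well defined. Finally, one should note that for each fixed value of $\ub$ the test is deterministic. The TV bound then follows by averaging over $\ub$: the acceptance gap of the randomized test is the average of the gaps of these deterministic tests, and each such gap is at most $\tvdist{\Xb-\Db}$.
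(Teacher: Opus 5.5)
Your proof is correct, and it differs from the paper's in where the lightness condition sits. This makes your argument shorter and quantitatively sharper.

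The paper puts the lightness condition on the individual blocks. It defines $L_{\Xb_i}$ with the lowered threshold $2^{-(\log(t)+k+1)}$ and proves the containment $\bigcap_i L_{\Xb_i}\subseteq L_\Yb$ by a union bound; this is where $k\le n-1$ is used. Its test is ``every block $\Xb_i$ lies in $\bigcap_j L_{\Xb_j}$ and $\iso(\add(\cdot))=1$.'' This costs an extra $2^{-(n-k)}$ on the $\Xb$ side, because the fresh $\Ub^n$ may fall into the heavy set of $\Yb$. On the $\Db$ side it costs $t^2\cdot 2t\cdot 2^{-(n-k)}=2t^3\,2^{-(n-k)}$, because all $t$ uniform blocks must avoid all $t$ enlarged heavy sets. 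Together these give the factor $2t^3+1$.

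You instead test the addressed output $y$ directly against the heavy set $H$ of $\Yb$. Under $\Xb$, your acceptance event is exactly the isolator event for $\Yb$, so there is no error term. Under $\Db$, the pointwise bound $\Pr[y=h]\le (t+1)2^{-n}$ handles $H$ in one step. As a result you need neither the containment step nor the hypothesis $k\le n-1$, and the factor $2t^3+1$ improves to $t+1$.

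Both tests are randomized: the paper's event $\iso(\add(\Xb))=1$ also uses the fresh $\Ub^n$. Your averaging justification covers what the paper leaves implicit. The paper's blockwise formulation follows the ``discard the sampler's heavy points'' intuition but gives nothing extra for this statement. In every instantiation $t\ll 2^{n-k}$, so the sharper factor changes downstream results only up to constants.
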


\begin{remark}
    Depending on the sampling model and particular goal, the precise ordering of the coordinates in a target distribution may affect whether or not that distribution can be sampled (e.g., \cite{viola2012extractors}).
    In the case of \Cref{thm:main}, the proof goes through for any permutation of the coordinates of $\add$ and $\Db$, although the order may affect whether $\add(\Xb)$ lies in a particular class.
\end{remark}

\begin{proof}[Proof of \Cref{thm:main}]
    Let $\Xb$ be an arbitrary source in $\Xcal$.
    Viewing $\Xb$ as $(\Xb_1, \Xb_2, \dots, \Xb_t, \xb_1,\xb_2,\dots, \xb_t)$ with the $\Xb_i$'s and $\xb_j$'s over $\bin^n$ and $\bin$, respectively, we may apply $\add_{n, t}$ to $\Xb$ to obtain the source $\Yb = \add(\Xb) \in \Ycal$.
    Next, define the sets of light points in $\Xb_i$ for $i = 1,2,\dots, t$ and $\Yb$ by
    \[
        L_{\Xb_i} = \{x \in \bin^{n} : \Pr[\Xb_i = x] \leq 2^{-(\log(t)+k+1)}\} \quad\text{and}\quad L_\Yb = \{y \in \bin^{n} : \Pr[\Yb = y] \leq 2^{-k}\}.
    \]
    Observe that $\bigcap_i L_{\Xb_i} \subseteq L_\Yb$, since any string $z \in \bigcap_i L_{\Xb_i}$ satisfies
    \begin{align*}
        \Pr\sbra{\Yb = z} = \Pr_\Xb\sbra{\add(\Xb) = z} &\le \Pr\sbra{\Ub^n = z} + \sum_{i=1}^t \Pr\sbra{\Xb_i = z} \tag{by union bound} \\
        &\le 2^{-n} + t\cdot 2^{-(\log(t) + k + 1)} \le 2^{-k}.
    \end{align*}
    Thus, the isolator property of $\iso$ guarantees that
    \begin{align}\label{eq:thm:main_extractor_property}
        \Ecal(\Xb) \coloneqq & \Pr_\Xb \sbra{\bigcap_i \pbra{\Xb_i \in \bigcap_j L_{\Xb_j}} \text{ and } \iso(\add(\Xb)) = 1} \notag \\
        \le & \: \Pr_\Xb \sbra{\add(\Xb) \in L_\Yb \text{ and } \iso(\add(\Xb)) = 1} + \Pr_\Xb \sbra{\add(\Xb) \not\in L_\Yb \mid \bigcap_i \pbra{\Xb_i \in \bigcap_j L_{\Xb_j}}} \notag \\
        \le & \:\beta + \Pr\sbra{\Ub^n \not\in L_\Yb} \le \beta + 2^{-(n-k)},
    \end{align}
    where the final inequality follows from the fact that $\Yb$ can assign $2^{-k}$ probability mass to at most $2^k$ points.
    
    We will choose $\Ecal$ to be the test witnessing the claimed TVD between $\Xb$ and $\Db$, although it will be slightly more convenient in this part of the analysis to consider the complement event.
    In a similar fashion to $\Xb$, we view $\Db$ as $(\Db_1, \Db_2, \dots, \Db_t, \db_1, \db_2, \dots, \db_t$), where recall each $\Db_i$ is an independent copy of the uniform distribution $\Ub^n$.
    By the union bound, we have
    \begin{align}\label{eq:thm:main_union_bound}
        \Pr_{\Db} \sbra{\bigcup_i\pbra{\Db_i \not\in \bigcap_j L_{\Xb_j}} \text{ or } \iso(\add(\Db)) = 0} &\le \sum_i \Pr_\Db \sbra{\Db_i \not\in \bigcap_j L_{\Xb_j}} + \Pr_\Db[\iso(\add(\Db)) = 0] \notag \\
        &\le \sum_{i,j} \Pr_\Db \sbra{\Db_i \not\in L_{\Xb_j}} + \Pr_\Db[\iso(\add(\Db)) = 0].
    \end{align}
    Note that the first term is at most $t^2\cdot 2^{-(n-(\log(t)+k+1))}$, since no $\Xb_j$ can assign $2^{-(\log(t)+k+1)}$ probability mass to more than $2^{\log(t)+k+1}$ points. 
    To address the second term, let us consider the effect of applying $\add$ to $\Db$. Whenever any $\Db_i$ satisfies $\iso(\Db_i) = 1$, we have $\iso(\add(\Db)) = \iso(\Db_i) = 1$.
    Otherwise, $\iso(\Db_i) = 0$ for all $i$, and we have $\iso(\add(\Db)) = \iso(\Ub^n)$.
    Thus,
    \begin{align*}
        \Pr[\iso(\add(\Db)) = 0] &= \Pr_\Db \sbra{\iso(\Ub^n) = 0 \text{ and } \bigcap_i \pbra{\iso(\Db_i) = 0}} \\
        &= \Pr[\iso(\Ub^n) = 0]^{t+1} \le (1-\alpha)^{t+1}.
    \end{align*}
    Plugging these two bounds back into \Cref{eq:thm:main_union_bound} yields
    \[
        \Pr_{\Db} \sbra{\bigcap_i\pbra{\Db_i \in \bigcap_j L_{\Xb_j}} \text{ and } \iso(\add(\Db)) = 1} \ge 1 - (1-\alpha)^{t+1} - t^2\cdot 2^{-(n-(\log(t)+k+1))}.
    \]
    Recalling the upper bound from \Cref{eq:thm:main_extractor_property}, we conclude
    \[
        \tvdist{\Xb - \Db} \ge 1 - (1-\alpha)^{t+1} - 2^{-(n-k)}\pbra{2t^3 + 1} - \beta. \qedhere
    \]
\end{proof}

\section{Preliminaries}\label{sec:prelim}

We now briefly review some notation and formal definitions used throughout the work.

\subsection{The Basics}

For a positive integer $n$, we use $[n]$ to denote the set $\cbra{1,2,\dots, n}$.
For a binary string $x$, we use $|x|$ to denote the Hamming weight of $x$.
All logarithms given in the paper are base 2.
For two real numbers $a$ and $b$, we write $a \pm b$ to denote a value in the range $[a-b, a+b]$.
The indicator function is denoted by $\indicator(\cdot)$.
The notation $\binom{n}{\le k}$ is shorthand for $\sum_{i=0}^k \binom{n}{i}$.
We use $\Fbb_2$ to denote the finite field of two elements; we often identify it with $\bin$.
The concatenation of two strings $x$ and $y$ is denoted $x \circ y$.

\paragraph{Asymptotics.}
We use the standard $\Omega(\cdot), O(\cdot), \Theta(\cdot)$ asymptotic notation to hide universal positive constants, although we will sometimes use $\gg$ and $\ll$ in more informal contexts.
Occasionally, we will use subscripts to indicate an unspecified dependence on a particular parameter (e.g., $\Omega_d(n)$).
Additionally, we write $o_t(1)$ to denote a positive quantity tending to 0 as $t$ tends to infinity; we often omit the subscript when $t$ is clear from context.
The shorthand $\poly(n)$ corresponds to a polynomial in $n$ of some fixed, but unspecified, degree.

\paragraph{Probability.}
We endeavor to use bold capital letters to denote probability distributions, and use bold lowercase letters to denote randomly drawn samples.
That is, $\xb \sim \Xb$ denotes a sample $\xb$ drawn from the distribution $\Xb$.
Oftentimes, we will refer to a distribution as a \emph{source} if it is being fed into an extractor-like object.
We reserve $\Ub$ for the uniform distribution over $\bin$.
We use calligraphic letters, such as $\Xcal$, for classes of distributions.
For an event $\Ecal$, we define $\Xb(\Ecal)$ to be the probability mass assigned to $\Ecal$ by $\Xb$.
For a function $f$, we use $f(\Xb)$ to denote the output distribution of $f(\xb)$ on randomly drawn $\xb\sim \Xb$.
The \emph{min-entropy} of a distribution $\Xb$, denoted $H_\infty(\Xb)$, is given by $-\log \max_{x \in \supp{\Xb}} \Pr[\Xb=x]$, where the support $\supp{\Xb} = \cbra{x : \Pr[\Xb = x] > 0}$.

Given a distribution $\Xb$ and positive integer $t$, we use $\Xb^t$ to denote the $t$-fold product distribution $\Xb \times \cdots \times \Xb$.
If $s$ is a finite set, we write $\Xb^s$ to emphasize that the coordinates of $\Xb^{|s|}$ are indexed by $s$.
We refer to $\Xb$ as a mixture if it can be written as a convex combination of other distributions.
That is, there exists $c_1, \dots, c_k \in [0,1]$ and distributions $\Xb_1, \dots, \Xb_k$ such that $\Xb(\Ecal) = \sum_{i=1}^k c_i \cdot \Xb_i(\Ecal)$ for every event $\Ecal$.
Occasionally, we write this more concisely as $\Xb = \sum_{i=1}^k c_i \Xb_i$.

We measure the similarity of two (discrete) distributions $\Pb$ and $\Qb$ by the \emph{total variation (TV) distance} 
\[
    \tvdist{\Pb - \Qb} = \max_{\text{event } \Ecal} \Pb(\Ecal) - \Qb(\Ecal) = \frac{1}{2}\sum_x \abs{\Pb(x) - \Qb(x)}.
\]
We say $\Pb$ is \emph{$\eps$-close} to $\Qb$ if $\tvdist{\Pb - \Qb} \le \eps$, and \emph{$\eps$-far} otherwise.

\paragraph{Explicit Constructions.}
The focus of this work is on \emph{explicitly} constructing distributions with certain properties, by which we mean $\Db$ can be sampled in $\poly(n)$ time.
Note that as in the computational world, it is straightforward to non-constructively prove strong hardness results via a counting argument.
More precisely, for any class $\Xcal$ of distributions over $\bin^n$ of size $2^{2^{cn}}$ for a constant $c < 1$, there exists a uniform distribution $\Db$ with distance $1 - 2^{-\Omega(n)}$ from every distribution in $\Xcal$ (see \Cref{clm:non_constructive_distance}).

There is a small subtlety, however, in that the sampling models we consider have unbounded input length, so one cannot naively apply the above claim.
Fortunately, the classes of interest can typically be approximated by a subset of the class where the input length is bounded (e.g., \cite{chattopadhyay2022space}), which allows the argument to go through.
We are unaware of any work formally describing these counting arguments giving hard distributions for low-degree polynomial sources and $\ac[\oplus]$ sources, so we record the details in \Cref{app:non-constructive}.

\subsection{Extractors, Robust Extractors, and Isolators}\label{ssec:ext_and_rext}

There are three pseudorandom objects at the core of our work: extractors, robust extractors, and isolators.
We state the formal definitions of the first two below.

\begin{definition}[(Robust) Extractor]
\label{def:ext_and_robust_ext}
    A function $\ext\colon \bin^n \to \bin^m$ is an \emph{$(\eps,k)$-extractor} for a class $\Xcal$ of distributions over $\bin^n$ if for every source $\Xb \in \Xcal$ with min-entropy at least $k$, we have
    \[
        \tvdist{\ext(\Xb) - \Ub^m} \le \eps.
    \]
    If additionally, there exists some $z\in \bin^m$ such that every source $\Xb \in \Xcal$ satisfies
    \[
        \Pr[\Xb \in L \text{ and } \ext(\Xb) = z] \le \frac{1}{2^m} + \delta,
    \]
    where $L = \{x \in \bin^{n} : \Pr[\Xb = x] \leq 2^{-k}\}$, then we call $\ext$ an \emph{$(\eps,\delta,k)$-robust extractor}.
\end{definition}

It is worth mentioning that there are other ways to formalize this notion of robustness.
For example, one could ask for an extractor which works for a source $\Xb$, as long as $\Xb$ has TV distance no more than some parameter $\gamma$ to a high min-entropy source $\Xb'$ (i.e., an extractor for sources of high smooth min-entropy \cite{RW04}).
Assuming such extractors existed for the classes of distributions we consider, it would be possible to obtain a constant distance lower bound as in the second part of \Cref{thm:main}.

Unfortunately, it does not seem feasible to get distance approaching 1.
Fix some choice of $\gamma$ and a sampler $f(\Ub^r)$ for the hard distribution $\Db$ considered in \Cref{thm:main}.
If $f(\Ub^r)$ is $\gamma$-far from $\Db$, we trivially have a lower bound of $\gamma$, so assume this is not the case.
Note that we cannot expect a ``smooth extractor'' to have a better upper bound probability guarantee than $2^{-m} + \gamma$, since it has to apply to distributions which are a point mass with probability $\gamma$ and uniform otherwise.
Tracing through the remainder of the argument, one finds that $f(\Ub^r)$ has distance roughly $1 - \eps - \gamma$ from $\Db$ for some small $\eps$.
In other words, we can only guarantee a bound around $\min(\gamma, 1-\eps -\gamma) \ll 1$.

One of the benefits of \Cref{def:ext_and_robust_ext} is that we can obtain tighter distance guarantees, because we do not need to consider the behavior of the extractor on heavy points.
Hence, we can avoid the above issue.
Returning back to our selected formalizations, we restate the definition of an isolator for the reader's convenience.

\defisolator*

The role of robust extractors in the present work is primarily as a convenient device for constructing isolators.

\begin{fact}\label{fct:iso_from_robust_ext}
    Let $\Xcal$ be a class of distributions over $\bin^n$ which includes the uniform distribution $\Ub^n$.
    If $\rext\colon \bin^n \to \bin^m$ is an $(\eps,\delta,k)$-robust extractor for $\Xcal$, then the function $\iso(x) \coloneqq \indicator(\rext(x) = z)$ is a $(2^{-m} - \eps, 2^{-m} + \delta, k)$-isolator for $\Xcal$ (where $z$ is the same string as in \Cref{def:ext_and_robust_ext}).
\end{fact}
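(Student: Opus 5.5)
The plan is to check the two conditions of \Cref{def:isolator} directly for $\iso(x) = \indicator(\rext(x) = z)$, using the two guarantees of \Cref{def:ext_and_robust_ext} in turn. Each condition comes from one of the two guarantees, and no other ingredient is needed.

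\textbf{First condition.} Since $\Ub^n \in \Xcal$ and $\Ub^n$ has min-entropy $n \ge k$, the extractor guarantee applies to it: $\tvdist{\rext(\Ub^n) - \Ub^m} \le \eps$. Total variation distance bounds the discrepancy on every event, and in particular on the singleton $\{z\}$. This gives
\[
    \Pr[\iso(\Ub^n) = 1] = \Pr[\rext(\Ub^n) = z] \ge 2^{-m} - \eps,
\]
which is the first isolator condition with $\alpha = 2^{-m} - \eps$. The only point to confirm is that $k \le n$, so that $\Ub^n$ meets the min-entropy requirement. This is implicit, since for $k > n$ the extractor condition is vacuous and the set $L$ would be all of $\bin^n$. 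It is also the one place where the hypothesis $\Ub^n \in \Xcal$ is used.

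\textbf{Second condition.} For any $\Xb \in \Xcal$, the set $L$ in \Cref{def:isolator} is literally the same set $L = \{x : \Pr[\Xb = x] \le 2^{-k}\}$ as in \Cref{def:ext_and_robust_ext}. The events also coincide: $\{x \in L \text{ and } \iso(x) = 1\}$ is exactly $\{x \in L \text{ and } \rext(x) = z\}$. The robustness clause of \Cref{def:ext_and_robust_ext}, with the same fixed $z$, then bounds this probability by $2^{-m} + \delta$. This is the second condition with $\beta = 2^{-m} + \delta$.

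Neither step poses a real obstacle. The only care needed is to use the same $z$ in both conditions, namely the one whose existence \Cref{def:ext_and_robust_ext} guarantees.
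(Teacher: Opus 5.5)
Your proof is correct and is the direct check the paper intends; the paper records this as a Fact without writing out a proof. You are also right that the first condition needs $k \le n$, so that $\Ub^n$ meets the min-entropy threshold.

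One small correction to your parenthetical. For $k > n$ the set $L$ is not all of $\bin^n$; for $\Ub^n$ it is empty. In that regime the statement can genuinely fail: with $\eps < 2^{-m}$, a constant $\rext \equiv w \neq z$ is vacuously a robust extractor, yet $\Pr[\iso(\Ub^n)=1]=0$. So $k \le n$ should be read as a necessary implicit hypothesis rather than a harmless degenerate case.
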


Of course, one still has to construct a robust extractor to apply \Cref{fct:iso_from_robust_ext}.
Typically, this is not much more difficult than constructing a traditional extractor, and in the following section, we illustrate such a modification with a number of natural examples.
We note that in certain cases, the robustness can even be obtained for free.

\begin{claim}\label{clm:unif_sources}
    Let $\Xcal$ be a class of distributions over $\bin^n$ such that each $\Xb \in \Xcal$ is the uniform distribution on some set $S_\Xb \subseteq \bin^n$ (i.e., $\Xcal$ is a family of \emph{flat} sources).
    If $\ext\colon\bin^n \to \bin^m$ is an $(\eps, k)$-extractor for $\Xcal$, then $\ext$ is also an $(\eps, \eps, k)$-robust extractor for $\Xcal$.
\end{claim}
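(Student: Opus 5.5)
The extractor property is given, so only robustness needs proof. The choice of $z$ is flexible: I will argue that \emph{every} $z\in\bin^m$ works, so in particular some fixed $z$ does, with $\delta=\eps$. Fix a flat source $\Xb\in\Xcal$, uniform on $S=S_\Xb$, and set $L=\{x:\Pr[\Xb=x]\le 2^{-k}\}$. Every point of $S$ has probability exactly $1/|S|$. So $L\cap S$ is either empty or all of $S$, and the cases split on whether $|S|<2^k$ or $|S|\ge 2^k$.

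\emph{Case $|S|<2^k$.} Every point in the support has probability $1/|S|>2^{-k}$, so $\Xb$ puts no mass on $L$. Hence $\Pr[\Xb\in L \text{ and } \ext(\Xb)=z]=0\le 2^{-m}+\eps$.

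\emph{Case $|S|\ge 2^k$.} Here $H_\infty(\Xb)=\log|S|\ge k$, so $\Xb$ meets the min-entropy requirement of the extractor definition and $\tvdist{\ext(\Xb)-\Ub^m}\le\eps$. Total variation distance bounds the deviation on any single event, including $\{z\}$. Therefore
\[
\Pr[\Xb\in L \text{ and } \ext(\Xb)=z]\le\Pr[\ext(\Xb)=z]\le 2^{-m}+\eps .
\]

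The only step needing care is the observation that flatness makes $L$ cover either all or none of the support. Once that is noted, there is no real obstacle and the claim follows.
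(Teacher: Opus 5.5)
Your proposal is correct and matches the paper's proof: both split on whether $|S_\Xb|\ge 2^k$, apply the extractor guarantee to the single event $\{\ext(\Xb)=z\}$ in that case, and observe that $\Xb$ puts no mass on $L$ otherwise. The only difference is that the paper fixes $z=0^m$ while you note that every $z$ works; the argument is the same.
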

\begin{proof}
    Consider any source $\Xb \in \Xcal$, which is uniform over a set $S_{\Xb}$.
    If $|S_{\Xb}| \ge 2^k$, then $\Xb$ has min-entropy at least $k$ and the set
    \[
        L \coloneqq \{x \in \bin^n : \Pr[\Xb = x] \leq 2^{-k}\} = \bin^n.
    \]
    Thus, the extractor guarantee implies 
    \[
        \Pr_{x \sim \Xb}[x \in L \text{ and } \ext(x) = 0^m] = \Pr_{x \sim \Xb}[\ext(x) = 0^m] \leq 2^{-m} + \eps.
    \]
    Otherwise $|S_{\Xb}| < 2^k$, and $\Pr[\Xb = x] = 0$ for every $x \in L$.
    Hence,
    \[
        \Pr_{x \sim \Xb}[x \in L \text{ and } \ext(x) = 0^m] \le \Pr_{x \sim \Xb}[x \in L] = 0. \qedhere
    \]
\end{proof}

\section{Hard-to-Sample Distributions for Specific Sources}\label{sec:hard_dist_instantiation}

In this section, we instantiate \Cref{thm:main} for a number of commonly studied sources.
Most of the results we obtain are already known (with the notable exception of polynomial sources), but we reprove them in our framework as evidence that typical constructions of extractors can be easily adapted into ones for robust extractors or isolators.

\subsection{Polynomial Sources}\label{ssec:poly}

We begin with polynomial sources, as achieving strong sampling lower bounds in this setting is one of the paper's main contributions.

\begin{definition}[Polynomial Source]\label{def:poly_source}
A degree-$d$ \emph{polynomial source} $\Xb$ is defined by a polynomial map $P\colon\Fbb_2^m \rightarrow \Fbb_2^n$, where $P = (p_1, p_2, \dots, p_n)$ and each $p_i$ is an $\Fbb_2$-polynomial of degree at most $d$, such that $\Xb = P(\Ub^m)$.
\end{definition}

Prior to our work, the best explicit distribution was only known to have distance $2^{-\Omega(n)}$ from any low-degree polynomial map; this is achieved by combining an argument of \cite{viola2014extractors} (described in \Cref{ssec:nonzero_LB}) with an extractor construction of \cite{chattopadhyay2024extractors}.
We improve this distance bound to $1-o(1)$.

\begin{restatable}{theorem}{thmharddistpoly}\label{thm:hard_dist_poly}
    There exists a constant $\delta > 0$ such that the following holds.
    Let $N$ and $\Delta$ be positive integers satisfying $\Delta \leq \delta \log \log N$. 
    There exist positive integers $n, t, d$ with $(n+1)t \leq N$ and an isolator $\iso\colon \bin^n \to \bin$ with suitable parameters for the class of polynomial sources on $\bin^n$ of degree $d$ such that the following holds.

    Define the distribution 
    \[
        \Db=(\Ub_1, \Ub_2, \dots, \Ub_t, \iso(\Ub_1), \iso(\Ub_2),\dots, \iso(\Ub_t)),
    \]
    where $\Ub_1, \Ub_2, \dots, \Ub_t$ are independent copies of $\Ub^n$.
    Let $\Xcal$ be the class of all polynomial sources $\Xb$ on $\bin^{(n+1)t}$ of degree $\Delta$. Then for any $\Xb \in \Xcal$,
    \[
        \tvdist{\Db - \Xb} \geq 1 - O\left(\frac{\Delta}{\log \log N} \log \left(\frac{\log \log N}{\Delta}\right)\right).
    \]
    Moreover, the distribution $\Db$ can be sampled in $\poly(N)$ time.

    Additionally, there exists an isolator $\tilde{\iso} \colon \bin^n \to \bin$ (with possibly different parameters) such that the distribution $\tilde{\Db} = (\Ub^n, \tilde{\iso}(\Ub^n))$ satisfies $\tvdist{\tilde{\Db} - \Yb} \geq 1/4 - n^{-\Omega(1)}$ for all degree-$\Delta$ polynomial sources $\Yb$ on $\bin^{n+1}$.
    $\tilde{\Db}$, too, can be sampled in $\poly(N)$ time.
\end{restatable}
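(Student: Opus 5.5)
This follows by instantiating \Cref{thm:main}. The first step is to understand the class $\Ycal = \{\add_{n,t}(\Xb) : \Xb \in \Xcal\}$. Write $\Xb = P(\Ub^r) = (\Xb_1,\dots,\Xb_t,\xb_1,\dots,\xb_t)$ with each coordinate a degree-$\Delta$ polynomial. Then $\add_{n,t}(\Xb)$ can itself be written as a polynomial map:
\[
\sum_{i=1}^t \xb_i \prod_{j<i}(1-\xb_j)\,\Xb_i \;+\; \prod_{j\le t}(1-\xb_j)\,\Ub^n ,
\]
where $\Ub^n$ uses fresh input variables. Each summand has degree at most $t\Delta+\Delta$, so $\add(\Xb)$ is a polynomial source on $\bin^n$ of degree $d \le (t+1)\Delta$. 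It therefore suffices to give an explicit $(\alpha,\beta,k)$-isolator for degree-$d$ polynomial sources with $d = O(t\Delta)$. Note that $t$ must be about $2^m\log(1/\text{error})$, so the degree grows with $t$. This coupling is what limits the final bound to $\log\log N$ scaling.

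The second step is to build the isolator from a robust extractor via \Cref{fct:iso_from_robust_ext}, taking $\iso(x)=\indicator(\rext(x)=z)$. We need an $(\eps,\delta',k)$-robust extractor $\rext\colon\bin^n\to\bin^m$ for degree-$d$ polynomial sources, with $k = n - \omega(\log t)$ (say $n-k \ge 4\log t + \log(1/\gamma)$) and $\eps,\delta' \ll 2^{-m}$. The plan is to robustify the extractor of \cite{chattopadhyay2024extractors}. That construction extracts from polynomial sources of min-entropy $n - \Omega(n)$ (or $n-\text{polylog}$) for degree up to roughly $\log\log n$-ish, using the structure that a degree-$d$ polynomial source with high min-entropy is close to a convex combination of sources on which a low-degree structure (or a function with small correlation with low-degree polynomials) yields pseudorandomness.

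The main obstacle is the robustness inequality
\[
\Pr[\Xb\in L,\ \rext(\Xb)=z]\le 2^{-m}+\delta' ,
\]
which must hold for every degree-$d$ source, including ones like the example with a heavy point. My approach is to decompose $P(\Ub^r)$ by conditioning or restricting on a small number of low-degree events, in the spirit of the polarization argument, so that $\Xb$ becomes a mixture of pieces that are either heavy-concentrated or have high min-entropy. The pieces supported on light points should themselves be high-min-entropy low-degree sources of slightly larger degree, to which the original extractor applies. Alternatively, I would exploit that the extractor is built from a function $f$ that is hard to correlate with degree-$O(d)$ polynomials, and show directly that the light part of $\Xb$ gives a bounded correlation. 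Concretely, $\Pr[\Xb\in L, f(\Xb)=z]$ can be bounded by $|L\cap f^{-1}(z)|$-type counting via $\Pr[\Xb=x]\le 2^{-k}$ together with a Fourier or low-degree argument. I expect this step to require the bulk of the work, and the parameters will force $m \approx \log(\log\log N/\Delta)$.

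Finally, plug into \Cref{thm:main}. Take $\alpha = 2^{-m}-\eps$ and $\beta = 2^{-m}+\delta'$. Choose $t \approx 2^m m$, so that $(1-\alpha)^{t+1}\le e^{-\Omega(m)}$, and choose $n = N/(t+1)$ so that $2^{-(n-k)}t^3$ is negligible. The constraint that $d=(t+1)\Delta$ stays within the extractor's degree range $\approx \delta\log\log N$ gives $2^m \approx \log\log N/(\Delta m)$. The total error is then $O(2^{-m}m)=O\!\left(\frac{\Delta}{\log\log N}\log\frac{\log\log N}{\Delta}\right)$, as required. Explicitness follows because $\rext$ is $\poly(n)$-time computable. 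For the "additionally" part, use $t=1$ and $m=1$, so that $\alpha\approx 1/2$. The bound $2\alpha-\alpha^2-\beta$ then requires $\beta \le 1/2$ at the relevant scale. More carefully, with a single-bit robust extractor giving $\alpha,\beta = 1/2 \pm n^{-\Omega(1)}$, we obtain $1 - 1/4 - 1/2 - n^{-\Omega(1)} = 1/4 - n^{-\Omega(1)}$.
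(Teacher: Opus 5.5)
Your reduction to \Cref{thm:main} is sound: $\add_{n,t}(\Xb)$ is a polynomial source of degree at most $(t+1)\Delta$, and your bookkeeping with $t\approx 2^m m$ and the $t=1$ computation agree with the paper. The gap is the step you defer, which is really the whole content of the theorem: constructing an \emph{explicit} isolator for degree-$d$ polynomial sources with $d$ growing like $\log\log N$. None of the routes you sketch delivers it.
\begin{itemize}
\item Polarization by restrictions is exactly what fails for polynomial sources; the paper's example $\Gb$ is a degree-2 source. Conditioning a polynomial source on a low-degree event does not in general keep you inside the class of polynomial sources, so the original extractor's guarantee cannot be invoked on the pieces. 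You do not supply a substitute decomposition theorem.
\item The counting bound only gives $\Pr[\Xb\in L,\ f(\Xb)=z]\le |f^{-1}(z)|\cdot 2^{-k}\approx 2^{-m}\cdot 2^{n-k}$. This is vacuous under your own requirement $n-k\ge 4\log t$.
\item Small correlation with low-degree polynomials is a statement about computing, not sampling. It does not imply extraction from low-degree samplers.
\end{itemize}
The extractor of \cite{chattopadhyay2024extractors} is not a structural object of this kind either: it is found by brute force over functions on few bits, made possible by input reduction. Finally, by fixing $n\approx N/(t+1)$ you rule out any brute-force search, so you would need a genuinely new structural construction.

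The missing idea is to make the block length tiny and let the probabilistic method supply robustness. The paper takes $n\approx(\log N)^{1/(2+\lambda)}$, a \emph{low} threshold $k=\lceil 20\log n\rceil$, $d=\log n$ and $t=\lfloor d/\Delta\rfloor-1$. It then searches a highly independent hash family for an $h$ with $\iso(x)=\indicator(h(x)=0^m)$. For a fixed source, $\sum_{x\in L}\Pr[\Xb=x]\,\indicator(h(x)=0^m)$ is a sum of limited-independence terms, each of weight at most $2^{-k}$. It therefore concentrates below $2^{-m}+2^{-\Omega(k)}$ (\Cref{lem:bdd_f2_tind_rext}), and one union bounds over the $2^{\binom{\ell}{\le d}n}$ degree-$d$ maps with $\ell=O(k)$ inputs.

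To pass to sources with arbitrarily many inputs, \Cref{lem:f2_input_red} applies the input reduction of \cite{chattopadhyay2024extractors} to $S\circ f$, where $S$ is the entropy-smoothing map of \Cref{clm:smoothing}. Compressing to $k+1$ bits keeps the reduced source at $O(k)$ inputs rather than $O(n)$. It also guarantees that light points of $f(\Ub^r)$ remain light (at threshold $2^{-(k-1)}$) for the reduced source, at additive cost $2^{-k}$. The search then takes time $2^{O(\binom{\Theta(k)}{\le d}n^2)}=2^{O(n^{2+\lambda})}=\poly(N)$. It is this running-time constraint, combined with $d\ge(t+1)\Delta$, that produces the $\log\log N$ scaling.
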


We prove \Cref{thm:hard_dist_poly} by modifying the construction of an explicit extractor from \cite{chattopadhyay2024extractors} to create an isolator $\iso$ for low-degree polynomial sources, and invoke \Cref{thm:main} using $\iso$.

The extractor from \cite{chattopadhyay2024extractors} is constructed by brute-forcing over many possible extractors on a small number of input bits.
The existence of such an extractor is guaranteed by the probabilistic method (as is standard), but what makes their argument not straightforward is that the number of polynomial sources is not bounded, since a polynomial source can have any number of input bits. 
Their key ingredient is an input reduction technique \cite[Theorem 4.1]{chattopadhyay2024extractors}, which shows that it is enough to consider only sources where the number of input bits is at most a constant factor times the min-entropy.

We observe below that a variant of their input reduction argument also works for isolators. 
We first prove a variant of their simple entropy smoothing claim \cite[Claim 4.3]{chattopadhyay2024extractors}.

\begin{claim}\label{clm:smoothing}
    For any random variable $\Xb$ over $\bin^n$ and positive integer $k$, there exists a function $S\colon\bin^n \rightarrow \bin^{k+1}$ with the following property. For every $x \in \bin^n$ such that $\Pr[\Xb = x] \leq 2^{-k}$, we have $\Pr[S(\Xb) = S(x)] \leq 2^{-k}$.
\end{claim}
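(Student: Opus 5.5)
The plan is to build $S$ as an injective encoding of the light points into $\bin^{k+1}$, with all heavy points sent to a separate collision class. Write $L = \{x : \Pr[\Xb = x] \le 2^{-k}\}$ and $H = \bin^n \setminus L$ for the heavy points.

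First, I will bucket the light points greedily into at most $2^{k+1}-1$ groups, each of total mass at most $2^{-k}$. To do this, list the points of $L$ in any order and fill the buckets consecutively. A bucket is closed as soon as adding the next point would push its mass above $2^{-k}$; since every light point has mass at most $2^{-k}$, that point then starts a fresh bucket. Any two consecutive buckets have combined mass greater than $2^{-k}$. Because the total mass of $L$ is at most $1$, pairing consecutive buckets shows there are at most $2\cdot 2^{k}$ buckets. This leaves at least one label free only if the count is strictly below $2^{k+1}$. To secure the slack, I will instead compare each bucket with the first point of the following bucket, which yields at most $2^{k+1}$ buckets in total.

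Next I assign labels. Each bucket gets a distinct string in $\bin^{k+1}$. Every heavy point is mapped to some label, and the key requirement is that heavy points must not share a label with a light bucket. If a free label exists, all of $H$ goes there. If not, I shift the accounting: the buckets are formed under the capacity $2^{-k}$ but counted using the total light mass $1-\Pr[\Xb\in H]$. When $H$ is nonempty this mass is strictly below $1$ (each heavy point has mass exceeding $2^{-k}$), so a bucket count reaching $2^{k+1}$ must be handled by the same consecutive-pair bound. The cleanest way to get the needed slack is to give $H$ its own label and bound the light buckets using $\Pr[L] < 1 - 2^{-k}$, which gives strictly fewer than $2^{k+1}-1$ buckets. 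When $H$ is empty, no extra label is needed and the earlier bound suffices.

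Finally, I verify the property. For $x \in L$, the event $S(\Xb) = S(x)$ is exactly the event that $\Xb$ lands in the bucket containing $x$, since heavy points carry a different label. That bucket has mass at most $2^{-k}$, as required.

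The main obstacle is the off-by-one counting needed to guarantee that a spare label exists for $H$ in every case. I expect the greedy pairing argument, carried out carefully, to resolve it.
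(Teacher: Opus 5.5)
Your proof is correct and is essentially the paper's argument: pack the light points greedily into buckets of mass at most $2^{-k}$ and bound the number of buckets by total mass. The paper instead merges arbitrary pairs of buckets whose combined mass is at most $2^{-k}$, so each heavy point automatically stays a singleton with its own label, and it observes that at most one final bucket has mass at most $2^{-(k+1)}$. The off-by-one you worried about is not really there: disjoint consecutive pairs $(b_1,b_2),(b_3,b_4),\dots$ each have mass strictly greater than $2^{-k}$, so $\lfloor B/2\rfloor < 2^k$ and hence $B \le 2^{k+1}-1$ always, which already leaves a spare label for $H$ (your $\Pr[L] < 1-2^{-k}$ case split also works, but the ``first point of the following bucket'' step adds nothing).
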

\begin{proof}
    We perform the following merging operation. Start with each string in $\bin^n$ in its own bucket. If there are two buckets whose combined probability mass under $\Xb$ is at most $2^{-k}$, then merge the two buckets. This operation is repeated for as long as possible. (This process necessarily terminates since the number of buckets decreases in each step.)

    At the end, there can be at most one bucket whose corresponding probability is at most $2^{-(k+1)}$, since otherwise we could merge two such buckets. This implies that there are at most $2^{k+1}$ buckets in total, and we can define $S$ to simply map each string $x \in \bin^n$ to the bucket containing it (where we associate each bucket with a distinct string in $\bin^{k+1}$).

    To verify the desired property of $S$, first observe that since any bucket containing two or more elements is the result of some merge operation, it must have total probability at most $2^{-k}$. 
    This handles all $x$ such that $|S(x)| \geq 2$. On the other hand, if $\Pr[\Xb = x] \leq 2^{-k}$ and $|S(x)| = 1$, then clearly $\Pr[S(\Xb) = S(x)] = \Pr[\Xb = x] \leq 2^{-k}$.
\end{proof}

We also need the following general lemma, which is implicit \cite{chattopadhyay2024extractors}. 
We sketch the proof for completeness.

\begin{restatable}{lemma}{leminputredux}\label{lem:input_redux}
    Let $f \colon \Fbb_2^r \rightarrow \Fbb_2^n$ be a function and $0 < \eps < 1/4$. Let $\ell = \ceilbra{n + 3\log(1/\eps)}$. 
    If $r > \ell$, there exist $A \in \Fbb_2^{r \times \ell}$ and $b \in \Fbb_2^r$ such that if we define $h \colon \Fbb_2^\ell \rightarrow \Fbb_2^n$ by $h(x) = f(Ax + b)$, we have  $\tvdist{f(\Ub^r) - h(\Ub^\ell)} \leq 2\eps$.
\end{restatable}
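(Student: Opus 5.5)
We will use the probabilistic method: choose $A \in \Fbb_2^{r\times \ell}$ and $b \in \Fbb_2^r$ uniformly at random and show that $\E_{A,b}\tvdist{f(\Ub^r) - h(\Ub^\ell)} \le 2\eps$. Then some fixed choice of $(A,b)$ attains the bound.

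The key point is that the random affine map $x \mapsto Ax+b$ has good pairwise behavior. For a fixed $x$, the point $Ax+b$ is uniform on $\Fbb_2^r$. For distinct $x \neq x'$, the difference $A(x-x')$ is uniform and independent of $Ax+b$, so the points $\{Ax+b\}_{x\in\Fbb_2^\ell}$ are pairwise independent and uniform. Write $p(y) = \Pr[f(\Ub^r)=y]$ for $y \in \Fbb_2^n$, and set $q_{A,b}(y) = 2^{-\ell}\,\#\{x : f(Ax+b)=y\}$.

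For each $y$, $q_{A,b}(y)$ is an average of $2^\ell$ pairwise independent indicator variables, each with mean $p(y)$. Hence $\E[q(y)] = p(y)$ and $\mathrm{Var}[q(y)] \le 2^{-\ell}p(y)$.

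Since $\E|q(y)-p(y)| \le \sqrt{2^{-\ell}p(y)}$, summing over $y$ and applying Cauchy--Schwarz over the $2^n$ values of $y$ gives
\[
\E\,\tvdist{f(\Ub^r)-h(\Ub^\ell)} = \tfrac12\sum_y \E|q(y)-p(y)| \le \tfrac12\sqrt{2^{n}\cdot 2^{-\ell}\sum_y p(y)} = \tfrac12 \, 2^{-(\ell-n)/2}.
\]
With $\ell \ge n+3\log(1/\eps)$, this is at most $\tfrac12\eps^{3/2}$, which is at most $2\eps$ with plenty of room. The hypothesis $r>\ell$ is used only to make the reduction nontrivial; the argument itself does not depend on it.

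The only potential subtlety is the pairwise-independence claim, since the map $x\mapsto Ax+b$ need not be injective. Pairwise independence nonetheless holds for distinct $x,x'$: $A(x-x')$ is uniform because $x-x'\neq 0$, and it is independent of $Ax+b$ because of the shift $b$. The collisions this allows are exactly what the variance bound accounts for. Everything else is the routine calculation above. A Chernoff or Markov argument with more generous slack, which is presumably where the constant $3$ and the factor $2$ in the statement come from, would also suffice.
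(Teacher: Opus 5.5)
Your proof is correct, and it takes a different, self-contained route from the paper's. The paper never analyzes a random affine map directly. It imports Lemma~4.2 of \cite{chattopadhyay2024extractors}, which supplies a fixed full-rank linear map $M\colon \Fbb_2^r \to \Fbb_2^{r-\ell}$ such that $f(\Ub^r)\circ M(\Ub^r)$ is $2\eps$-close to $f(\Ub^r)\circ \Ub^{r-\ell}$; this is a leftover-hash-type statement with $f(\Ub^r)$ as side information. Since $M(\Ub^r)$ is uniform, the paper then averages over its value $v$ to find one coset $\{u : Mu = v\}$ on which $f$ nearly keeps its output distribution, and parametrizes that coset as $\{Ax+b\}$. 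You instead randomize the embedding $x\mapsto Ax+b$ itself and bound each fiber mass $q_{A,b}(y)$ by a second-moment computation using pairwise independence. This gives the stronger error $\tfrac12 2^{-(\ell-n)/2}\le \tfrac12\eps^{3/2}$ and needs neither $r>\ell$ nor $\eps<1/4$.

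What the paper's route buys is that $A$ has full column rank, so $h(\Ub^\ell)$ is exactly $f(\Ub^r)$ conditioned on an $\ell$-dimensional affine subspace. The statement does not require this, and its uses here (\Cref{lem:f2_input_red} and the appendix) only need that affine substitution does not raise the degree. Your argument can deliver injectivity anyway. When $r>\ell$, a uniformly random $A$ fails to have full column rank with probability less than $2^{\ell-r}\le 1/2$. Markov's inequality gives $\Pr_{A,b}\bigl[\tvdist{f(\Ub^r)-h(\Ub^\ell)}>2\eps\bigr]\le \eps^{1/2}/4<1/4$. So some $(A,b)$ satisfies both conditions.
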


\begin{proof}
    Lemma 4.2 in \cite{chattopadhyay2024extractors} gives a full rank\footnote{The conference version of \cite{chattopadhyay2024extractors} does not explicit state that $M$ is full rank, but this can be found in the updated arXiv version: \url{https://arxiv.org/abs/2309.11019}.} linear map $M \colon\Fbb_2^r \rightarrow \Fbb_2^{r-\ell}$ such that
    \begin{equation}\label{eq:lem:f2_input_red:1}
        \tvdist{f(\Ub^r) \circ M(\Ub^r) - f(\Ub^r) \circ \Ub^{r-\ell}} \leq 2\eps.
    \end{equation}
    We start by observing that  
    \[
        \tvdist{f(\Ub^r) \circ M(\Ub^r) - f(\Ub^r) \circ \Ub^{r-\ell}} = \E_{v \sim M(\Ub^r)}\Big[\tvdist{(f(\Ub^r)\mid M(\Ub^r) = v) - f(\Ub^r)}\Big],
    \]
    where we have used that $M$ has full rank, so $M(\Ub^r)$ is $\Ub^{r-\ell}$. 
    Combining with \Cref{eq:lem:f2_input_red:1}, there must exist some $v \in \Fbb_2^{r-\ell}$ such that 
    \[
        \tvdist{(f(\Ub^r)\mid M(\Ub^r) = v) - f(\Ub^r)} \leq 2\eps.
    \]
    Now $\left(f(\Ub^r)\mid M(\Ub^r) = v\right)$ can be expressed as $h(\Ub^\ell)$ for a function $h$ on $\ell$ inputs of the desired form $h(x) = f(Ax + b)$, since conditioning on $M(\Ub^r) = v$ (where $M$ has full rank) is equivalent to replacing $r - \ell$ of the $r$ input bits for $f$ by affine functions of the other $\ell$ inputs. 
    Thus, we have $\tvdist{h(\Ub^\ell) - f(\Ub^r)} \leq 2\eps$ as desired.
\end{proof}

We now proceed to the input reduction lemma.

\begin{lemma} \label{lem:f2_input_red}
    Suppose $\iso\colon \Fbb_2^n \rightarrow \Fbb_2$ is an $(\alpha, \beta, k-1)$-isolator for the class of degree-$d$ polynomial sources with at most $4(k+1)$ inputs. Then $\iso$ is also an $(\alpha, \beta+2^{-k}, k)$-isolator for the class of all degree-$d$ polynomial sources.
\end{lemma}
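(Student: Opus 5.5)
The plan is to follow the input-reduction strategy of \cite{chattopadhyay2024extractors}: compress the output of an arbitrary degree-$d$ source so that only $O(k)$ output bits matter, replace it by a source with few inputs via \Cref{lem:input_redux}, and then invoke the hypothesis on that small source. The first isolator condition, $\Pr[\iso(\Ub^n)=1]\ge\alpha$, does not depend on the source class, so only the second condition needs work.

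Fix a degree-$d$ source $\Xb=P(\Ub^r)$ and let $L=\{x:\Pr[\Xb=x]\le 2^{-k}\}$. If $r\le 4(k+1)$, the hypothesis applies directly, since $L$ is contained in the set of light points at threshold $2^{-(k-1)}$. So assume $r>4(k+1)$.

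\textbf{Step 1: an iso-aware bucketing map.} I would apply the merging procedure from the proof of \Cref{clm:smoothing}, but only within the points satisfying $\iso(x)=1$. All points with $\iso=0$ go into a single extra bucket. Call the resulting map $S$. The argument of \Cref{clm:smoothing} then gives two properties.
\begin{itemize}
\item Every $x\in L$ with $\iso(x)=1$ lies in a bucket $B$ with $\Pr[\Xb\in B]\le 2^{-k}$.
\item The event $\{x\in L,\ \iso(x)=1\}$ is contained in the event that $S(x)$ is one of these light, $\iso$-$1$ buckets, which is a function of $S(x)$ alone. Moreover, every point in those buckets has $\iso=1$.
\end{itemize}
I want $S$ to take values in $\bin^{k+1}$. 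The main obstacle I expect is the off-by-one counting needed for this. The buckets of mass greater than $2^{-(k+1)}$ number fewer than $2^{k+1}$, and there is at most one light $\iso$-$1$ bucket, but the extra $\iso$-$0$ bucket may push the total to $2^{k+1}+1$. My first attempt would be a more careful count using total mass, or folding the $\iso$-$0$ points into a bucket whose mass bound is irrelevant. The freedom here is that only $\iso$-$1$ buckets need mass control.

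\textbf{Step 2: input reduction.} Apply \Cref{lem:input_redux} to $f=S\circ P\colon\Fbb_2^r\to\Fbb_2^{k+1}$ with $\eps=2^{-(k+1)}$. Then $\ell=\lceil k+1+3(k+1)\rceil=4(k+1)$, and we obtain $A,b$ such that $Q(y)=P(Ay+b)$ satisfies
\[
\tvdist{S(\Xb)-S(Q(\Ub^\ell))}\le 2^{-k}.
\]
Since $P$ has degree $d$ and the substitution $y\mapsto Ay+b$ is affine, $Q$ is a degree-$d$ polynomial source with at most $4(k+1)$ inputs.

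\textbf{Step 3: transfer.} For a light $\iso$-$1$ bucket $B$ and any $x\in B$, the total variation bound gives
\[
\Pr[Q(\Ub^\ell)=x]\le\Pr[S(Q)\in S(B)]\le 2^{-k}+2^{-k}=2^{-(k-1)}.
\]
So every point of such a bucket is light for $Q$ at threshold $k-1$, and has $\iso=1$. Let $E$ be the union of the light $\iso$-$1$ buckets. Then:
\begin{itemize}
\item $\Pr_\Xb[x\in L,\ \iso(x)=1]\le\Pr[S(\Xb)\in S(E)]$ by Step 1.
\item $\Pr[S(\Xb)\in S(E)]\le\Pr[S(Q)\in S(E)]+2^{-k}$ by the total variation bound, since $S(E)$ is an event on the compressed outputs.
\item $\Pr[S(Q)\in S(E)]=\Pr[Q\in E]$, and by the previous display every point of $E$ is $Q$-light with $\iso=1$, so $\Pr[Q\in E]\le\beta$ by the hypothesis applied to $Q$.
\end{itemize}
Combining these gives the bound $\beta+2^{-k}$, as claimed.
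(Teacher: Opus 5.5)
Your argument is correct. It follows the paper's outline (smoothing map, then \Cref{lem:input_redux} with $\eps=2^{-(k+1)}$, then the lightness transfer $\Pr[Q=x]\le\Pr[S(\Xb)=S(x)]+2^{-k}$), but it departs from the paper at the one step that matters. The paper uses the $\iso$-oblivious map $S$ of \Cref{clm:smoothing}. It then moves the event $\{x\in L_f,\ \iso(x)=1\}$ from $f(\Ub^r)$ to $g(\Ub^\ell)$, citing only the bound $\tvdist{S(g(\Ub^\ell))-S(f(\Ub^r))}\le 2^{-k}$.

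That event is a union of $S$-fibers only if no bucket mixes $\iso$-values, and the plain merging procedure does not ensure this. Within a single bucket, $f$ could put its mass on $\iso$-1 points while $g$ puts the same mass on $\iso$-0 points. So that transfer is not actually justified as written. Your $\iso$-aware bucketing is exactly what makes $\{x\in L,\ \iso(x)=1\}$ a union of fibers all lying in $\iso^{-1}(1)$, so your Step 3 chain is valid. You also correctly handle the case $r\le 4(k+1)$, where \Cref{lem:input_redux} does not apply.

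The obvious alternative repair is to apply \Cref{lem:input_redux} to $(S\circ P,\iso\circ P)$. That needs $k+2$ output bits and gives $\ell=4k+5$, which misses the stated bound of $4(k+1)$ inputs. Your route is the one that keeps the stated parameters.

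The off-by-one you flag does close via a total-mass count, and you should include it. At termination, any two $\iso$-1 buckets have combined mass $>2^{-k}$. Pairing the $N$ such buckets into $\lfloor N/2\rfloor$ disjoint pairs therefore gives $\lfloor N/2\rfloor\cdot 2^{-k}<1$. Hence $N\le 2^{k+1}-1$, and the extra $\iso$-0 bucket still fits into $\bin^{k+1}$. Your other suggestion, folding the $\iso$-0 points into a heavy $\iso$-1 singleton, fails when no point has mass above $2^{-k}$. So the counting route is the one to use.
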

\begin{proof}
    We clearly have $\Pr[\iso(\Ub^n) = 1] \geq \alpha$ since $\iso$ is an $(\alpha, \beta, k-1)$-isolator, so it remains to verify that $\iso$ satisfies the second condition in the definition of an $(\alpha, \beta+2^{-k}, k)$-isolator for the class of all degree-$d$ polynomial sources.

    Let $f\colon \Fbb_2^r \rightarrow \Fbb_2^n$ be a degree-$d$ polynomial map with light points $L_f \coloneqq \{x : \Pr[f(\Ub^r) = x] \leq 1/2^k\}$.
    Let $\ell = 4(k+1)$, and let $S\colon \bin^n \to \bin^{k+1}$ be the map given by \cref{clm:smoothing} applied to $f(\Ub^r)$ and $k$.
    Apply \cref{lem:input_redux} to the function $S(f(\cdot ))$ with $\eps = 2^{-(k+1)}$ to get $A \in \Fbb_2^{r \times \ell}, b \in \Fbb_2^r$ such that 
    \begin{equation}\label{eq:lem:f2_input_red:2}
    \tvdist{S(f(A \cdot \Ub^\ell + b)) - S(f(\Ub^r))} \leq 2^{-k}.
    \end{equation}
    Let $g \colon \Fbb_2^\ell \rightarrow \Fbb_2^n$ be defined by $g(x) = f(Ax + b)$. Observe that $g$ is a degree-$d$ polynomial since we have only substituted linear polynomials for the inputs of $f$.

    Define $L_g = \{x : \Pr[g(\Ub^\ell) = x] \leq 2^{-(k-1)}\}$.
    Observe that $L_f \subseteq L_g$, since \Cref{eq:lem:f2_input_red:2} implies every $x \in L_f$ satisfies
    \begin{align*}
        \Pr[g(\Ub^\ell)=x] \leq \Pr[S(g(\Ub^\ell))=S(x)] \leq \Pr[S(f(\Ub^r))=S(x)] + 2^{-k} \leq 2^{-(k-1)}.
    \end{align*}
    Therefore, we conclude
    \begin{align*}
        \Pr_{\xb \sim f(\Ub^r)}[\xb \in L_f \text{ and } \iso(\xb) = 1] &\leq \Pr_{\xb \sim g(\Ub^\ell)}[\xb \in L_f\text{ and } \iso(\xb) = 1] + 2^{-k} \tag{by \Cref{eq:lem:f2_input_red:2}} \\
        &\leq \Pr_{\xb \sim g(\Ub^\ell)}[\xb \in L_g\text{ and } \iso(\xb) = 1] + 2^{-k} \tag{since $L_f \subseteq L_g$} \\
        &\leq \beta + 2^{-k}.
    \end{align*}
    The last inequality above uses the isolator guarantee for $g(\Ub^{\ell})$. This finishes the proof.
\end{proof}

To apply \Cref{lem:f2_input_red}, we need to construct an isolator for the class of polynomial sources with bounded input length.
The following lemma will allow us to find such an object more efficiently than a naive brute force.
Below, recall that a family $\cbra{f_i\colon \bin^n \to \bin^m}_i$ of $t$-wise uniform hash functions is defined by the property that for any string $(y_1,y_2,\dots,y_t) \in (\bin^m)^t$ and all distinct strings $x_1, x_2, \dots, x_t \in \bin^n$, a function $\mathbf{f}$ chosen uniformly at random from the family satisfies $(\mathbf{f}(x_1), \mathbf{f}(x_2), \dots, \mathbf{f}(x_t)) = (y_1,y_2,\dots,y_t)$ with probability $2^{-mt}$.

\begin{lemma} \label{lem:bdd_f2_tind_rext}
    Let $0 < \alpha < \beta < 1$. Let $k,n,\ell$, and $d$ be positive integers. Set $K = 2^k$ and $N = 2^n$. Let $t \geq 4$ be an even integer. Suppose there exists an integer $m$ such that $p \coloneqq 2^{-m}$ satisfies  $\alpha < p < \beta$ and the following inequality holds:
    \[
    \left(\frac{Npt +t^2}{N^2(p-\alpha)^2 }\right)^{t/2} + 2^{\binom{\ell}{\leq d}n}\left(\frac{K p t + t^2}{K^2(\beta - p)^2}\right)^{t/2} < 1/8.
    \]
    Additionally, let $\Hcal$ be a family of $t$-wise uniform hash functions from $\bin^n$ to $\bin^m$.
    Then there exists a function $h \in \Hcal$ such that the function $g$ defined by $g(x) = \indicator(h(x) = 0^m)$ is an $(\alpha, \beta, k)$-isolator for the class of degree-$d$ polynomial sources with $\ell$ inputs and $n$ outputs.
\end{lemma}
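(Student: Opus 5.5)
We use the probabilistic method over a uniformly random $\mathbf{h}\in\Hcal$, and show that with probability greater than $0$ (indeed at least $3/4$) the function $g=\indicator(\mathbf{h}(x)=0^m)$ satisfies both isolator conditions. For each $x\in\bin^n$ let $Z_x=\indicator(\mathbf{h}(x)=0^m)$. By $t$-wise uniformity, any $t$ of the $Z_x$ are independent Bernoulli$(p)$ variables. Both conditions are tail bounds for weighted sums $\sum_x w_x Z_x$, and we control them with the standard $t$-th moment bound for $t$-wise independent variables: if $Y=\sum_x w_x (Z_x-p)$ with $0\le w_x\le 1$ and $\mu=p\sum_x w_x$, then
\[
\E[Y^t]\le C\pbra{t\mu+t^2}^{t/2},
\]
with the constant absorbed (the Bellare--Rompel bound gives $\E[Y^t]\le 8(t\mu+t^2)^{t/2}$ for $t\ge 4$ even). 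This is why the hypothesis takes the form it does.

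\textbf{First condition.} Take the normalized weights $w_x=1$, so $\Pr_{x\sim\Ub^n}[g(x)=1]=\frac1N\sum_x Z_x$ with mean $p$. By Markov on the $t$-th moment,
\[
\Pr\sbra{\sum_x Z_x< N\alpha}\le \frac{\E[Y^t]}{(N(p-\alpha))^t}\lesssim \pbra{\frac{Npt+t^2}{N^2(p-\alpha)^2}}^{t/2}.
\]

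\textbf{Second condition.} Any degree-$d$ polynomial source with $\ell$ inputs is determined by $n$ polynomials, each given by at most $\binom{\ell}{\le d}$ monomial coefficients. So there are at most $2^{\binom{\ell}{\le d}n}$ such sources, and we union bound over them. Fix a source $\Xb$ with light set $L$. The quantity to bound is $\sum_{x\in L}\Pr[\Xb=x]Z_x$. Rescale by $K=2^k$, setting $w_x=K\Pr[\Xb=x]\in[0,1]$ for $x\in L$, so that $\sum_x w_x\le K$ and the mean is at most $Kp$. The bad event is $\sum_x w_xZ_x> K\beta$, which requires a deviation of at least $K(\beta-p)$. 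The moment bound with $\mu\le Kp$ gives
\[
\pbra{\frac{Kpt+t^2}{K^2(\beta-p)^2}}^{t/2}
\]
per source, up to the constant. Using only the upper bound $\mu\le Kp$ is fine because the moment bound is monotone in $\mu$.

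Summing the two failure probabilities and using the hypothesis that the left side is below $1/8$ gives failure probability below $1$ (after the constant is accounted for), so some $h\in\Hcal$ works. The main obstacle is quoting the right moment inequality for non-identical weights in $[0,1]$ with the constant matching the $1/8$ threshold. I would cite the Bellare--Rompel / Schmidt--Siegel--Srinivasan form $\E[Y^t]\le 8(t\mu+t^2)^{t/2}$. If that constant makes $8\cdot(1/8)=1$ too tight, I would instead use the version with constant $\le 1$ for Bernoulli sums, which follows from comparison with the fully independent case plus a direct moment computation.
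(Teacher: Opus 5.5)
Your proposal is correct and is essentially the paper's proof. Both pick a random $\hb\in\Hcal$, apply the Bellare--Rompel tail bound for $t$-wise independent $[0,1]$-valued variables to the uniform sum and (after rescaling by $K$) to the light-point sum of each source, and take a union bound over the $2^{\binom{\ell}{\le d}n}$ sources. The factor $8$ in that bound is exactly what the strict hypothesis ``$<1/8$'' absorbs, so the total failure probability is $<1$ and your fallback moment bound is unnecessary; note that this only gives success probability $>0$, not the ``at least $3/4$'' you mention.
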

\begin{proof}
    Let $\hb$ be a function drawn uniformly at random from $\Hcal$, and define $\gb\colon\bin^n \rightarrow \bin$ by $\gb(x) = \indicator(\hb(x) = 0^m)$. 
    By the assumption on $\Hcal$, $\{\gb(x)\}_{x \in \bin^n}$ is a collection of $t$-wise independent random variables, and for each $x$, $\Pr[\gb(x) = 1] = 2^{-m}$.
    By using a tail inequality for sums of $t$-wise independent random variables (see \cite[Lemma 2.3]{bellare1994randomness}), we have
    \[
        \Pr_\gb[\Pr[\gb(\Ub^n) =1] \leq \alpha] \leq 8\left(\frac{Npt +t^2}{N^2(p-\alpha)^2 }\right)^{t/2}.
    \]
    This is the probability that $\gb$ fails to satisfy the first condition for being an $(\alpha, \beta, k)$-isolator.
    
    We will now estimate the probability that for some source, the second condition for being an $(\alpha, \beta, k)$-isolator is not satisfied. Fix any source $\Xb$ on $\bin^n$, and define $L = \{x : \Pr[\Xb = x] \leq 2^{-k}\}$. 
    We wish to show that with high probability over $\gb$, $\Pr[\gb(\Xb) = 1 \text{ and } \Xb \in L] \leq \beta$.
    Let $\Zb = \Pr[\gb(\Xb) = 1 \text{ and } \Xb \in L]$ be the random variable of interest. 
    For each $x \in L$, define the indicator random variable $\Zb_x = \indicator(\gb(x)= 1)$, each of which is a Bernoulli random variable with probability $p$ of being $1$. 
    Then $\Zb = \sum_{x \in L} \Pr[\Xb = x] \Zb_x$ and $\E[\Zb] = \Pr[\Xb \in L] \cdot p \leq p$. 
    
    By a tail inequality for $t$-wise independent random variables as above and using that $\Pr[\Xb = x] \leq 2^{-k}$ for all $x\in L$, we obtain
    \begin{align*}
        \Pr[\Zb \geq \beta] &= \Pr[\Zb \geq \E[\Zb] + \beta - \Pr[\Xb \in L]\cdot p] \\
        &\leq \Pr[\Zb \geq \E[\Zb] + \beta - p] \\
        &\leq 8\left(\frac{K p t + t^2}{K^2(\beta - p)^2}\right)^{t/2}.
    \end{align*}
    Now a union bound over the $2^{\binom{\ell}{\leq d}n}$ many degree-$d$ polynomial sources with $\ell$ inputs and $n$ outputs gives that the second condition for being an $(\alpha, \beta, k)$-isolator for such sources does not hold with probability at most $8\cdot 2^{\binom{\ell}{\leq d}n}\cdot \left(\frac{K p t + t^2}{K^2(\beta - p)^2}\right)^{t/2}$. Combining this with the failure probability for the first condition shows that $g$ fails to be an $(\alpha, \beta, k)$-isolator for such sources with probability at most 
    \[
        8\left(\left(\frac{Npt +t^2}{N^2(\alpha - p)^2 }\right)^{t/2} + 2^{\binom{\ell}{\leq d}n}\left(\frac{K p t + t^2}{K^2(\beta - p)^2}\right)^{t/2}\right) < 1
    \]
    by assumption. 
    Hence there exists some $\gb$ which is an $(\alpha, \beta, k)$-isolator, as desired.
\end{proof}

By iterating through a $t$-wise uniform family of hash functions, we must find an isolator in a reasonable amount of time.

\begin{corollary}\label{cor:f2_tind_rext}
    Let $k, n, d$, and $m$ be positive integers. Suppose $k \geq 10(d + \log n)$, $k \leq n$, and $m \leq 0.01k$. For $p = 2^{-m}$, there exist $\alpha = p - 2^{-\Omega(n)}$ and $\beta = p+2^{-\Omega(k)}$ such that there exists an $(\alpha, \beta, k)$-isolator $\iso$ for the class of degree-$d$ polynomials with $n$ outputs, which can be computed in time $2^{O\left(\binom{\Theta(k)}{\leq d}n^2\right)}$.
\end{corollary}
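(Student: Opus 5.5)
The plan is to combine the three preceding results. \Cref{lem:bdd_f2_tind_rext} gives an isolator for bounded-input sources. It is found by exhaustive search over a $t$-wise uniform hash family. \Cref{lem:f2_input_red} then lifts it to all degree-$d$ polynomial sources. Concretely, set $\ell = 4(k+1)$ and $p = 2^{-m}$. I would apply \Cref{lem:bdd_f2_tind_rext} with min-entropy parameter $k-1$ and $\ell$ inputs, with $\alpha = p - 2^{-cn}$ and $\beta' = p + 2^{-ck}$ for a small constant $c$ (say $c = 1/10$). This produces an $(\alpha,\beta',k-1)$-isolator for degree-$d$ sources with at most $\ell$ inputs; sources with fewer inputs embed into $\ell$ inputs by adding dummy variables. \Cref{lem:f2_input_red} then upgrades it to an $(\alpha, \beta' + 2^{-k}, k)$-isolator for all degree-$d$ sources, and $\beta \coloneqq \beta' + 2^{-k} = p + 2^{-\Omega(k)}$, as required.

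The main work is choosing the even integer $t$ so that the inequality in \Cref{lem:bdd_f2_tind_rext} holds. With $K = 2^{k-1}$, the second term has base
\[
\frac{Kpt + t^2}{K^2(\beta'-p)^2} \le \frac{t\, 2^{2ck}}{K} + \frac{t^2 2^{2ck}}{K^2}.
\]
This is at most $2^{-k/2}$ provided $t \le 2^{k/4}$ (using $m \le 0.01k$ and $c$ small). The second term is then at most $2^{\binom{\ell}{\le d} n - kt/4}$, which is below $1/16$ once $t \ge C\binom{\ell}{\le d} n / k$ for a suitable constant $C$. The first term, with $N = 2^n$ and $p - \alpha = 2^{-cn}$, has base at most $t\,2^{-(1-2c)n} + t^2 2^{-2(1-2c)n}$. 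This is tiny because $t \le 2^{k/4} \le 2^{n/4}$ and $k \le n$, so that term is negligible for any $t \ge 4$.

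The step I expect to be the main obstacle is checking that the two requirements on $t$ are compatible, i.e.\ $C\binom{4k+4}{\le d}n/k \le 2^{k/4}$. The left side is roughly $2^{d\log(O(k/d)) + \log n}$, using $\binom{4k+4}{\le d} \le (O(k)/d)^d$. The hypothesis $k \ge 10(d+\log n)$ should make this at most $2^{k/4}$. If the constants do not close directly, I would first try loosening the gap $\beta'-p$ to $2^{-ck}$ with smaller $c$. Failing that, I would allow the base to be $2^{-k/8}$ rather than $2^{-k/2}$, which trades a constant factor in the lower bound on $t$ for more room in the upper bound.

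For the running time, a standard $t$-wise uniform family $\Hcal$ from $\bin^n$ to $\bin^m$ (e.g.\ degree-$(t-1)$ polynomials over $\Fbb_{2^n}$, truncated to $m$ bits) has size $2^{O(tn)}$. Since $t = \Theta(\binom{\Theta(k)}{\le d} n/k)$, this size is $2^{O(\binom{\Theta(k)}{\le d} n^2)}$. For each candidate $h$ I would verify both isolator conditions by brute force, which takes $2^{O(\binom{\Theta(k)}{\le d} n)}$ time per candidate:
\begin{enumerate}
    \item compute $\Pr[g(\Ub^n)=1]$ in $2^n\poly(n)$ time;
    \item enumerate all $2^{\binom{\ell}{\le d}n}$ degree-$d$ maps on $\ell$ inputs; for each, tabulate its output distribution in $2^{\ell}\poly(n)$ time, identify its light points, and compute the relevant probability.
\end{enumerate}
\Cref{lem:bdd_f2_tind_rext} guarantees that some $h \in \Hcal$ passes both checks, so the search terminates with an isolator. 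The total time is $2^{O(\binom{\Theta(k)}{\le d} n^2)}$, as claimed.
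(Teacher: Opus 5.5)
Your route is the paper's. You apply \Cref{lem:bdd_f2_tind_rext} at min-entropy $k-1$ with $O(k)$ inputs, lift with \Cref{lem:f2_input_red} (paying $2^{-k}$ in $\beta$), and find the hash function by exhaustive search over a $t$-wise uniform family. However, the parameter check you flagged as the main obstacle does fail as written. The hypothesis $k \ge 10(d+\log n)$ does not give $\binom{4k+4}{\le d}n/k \le 2^{k/4}$. For example, take $n = k$ and $d = k/20$ (with $20 \mid k$), which is allowed once $k \ge 20\log k$. Then $\binom{4k+4}{\le d} \ge \binom{4k}{k/20} \ge 80^{k/20} > 2^{0.31k}$. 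For $d$ near $k/10$ the $t$ you need is about $2^{0.67k}$, up to polynomial factors.

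Demanding a base of $2^{-k/2}$ caps $t$ at roughly $2^{(1/2-2c)k}$ whatever $c$ is, so your first fallback (shrinking $c$) cannot help. Your second fallback, a base of $2^{-\gamma k}$, allows $t$ up to about $2^{(1-\gamma-2c)k}$. With your $c = 1/10$ and $\gamma = 1/8$, both the requirement and the cap are about $2^{0.675k}$. So the inequality is not actually established, and you would have to shrink $c$ as well and redo the estimate.

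The clean fix, which is what the paper does, is to ask only for a constant base. Take $t = 2\bigl(\binom{\ell}{\le d}n+4\bigr)$. Then a base of at most $1/4$ already makes the second term at most $2^{\binom{\ell}{\le d}n}\cdot 2^{-t} \le 2^{-8}$. Getting a base of at most $1/4$ only needs $t \le 2^{(1-2c)k-O(1)}$; the paper phrases this as $4t \le K^{0.99}$, with a smaller gap constant. This does follow from the hypothesis. Write $h$ for the binary entropy function. Since $\binom{5k}{\le d} \le 2^{5k\,h(d/5k)}$ and $n \le 2^{k/10-d}$, maximizing over $d \le k/10$ gives $\binom{5k}{\le d}n \le 2^{0.71k}$, so even $c = 1/10$ leaves room.

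This $t$ is about $k$ times larger than yours but still $O\bigl(\binom{\Theta(k)}{\le d}n\bigr)$. Your treatment of the first term is unaffected, since it only needs $t \le 2^{(1-2c)n-O(1)}$. Your running-time analysis also goes through, because the family has size $2^{O(tn)} = 2^{O(\binom{\Theta(k)}{\le d}n^2)}$.
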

\begin{proof}
    Set $\ell = 5k$, $K = 2^k$, and $t=2\left(\binom{\ell}{\leq d}n + 4\right)$; note that $K^{0.99} \geq 4t$.
    By using \cref{lem:bdd_f2_tind_rext} under the conditions on $k, d$ in the statement, there exists a function $\iso$ which is an $(\alpha', \beta', k-1)$-isolator for the class of degree-$d$ polynomial sources with $4(k+1) \leq 5k$ inputs and $n$ outputs, where $\alpha' = p - 2^{-\Omega(n)}$ and $\beta' = p+2^{-\Omega(k)}$.
    Furthermore by \cref{lem:f2_input_red}, $\iso$ is an $(\alpha, \beta, k)$-isolator for the class of degree-$d$ polynomial sources with $n$ outputs, where $\alpha = \alpha'=p - 2^{-\Omega(n)}$ and $\beta = \beta' + 2^{-k} = p+2^{-\Omega(k)}$.

    We compute such a function by going over any fixed $t$-wise uniform family of hash functions from $\bin^n$ to $\bin^m$ until we find a function that lets us construct an isolator as described in  \cref{lem:bdd_f2_tind_rext}. 
    It is well known that there is such a family $\Hcal$ of size $2^{tn}$ where each function in the family can be evaluated in $\poly(n, m, t)$ time (see, for instance, \cite[ Corollary 3.34]{vadhan2012pseudorandomness}).
    For any such fixed function $h \in \Hcal$, we check whether $g$ defined by $g(x) = \indicator(h(x) = 0^m)$ is an $(\alpha, \beta, k)$-isolator. 
    This can be done in time $2^{O\left(\binom{\ell}{\leq d}n\right)}$.
\end{proof}

By invoking \cref{thm:main} with the above isolator, we obtain an explicit hard distribution for polynomial sources, proving \Cref{thm:hard_dist_poly}.
We restate the theorem below for the reader's convenience.

\thmharddistpoly*

\begin{proof}
We choose parameters to optimize the final distance with respect to the distribution length $(n+1)t$ while ensuring the construction takes only $\poly(N)$ time.
Set $n = \floorbra{(\log N)^{1/(2+\lambda)}}$ for a constant $\lambda > 0$ to be determined later. Let $d = \log n$ and $t = \floorbra{d/\Delta} - 1$ so that $\Delta(t+1) \leq d$. By picking $\delta$ to be small enough, we have $t \geq 1$. Let $k = \ceilbra{20\log n}$ so that the condition $k \geq 10(\log n + d)$ in \cref{cor:f2_tind_rext} is satisfied. Let $m$ be the largest integer such that for $p = 2^{-m}$, we have $t \geq \frac{1}{p} \log \frac{1}{p}$. Since $t \leq d = \log n$ and $m \leq \log t$, we have $m \leq 0.01k$. Now let $\iso$ be the $(\alpha, \beta, k)$-isolator given by \cref{cor:f2_tind_rext} with the chosen parameters and $\alpha = p - 2^{-\Omega(n)}, \beta = p+2^{-\Omega(k)}$.

Let $\Xb \in \Xcal$. Note that $\add_{n, t}(\Xb)$ can be computed by a degree $\Delta(t+1) \leq d$ polynomial source since $\add_{n, t}$ can be computed by a degree-$(t+1)$ polynomial. By \cref{thm:main}, we have
\begin{align*}
    1 - \tvdist{\Db - \Xb} &\leq (1-\alpha)^{t} + \beta + (2t^3+1)2^{-(n-k)} \\
    &\leq (1-p+2^{-\Omega(n)})^t + p+ 2^{-\Omega(k)} + (2t^3+1)\cdot 2^{-\Omega(n)} \\
    &\leq (1-p)^t + t \cdot2^{-\Omega(n)} + p+ 2^{-\Omega(k)} + 2^{-\Omega(n)} \\
    &\leq \exp(-pt) + p + n^{-\Omega(1)} \\
    &\leq O(p) \leq O\left(\frac{\log t}{t}\right) \\
    &\leq O\left(\frac{\Delta}{\log \log N} \log \left(\frac{\log \log N}{\Delta}\right)\right).
\end{align*}

Finally, we show that $\Db$ can be sampled in time $\poly(N)$. 
The isolator $\iso$ can be computed in time $2^{O\left(\binom{\Theta(k)}{\leq d}n^2\right)}$. 
We have
\[
    \binom{\Theta(k)}{\leq d} \leq \left(\frac{e \cdot \Theta(k)}{d}\right)^d \leq C_0^{\log n} = n^{\log C_0}
\]
for some constant $C_0 > 1$. 
Set $\lambda = \log C_0$, so $\iso$ can be computed in time $2^{O(n^{2+\lambda})} = \poly(N)$. 
Given that $\iso$ is computable in $\poly(N)$ time, it is straightforward to see that $\Db$ can be sampled in $\poly(N)$ time.

We now specialize to the case of $t = 1$ and $\tilde{\Db} = (\Ub^n, \tilde{\iso}(\Ub^n))$, where $\tilde{\iso}$ is an $(1/2 - 2^{-\Omega(n)}, 1/2+n^{-\Omega(1)}, 20\log n)$-isolator. Invoking \cref{thm:main}, we obtain
\[
    \tvdist{\tilde{\Db} - \Yb} \geq 1- (1/2 + 2^{-\Omega(n)})^2 - (1/2 + n^{-\Omega(1)})- 2^{-\Omega(n)} \geq 1/4 - n^{-\Omega(1)}. \qedhere
\]
\end{proof}

Note that our hard distribution above has at most $N$ bits instead of exactly $N$ bits. To obtain a distribution with exactly $N$ bits, one can pad with some bits that are fixed to, say, $0$. It is clear that the padded distribution does not become easier for polynomial sources of degree $\Delta$ since TV distance cannot increase by applying a projection. It is also easy to see that the padded distribution is computable in $\poly(N)$ time if the original distribution is computable in $\poly(N)$ time.

\subsection{Circuit and Local Sources}\label{ssec:circuit}

We will now instantiate \Cref{thm:main} in the case of sources derived from shallow circuits.
Below, it will be helpful to recall that $\nc$ circuits are (families of) constant-depth boolean circuits with bounded fan-in \textsf{AND, OR,} and \textsf{NOT} gates, while $\ac$ circuits allow the \textsf{AND} and \textsf{OR} gates to take an arbitrary number of inputs (and are otherwise defined the same as $\nc$ circuits).

\subsubsection{Circuit sources}

We begin with the more general \emph{circuit sources} before considering \emph{local sources}.

\begin{definition}[Circuit Source]\label{def:ac0_source}
     An $n$-bit source $\Xb$ is a \emph{circuit (or $\ac$) source} if there exists an $\ac$ circuit $C\colon \{0, 1\}^r \to \{0, 1\}^n$ such that $\Xb = C(\Ub^r)$.
     The \emph{size} and \emph{depth} of a circuit source $C(\Ub^r)$ are quantified by the number of gates and depth, respectively, of the circuit $C$.
\end{definition}

We highlight that hard distributions are known for such classes: the output of every small circuit\footnote{We will usually use ``small" to mean $\poly(n)$, at least in the context of circuits.} source has TV distance $1 - \exp(-\poly(n))$ from the uniform distribution over the codewords of any good code \cite{lovett2011bounded, beck2012large} and distance $\frac{1}{2} - \exp(-\poly(n))$ from the uniform distribution over input-output pairs of a particular extractor \cite{viola2014extractors, viola2020sampling}.
Our goal in this section is to recover similar bounds in a unified way.
We will ultimately obtain the following bounds.

\begin{restatable}{theorem}{thmharddistac}\label{thm:hard_dist_ac0}
    Let $\Delta$ be a positive integer. There exist $c, c', c_1 > 0$ depending on $\Delta$ such that the following holds. 
    Let $N$ and $S$ be positive integers, with  $N \leq S \leq \exp(N^{c})$. 
    There exist positive integers $n, t$, and $d$ satisfying $(n + 1)t \leq N$ and an isolator $\iso\colon \{0, 1\}^n \to \{0, 1\}$ for depth-$d$ circuit sources of size $\exp(n^{c'})$ on $\{0, 1\}^n$ with suitable parameters such that the following holds.

    Define the distribution
    \[
        \Db = (\Ub_1, \Ub_2, \dots \Ub_t, \iso(\Ub_1), \iso(\Ub_2), \dots \iso(\Ub_t)),
    \]
    where $\Ub_1, \Ub_2, \dots \Ub_t$ are independent copies of $\Ub^n$.
    Let $\Xcal$ be the class of all depth-$\Delta$, size-$S$ circuit sources.
    Then for any $\Xb \in \Xcal$,
    \[
        \tvdist{\Db - \Xb} \geq 1 - O\left(\frac{(\log S)^{{c_1}}}{N}\log\left(\frac{N}{(\log S)^{{c_1}}}\right)\right).
    \]   
    Moreover, the distribution $\Db$ can be sampled in $\poly(N)$ time.

    Additionally, there exists an isolator $\tilde{\iso} \colon \bin^n \to \bin$ (with possibly different parameters) such that the distribution $\tilde{\Db} = (\Ub^n, \tilde{\iso}(\Ub^n))$ satisfies $\tvdist{\tilde{\Db} - \Yb} \geq 1/4 - 2^{-n^{\Omega(1)}}$ for all depth-$\Delta$, size-$S$ circuit sources $\Yb$ on $\bin^{n+1}$.
    $\tilde{\Db}$, too, can be sampled in $\poly(N)$ time.
\end{restatable}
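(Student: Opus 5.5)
The plan mirrors the proof of \Cref{thm:hard_dist_poly}: construct an isolator for $\ac$ sources, check that the class $\Ycal$ of \Cref{thm:main} stays inside the class the isolator handles, and then tune parameters.

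\emph{Closure under $\add$.} Given a depth-$\Delta$, size-$S$ circuit $C$ producing $\Xb = (\Xb_1,\dots,\Xb_t,\xb_1,\dots,\xb_t)$, the source $\add_{n,t}(\Xb)$ is produced by appending $n$ fresh random bits $u$. Output bit $j$ is then
\[
\bigvee_{i}\Bigl(\xb_i \wedge \bigwedge_{i'<i}\neg \xb_{i'} \wedge (\Xb_i)_j\Bigr) \vee \Bigl(\bigwedge_i \neg\xb_i \wedge u_j\Bigr).
\]
This is a circuit of depth $\Delta+O(1)$ and size $S+O(nt^2)$. So it suffices to set $d=\Delta+O(1)$ and to have an isolator for depth-$d$ circuit sources of size $\exp(n^{c'})$ on $\bin^n$, with $\exp(n^{c'})\ge S+O(nt^2)$.

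\emph{Isolator construction (main obstacle).} I would argue that, for $\ac$ sources, robustness follows from min-entropy polarization \cite{viola2020sampling}, and then combine it with a derandomized brute force as in \Cref{lem:bdd_f2_tind_rext}. Viola shows that a size-$s$, depth-$d$ circuit source $C(\Ub^r)$ is $2^{-n^{\Omega(1)}}$-close to a mixture $\sum_\rho c_\rho C|_\rho(\Ub)$ over at most $2^{n^{1-\Omega(1)}}$ restrictions $\rho$. Each component is either constant or has min-entropy at least $n^{\Omega(1)}$, and by random restrictions and switching it is a local (or small-depth decision tree) source. Set $k=n^{1-\gamma}$ for a small $\gamma$, above the logarithm of the number of restrictions. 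The constant components together contribute at most $2^{n^{1-\Omega(1)}}$ points. Any such point carrying mass at least $2^{-k}$ is heavy, and the constant components landing on light points carry total mass at most $2^{n^{1-\Omega(1)}}\cdot 2^{-k}$. The remaining mass comes from high-min-entropy components, which are themselves in a small class: local sources with few inputs after input reduction, as in \Cref{lem:input_redux}, or directly by counting decision-tree sources. I would then take $\iso(x)=\indicator(h(x)=0^m)$ with $h$ drawn from a $t'$-wise uniform family. Exactly as in \Cref{lem:bdd_f2_tind_rext}, $\Pr[\iso(\Ub^n)=1]\ge p-2^{-\Omega(n)}$, and after a union bound over the at most $2^{n^{O(1)}}$ small high-entropy component sources, each component satisfies $\Pr[\iso=1]\le p+2^{-n^{\Omega(1)}}$. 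Averaging over the mixture gives an $(\alpha,\beta,k)$-isolator with $\alpha=p-2^{-\Omega(n)}$ and $\beta=p+2^{-n^{\Omega(1)}}$. Its description is found by enumeration in time $2^{n^{O(1)}}$. The delicate part is making the polarization bounds quantitatively compatible with $k$ while keeping the component class countable, so that the brute-force search is feasible.

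\emph{Parameters.} Choose $n=(\log S)^{c_1}$, large enough that $\exp(n^{c'})\ge S+O(Nn)$, but also $n\le N^{c''}$ so that the $2^{n^{O(1)}}$ search takes $\poly(N)$ time. This is where $S\le\exp(N^{c})$ is used. Take $t=\lfloor N/(n+1)\rfloor$ and let $m$ be maximal with $t\ge 2^m m$, i.e.\ $p\approx \log t/t$. Then \Cref{thm:main} gives
\[
1-\tvdist{\Db-\Xb}\le e^{-pt}+p+2^{-n^{\Omega(1)}}\cdot\poly(t)=O(p)=O\Bigl(\tfrac{n}{N}\log\tfrac{N}{n}\Bigr),
\]
which is the claimed bound. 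For the second part, take $t=1$ and $m=1$, so $\tilde{\iso}$ has $\alpha,\beta=\tfrac12\pm 2^{-n^{\Omega(1)}}$. The "in particular" clause of \Cref{thm:main} then gives $1/4-2^{-n^{\Omega(1)}}$. Sampling $\Db$ only requires evaluating $\iso$, which takes $\poly(N)$ time.
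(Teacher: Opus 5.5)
\textbf{Gap: the isolator is not explicit, so the $\poly(N)$-time sampling claim fails.} Your closure argument for $\add_{n,t}$ matches the paper. So does your polarization-plus-averaging step, which is \Cref{lem:pol+ext->rob_ext2}. The problem is how you obtain an extractor for the high-entropy components.

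You find $\iso$ by searching a $t'$-wise uniform hash family and checking each candidate against all $2^{\Omega(n\log n)}$ depth-$O(1)$ decision forests. That costs at least $2^{\Omega(n)}$ time. However, \Cref{thm:entropy_polarization} applies only to circuits of size at most $\exp(n^{c/d})$ with $c/d<1$, while $\add_{n,t}(\Xb)$ has size at least $S\ge N$. This forces $n\ge(\log N)^{d/c}$, so $2^{\Omega(n)}$ is superpolynomial in $N$ for every admissible $S$.

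Your sentence ``$n\le N^{c''}$ so that the $2^{n^{O(1)}}$ search takes $\poly(N)$ time'' is exactly where this breaks. $2^{N^{c''}}$ is not polynomial; polynomial time would need $n^{O(1)}=O(\log N)$, which contradicts $n\ge(\log N)^{d/c}$. The brute-force route works in \Cref{thm:hard_dist_poly} only because there $n$ can be as small as $(\log N)^{1/(2+\lambda)}$. Here your argument gives at best a non-explicit (quasi-polynomially samplable) $\Db$.

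The missing ingredient is an \emph{explicit} extractor for the polarized components. The paper uses the Rao / De--Watson / Viola extractor for depth-$O(1)$ decision forests with min-entropy $n^{0.9}$ (\Cref{thm:forest_ext}), which is computable in $\poly(n)$ time. \Cref{lem:pol+ext->rob_ext2} shows that polarization makes any such extractor robust. Hence $\iso(x)=\indicator(\ext(x)=0^m)$ needs no search, and $\Db$ and $\tilde{\Db}$ are samplable in $\poly(N)$ time.

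\textbf{Secondary, repairable parameter error.} Viola's polarization gives $\ell\le 2^{n-n^{\Omega(1)}}$ components, not $2^{n^{1-\Omega(1)}}$. With $k=n^{1-\gamma}$, your bound $\ell\cdot 2^{-k}$ on the mass of constant components sitting on light points is therefore vacuous. You need $k=n-n^{\eps}$ with $\eps$ below the exponent in $\ell$, as in \Cref{cor:iso_ac0}.

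The term $2^{-(n-k)}(2t^3+1)$ from \Cref{thm:main} is then only $2^{-n^{\eps}}\poly(t)$. Making it $O(p)$ requires $t\le 2^{n^{\eps}/4}$. This follows from $t\le N$ once $c_1$ is large enough that $(\log N)^{c_1\eps}\gg\log N$. The hypothesis $S\le\exp(N^{c})$ is then used only to guarantee $n+1\le N$, i.e., $t\ge 1$.

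Separately, the affine substitution in \Cref{lem:input_redux} destroys locality and decision-forest structure. So only your alternative, directly counting depth-$O(1)$ forests (which read $O(n)$ inputs), bounds the component class.
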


Recall our proof strategy for \Cref{thm:hard_dist_ac0} is to explicitly construct these isolators and apply \Cref{thm:main}.
We will need two existing tools for this.
The first is an explicit construction of an extractor for low-weight affine sources given by Rao \cite{rao2009extractors} and further optimized in \cite{viola2014extractors, de2012extractors}.
The work of \cite{de2012extractors, viola2014extractors} showed that such optimized versions also work as extractors for distributions generated by low-depth \emph{decision forests}, where we call a function $f\colon \bin^r \to \bin^n$ a depth-$d$ decision forest if each output bit is determined by a decision tree of depth at most $d$. 

\begin{theorem}[{\cite{rao2009extractors, de2012extractors, viola2014extractors}}] \label{thm:forest_ext}
    For a small constant $c > 0$, there exists an explicit extractor $\ext\colon\bin^n \rightarrow \bin^m$ for sources generated by depth-$(c \log n)$ decision forests with min-entropy $n^{0.9}$. Here, $m = n^{\Omega(1)}$ and the error is $2^{-m^{\Omega(1)}}$.
\end{theorem}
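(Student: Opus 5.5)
The plan is to go through the structure used in \cite{rao2009extractors, de2012extractors, viola2014extractors}: reduce decision-forest sources to \emph{low-weight affine sources}, and then apply an explicit extractor for the latter. Concretely, I would take $\ext$ to be Rao's extractor for affine sources of dimension $k'$ whose subspace has a basis of vectors of Hamming weight at most $w$. Rao's construction (as optimized in \cite{de2012extractors, viola2014extractors}) outputs $m = (k')^{\Omega(1)}$ bits with error $2^{-m^{\Omega(1)}}$, provided $w \le (k')^{c_0}$ for a small constant $c_0$. It then suffices to show two things. First, every depth-$(c\log n)$ forest source with min-entropy $n^{0.9}$ is $2^{-n^{\Omega(1)}}$-close to a convex combination of such affine sources with $k' = n^{0.9 - O(c)}$ and $w = n^{O(c)}$. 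Second, extractors are preserved under convex combinations up to the closeness error. Choosing $c$ small makes $w \le (k')^{c_0}$.

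For the reduction, note that a depth-$d$ decision tree reads at most $2^d = n^{c}$ input bits, so the source $f(\Ub^r)$ is $n^c$-local. Call an input bit \emph{heavy} if more than $n^{2c}$ outputs depend on it. Since there are at most $n\cdot n^c$ input-output dependency edges, there are at most $n^{1-c}$ heavy inputs. I would first fix the heavy inputs to every possible value, which writes the source as a convex combination. The total min-entropy lost is at most $n^{1-c} \ll n^{0.9}$, so by averaging most restrictions still leave min-entropy $\approx n^{0.9}$; I would use a Markov-type argument over the restrictions, absorbing the bad fraction into the error.

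On the restricted source, every remaining input influences at most $n^{2c}$ outputs and every output depends on at most $n^c$ inputs. Greedily pick a set $I$ of live inputs whose output neighbourhoods are pairwise disjoint and which are pairwise ``far apart'' in the bipartite dependency graph. Since the entropy $n^{0.9}$ forces many outputs to be non-constant, we can find $|I| \ge n^{0.9}/n^{O(c)}$ such inputs. Now condition on all inputs outside $I$. Each output then depends on at most one variable $x_i$ with $i \in I$, so it is a constant, $x_i$, or $\neg x_i$. Hence the conditioned source is an affine source whose basis vectors are indexed by $i \in I$ and have weight at most $n^{2c}$.

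The main obstacle is ensuring that this last conditioning leaves dimension about $n^{0.9 - O(c)}$, i.e.\ that for most fixings many $x_i$ with $i\in I$ actually affect some output. For this I would follow the Viola/De--Watson argument. Min-entropy $k$ of a $\ell$-local source implies there are at least $k/\ell$ inputs each of which is ``relevant'' with noticeable probability over the other inputs: the entropy is at most the expected number of inputs that are sensitive under a random fixing. Applying Chernoff to the near-independent relevance events, which is possible thanks to the pairwise-far choice of $I$, most fixings leave $\Omega(|I|)$ relevant variables, with failure probability $2^{-n^{\Omega(1)}}$. Combining the errors and Rao's guarantee gives error $2^{-m^{\Omega(1)}}$ with $m = n^{\Omega(1)}$. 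Explicitness follows because Rao's extractor is explicit, and the reduction is only used in the analysis.
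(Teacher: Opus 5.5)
The paper does not prove this theorem; it imports it from \cite{rao2009extractors, de2012extractors, viola2014extractors}. Your outline is the route of those works: view the forest as $n^c$-local, write the source as a mixture of low-weight affine sources, and apply Rao's extractor. However, your parameters fail at the heavy-input step, and the failure is not cosmetic.

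\textbf{The gap.} With threshold $n^{2c}$ there can be up to $n^{1-c}$ heavy inputs. The bound $n^{1-c}\ll n^{0.9}$ holds only when $c>0.1$, whereas you need $c$ small to get $w=n^{O(c)}\le (k')^{c_0}$. Fixing that many inputs can destroy all of the entropy. In fact, your intermediate claim is false for small $c$. Take $n^{0.9}$ uniform bits and copy each into its own block of $n^{0.1}$ outputs. This is a depth-$1$ forest source of min-entropy $n^{0.9}$, uniform on a subspace $V$ whose nonzero vectors all have weight at least $n^{0.1}$. Any affine source with a basis vector $v$ of weight below $n^{0.1}$ has $v\notin V$. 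Pairing $y$ with $y+v$ then shows that it puts at most half its mass in $V$. Hence every mixture of positive-dimensional affine sources of weight $n^{O(c)}<n^{0.1}$ is $1/2$-far from this source. Weight of order $n/k=n^{0.1}$ is therefore forced regardless of $c$. Since the relevant components have dimension at most $n^{0.9}$, an extractor guaranteed only for $w\le (k')^{c_0}$ needs $c_0\ge 1/9$ even for this example; a ``small constant $c_0$'' does not suffice.

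\textbf{A repair.}
\begin{enumerate}
\item Set the heavy threshold to about $2n^{1+c}/n^{0.9}=2n^{0.1+c}$. Then at most $n^{0.9}/2$ inputs get fixed, and min-entropy $\Omega(n^{0.9})$ survives except with probability $2^{-\Omega(n^{0.9})}$ over the fixing.
\item Total influence is at least the Shannon entropy, by edge isoperimetry. Combined with the fact that at most $n^{1+c}$ inputs matter at all, this yields $\Omega(n^{0.9})$ inputs of influence $\Omega(n^{-0.1-c})$.
\item Choose $I$ greedily \emph{among these high-influence inputs} (not arbitrary live inputs), at pairwise distance more than $4$ in the input--output graph. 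The relevance events are then exactly independent, each with probability equal to the influence, so Chernoff applies.
\item This yields affine sources of dimension about $n^{0.6-O(c)}$ and weight about $n^{0.1+O(c)}$, i.e.\ weight roughly the $1/6$ power of the dimension.
\end{enumerate}
You must then verify that the low-weight affine extractor you invoke, in the form optimized in \cite{de2012extractors, viola2014extractors}, tolerates weight this large relative to the dimension. A guarantee of the form $w\le (k')^{c_0}$ with $c_0$ small does not.
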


The second tool we require is Viola's min-entropy polarization result mentioned in \Cref{sec:main_pf}.

\begin{theorem}[\cite{viola2020sampling}]\label{thm:entropy_polarization}
    There exists a constant $c > 0$ such that the following holds. 
    If $C\colon \bin^r \rightarrow \bin^n$ is a depth-$d$ circuit of size at most $\exp(n^{c/d})$, then $C(\Ub^r)$ is $2^{-n^{\Omega(1)}}$-close to a distribution $\Yb$ such that
    \begin{itemize}
        \item $\Yb$ can be written as a mixture $\Yb = \sum_{i = 1}^\ell \frac{1}{\ell} \Yb_i$ of $\ell \leq 2^{n-n^{\Omega(1)}}$ distributions,
        \item For every $i \in [\ell]$, $\Yb_i$ is generated by a depth-$O(1)$ decision forest, and
        \item For every $i \in [\ell]$, $\Yb_i$ is constant or has min-entropy at least $n^{0.9}$.
    \end{itemize}    
\end{theorem}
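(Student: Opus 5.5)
The plan is to follow Viola's two-stage strategy. First use random restrictions to collapse $C$ to shallow decision forests. Then polarize each forest by adaptively fixing a few more input bits. Finally regroup the resulting pieces into a uniform mixture with few components.

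\emph{Stage 1 (switching).} Pick a random restriction $\rho$ leaving each input alive with probability $p=n^{-c_0}$. Apply a multi-output switching lemma (H\aa stad's multi-switching lemma, or the Impagliazzo--Matthews--Paturi version) $d$ times. With probability $1-2^{-n^{\Omega(1)}}$ over $\rho$, every one of the $n$ outputs of $C|_\rho$ is then computed by a decision tree of depth $O(1)$; after a further common partial decision tree the depth is $D=O(1)$. The size bound $\exp(n^{c/d})$ is exactly what makes the union bound over all gates go through. Since $C(\Ub^r)$ is the average over $\rho$ and over its fixed bits of $C|_\rho(\Ub^{\text{free}})$, it is $2^{-n^{\Omega(1)}}$-close to an average of depth-$D$ forest sources.

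\emph{Stage 2 (polarizing a forest).} Fix a depth-$D$ forest $F$ and greedily choose a maximal family of non-constant output trees with pairwise disjoint query sets.
\begin{itemize}
\item If the family has size at least $s:=n^{0.95}$, those outputs are independent. Each is non-constant of depth $D$, so it takes each value with probability at least $2^{-D}$. Hence $H_\infty(F(\Ub))\ge s\log\frac{1}{1-2^{-D}}\ge n^{0.9}$.
\item Otherwise, the family queries at most $sD2^D$ variables, and these meet every non-constant tree. Averaging over assignments to them yields forests in which every non-constant tree has strictly smaller depth.
\end{itemize}
Iterating at most $D$ times expresses $F(\Ub)$ as a uniform average, over assignments to at most $O_D(n^{0.95})$ extra variables, of forests that are each either constant or of min-entropy at least $n^{0.9}$.

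\emph{Stage 3 (counting components).} This is the step I expect to be the main obstacle. The mixture above is indexed by $\rho$, the bits $\rho$ fixes, and the Stage-2 fixings, and $r$ is unbounded, so the raw number of components is not at most $2^{n-n^{\Omega(1)}}$. I would regroup as follows.
\begin{itemize}
\item All constant components with the same output value $z\in\bin^n$ are merged, leaving at most $2^n$ constant pieces. Their total weight on any single $z$ is what it is, which is why $\ell$ is close to, but not at most, $2^n$.
\item The high-entropy components are merged arbitrarily into blocks. A mixture of sources of min-entropy at least $n^{0.9}$ still has min-entropy at least $n^{0.9}$.
\item To make all weights equal $1/\ell$, round every weight down to a multiple of $1/\ell$ with $\ell=2^{n-n^{c_1}}$, and absorb the leftover mass into the high-entropy block. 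The leftover is dumped there only if its own min-entropy is large. Otherwise, splitting $\ell$ so that $2^n/\ell=2^{n^{c_1}}$ ensures the rounding error is at most $2^n/\ell\cdot$(per-atom loss). The total error must come out $2^{-n^{\Omega(1)}}$.
\end{itemize}
The delicate bookkeeping is balancing three things against each other: the rounding loss from roughly $2^n$ constant atoms, the switching-lemma failure probability, and the requirement $\ell\le 2^{n-n^{\Omega(1)}}$. I would first try choosing the constant-atom granularity as $2^{-(n-n^{c_1})}$ and charging atoms of smaller mass to the error term, where they contribute at most $2^n\cdot 2^{-(n-n^{c_1})}$. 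This is too large as stated, so the threshold must instead be set so that low-mass constant atoms are reassigned to high-entropy blocks rather than discarded, which keeps those blocks' min-entropy at least $n^{0.9}$.
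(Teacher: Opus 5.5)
The paper takes this theorem from Viola without proof, so your argument has to stand on its own. Stages~1 and~2 are essentially sound, apart from a small fix noted at the end. Stage~3, however, has a genuine gap, in two parts.

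\textbf{Merging is not allowed.} The second bullet requires every $\Yb_i$ itself to be generated by a depth-$O(1)$ decision forest. This is exactly what is used later: \Cref{lem:pol+ext->rob_ext2} applies the forest extractor to each $\Yb_i$. A mixture of many forest sources is generally not such a source, and neither is a forest source mixed with leftover point masses at non-dyadic weights; already choosing among $2^j$ forests costs $j$ extra depth. So the uniform mixture must be formed from single restricted forests. One way is to let $\Yb_1,\dots,\Yb_\ell$ be $\ell$ i.i.d.\ samples of pieces of your fine mixture. For the high-entropy pieces this is harmless: by Cauchy--Schwarz the expected $\ell_1$ error is at most $\sqrt{2^n\cdot 2^{-n^{0.9}}/\ell}$, which is tiny for $\ell=2^{n-n^{c_1}}$ with $c_1<0.9$.

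\textbf{The constant pieces cannot be fixed by reallocating weights.} Each constant $\Yb_i$ has weight $1/\ell\ge 2^{-n+n^{\Omega(1)}}$. Non-negligible constant mass spread over many values, each well below $1/\ell$, can therefore be matched neither by constant components nor, by the first point, by forest components. Your estimate $2^n\cdot 2^{-(n-n^{c_1})}$ is precisely this obstruction, and nothing in Stages~1--2 rules it out.

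The missing ingredient is a structural fact: constant pieces essentially only land on heavy outputs of $C(\Ub^r)$. For instance, let $\rho$ be the composed $p$-random restriction, $A_z=C^{-1}(z)$, $\mu_z=\Pr[C(\Ub^r)=z]$, and $g_z(\rho)=\Pr_u[C|_\rho(u)=z]$ with $u$ uniform on the free variables.
\begin{itemize}
\item $\E_\rho[g_z(\rho)^2]$ is the noise stability of $\indicator_{A_z}$ at correlation $1-p$. By the small-set expansion theorem it is at most $\mu_z^{2/(2-p)}\le \mu_z^{1+p/2}$.
\item A constant piece with value $z$ reached after fixing at most $a$ further variables (common partial decision tree plus Stage~2) forces $g_z(\rho)\ge 2^{-a}$. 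The pieces with value $z$ coming from a given $\rho$ have total weight at most $g_z(\rho)$.
\item Hence the constant mass at $z$ is at most $2^{a}\,\E_\rho[g_z(\rho)^2]\le 2^{a}\mu_z^{1+p/2}$.
\item Summing over $z$ with $\mu_z\le 2^{-n^{0.99}}$ gives at most $2^{a-p\,n^{0.99}/2}$. This is negligible once the constants are chosen so that $a\ll p\,n^{0.99}$.
\item The remaining constant mass sits on at most $2^{n^{0.99}}\ll \ell$ values, where sampling (or rounding) costs only $O\bigl(\sqrt{2^{n^{0.99}}/\ell}\bigr)$.
\end{itemize}
Some estimate of this kind is needed to complete the proof; bookkeeping alone cannot do it.

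The small fix in Stage~2: fixing variables that meet every non-constant tree need not lower that tree's depth, since the fixed variable may lie off some root-to-leaf path. Measure progress instead by the number of variables a tree queries, which is at most $2^D-1$, and iterate up to $2^D$ times.
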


We will use \Cref{thm:entropy_polarization} to show that the extractor from \Cref{thm:forest_ext} also works as a robust extractor, from which we can derive an isolator using \Cref{fct:iso_from_robust_ext}.

\begin{lemma} \label{lem:pol+ext->rob_ext2}
    Let $\Xcal, \Ycal$ be classes of distributions on $\bin^n$. 
    Suppose $\Ycal$ contains all constant distributions and the uniform distribution. 
    Additionally, suppose every source $\Xb \in \Xcal$ is $\gamma$-close to a distribution $\Yb$ such that
    \begin{itemize}
        \item $\Yb$ can be written as a mixture $\Yb = \sum_{i = 1}^\ell \frac{1}{\ell} \Yb_i$ where $\Yb_i \in \Ycal$ for all $i \in [\ell]$, and
        \item For all $i \in [\ell]$, $\Yb_i$ is either constant or has min-entropy at least $k$.
    \end{itemize}
    Let $k, k'$ be arbitrary positive reals.
    If $\ext$ is an $(\eps, k)$-extractor for $\Ycal$, then $\iso\colon\bin^n \to \bin$ defined by $\iso(x) = \indicator(\ext(x) = 0^m)$ is a $(2^{-m} - \eps, 2^{-m}+\eps+\ell \cdot 2^{-k'} + 2\gamma, k')$-isolator for $\Xcal$.
\end{lemma}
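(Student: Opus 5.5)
The plan is to verify the two isolator conditions separately. The first condition is immediate. Since $\Ycal$ contains the uniform distribution, which has min-entropy $n \ge k$ (we may assume $k\le n$, otherwise the extractor hypothesis is vacuous on non-constant sources and only the statement for constants matters), the extractor guarantee gives $\tvdist{\ext(\Ub^n)-\Ub^m}\le\eps$. Hence
\[
\Pr[\iso(\Ub^n)=1]=\Pr[\ext(\Ub^n)=0^m]\ge 2^{-m}-\eps .
\]
If $k>n$ we instead note the uniform distribution is still in $\Ycal$. In that regime we would simply read the hypothesis as applying to it; this edge case is not the point of the lemma.

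For the second condition, fix $\Xb\in\Xcal$ with light set $L=\{x:\Pr[\Xb=x]\le 2^{-k'}\}$ and the $\gamma$-close mixture $\Yb=\frac1\ell\sum_i\Yb_i$. First, pass from $\Xb$ to $\Yb$. The event $\{x\in L,\ \iso(x)=1\}$ is a fixed set of strings, since $L$ is defined from $\Xb$, not $\Yb$. Therefore
\[
\Pr_{\Xb}[x\in L,\ \iso(x)=1]\le \Pr_{\Yb}[x\in L,\ \iso(x)=1]+\gamma .
\]
Next, split the mixture by index $i$ into two cases.

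\textbf{Non-constant components.} Each such $\Yb_i$ has min-entropy at least $k$, so the extractor gives $\Pr[\ext(\Yb_i)=0^m]\le 2^{-m}+\eps$. We drop the constraint $x\in L$ here.

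\textbf{Constant components.} Let $\Yb_i$ be the point mass on $c_i$. It contributes $\frac1\ell\indicator(c_i\in L)$. Let $C=\{c_i: \Yb_i\text{ constant},\ c_i\in L\}$ be the set of relevant constants, and let $I$ be the set of indices $i$ with $\Yb_i$ constant and $c_i\in L$. The total contribution of these constants is $\frac{|I|}{\ell}\le \Pr[\Yb\in C]$. This in turn satisfies
\[
\Pr[\Yb\in C]\le \Pr[\Xb\in C]+\gamma\le |C|\,2^{-k'}+\gamma\le \ell\, 2^{-k'}+\gamma,
\]
using $|C|\le\ell$ and the fact that each point of $C$ is light for $\Xb$.

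Summing the two cases gives
\[
\Pr_{\Yb}[x\in L,\ \iso(x)=1]\le \Bigl(1-\tfrac{|I|}{\ell}\Bigr)(2^{-m}+\eps)+\tfrac{|I|}{\ell}\le 2^{-m}+\eps+\ell 2^{-k'}+\gamma .
\]
Adding the first $\gamma$ from the passage $\Xb\to\Yb$ yields $\beta=2^{-m}+\eps+\ell2^{-k'}+2\gamma$, exactly the claimed isolator parameters.

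The only delicate step is the constant components. The lightness condition refers to $\Xb$, but the heavy mass lives in $\Yb$. The point is to bound the mixture weight on light constants through $\Xb$'s probabilities. This costs a second $\gamma$, which explains the factor $2\gamma$ in the statement.
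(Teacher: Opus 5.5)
Your proposal is correct and follows essentially the same argument as the paper. Both pass from $\Xb$ to the $\gamma$-close mixture $\Yb$, bound the high-min-entropy components by the extractor guarantee, and control the light constant components via the set of such constants (the paper's $S$, your $C$), whose $\Yb$-mass is at most its $\Xb$-mass plus $\gamma$, i.e., at most $\ell 2^{-k'}+\gamma$. The only cosmetic difference is that the paper gets the first isolator condition by invoking \Cref{fct:iso_from_robust_ext} rather than checking $\Pr[\ext(\Ub^n)=0^m]\ge 2^{-m}-\eps$ directly; like you, it tacitly assumes $k\le n$.
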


\begin{proof}
    Consider an arbitrary $\Xb \in \Xcal$, and define $L = \{x \in \bin^n: \Pr[\Xb = x] \leq 2^{-k'}\}$.
    We will verify the robustness condition required to use \cref{fct:iso_from_robust_ext} by bounding $\Pr[\Xb \in L \text{ and } \ext(\Xb) = 0^m]$.
    Let $\Yb = \sum_{i = 1}^\ell \frac{1}{\ell} \Yb_i$ be a mixture with the properties given by the assumption.
    Expanding the mixture and applying our distance assumption, we find that
    \begin{align*}
        \Pr[\Xb \in L \text{ and } \ext(\Xb) = 0^m] & \leq \Pr[\Yb \in L \text{ and } \ext(\Yb) = 0^m] + \gamma \\
        &\leq \sum_{i = 1}^\ell \frac{1}{\ell}\Pr[\Yb_i \in L \text{ and } \ext(\Yb_i) = 0^m] + \gamma.
    \end{align*}
    For each $\Yb_i$ with min-entropy $k$, we have $\Pr[\Yb_i \in L \text{ and } \ext(\Yb_i) = 0^m] \leq \Pr[\ext(\Yb_i) = 0^m] \leq 2^{-m} + \eps$ by the extractor guarantee.
    Additionally, for each $\Yb_i$ fixed to a constant outside $L$, we have $\Pr[\Yb_i \in L \text{ and } \ext(\Yb_i) = 0^m] = 0$.

    To analyze the remaining $\Yb_i$'s, let $S = \{x : x \in L \text{ and } \Yb_i \text{ is fixed to } x \text{ for some } i \in [\ell]\}$. Observe that $|S| \leq \ell$ since this is the number of parts in the mixture. 
    The total contribution to $\sum_{i} \frac{1}{\ell}\Pr[\Yb_i \in L \text{ and } \ext(\Yb_i) = 0^m]$ of all the constant points supported on $L$ is at most $\Pr[\Yb \in S]$. 
    Using our TV distance assumption, we can bound this quantity by $\Pr[\Yb \in S] \leq \Pr[\Xb \in S] + \gamma \leq |S|/2^{k'} + \gamma \leq \ell/2^{k'} + \gamma$, where the second inequality uses that $S$ only contains elements from $L$.

    Combining these three cases for $\Yb_i$, we obtain
    \[
        \Pr[\Xb \in L \text{ and } \ext(\Xb) = 0^m] \leq \frac{1}{2^m} + \eps + \frac{\ell}{2^{k'}}+2\gamma.
    \]
    Using \cref{fct:iso_from_robust_ext} finishes the proof.
\end{proof}

Instantiating \Cref{lem:pol+ext->rob_ext2} with the polarization property of \Cref{thm:entropy_polarization} and the extractor in \Cref{thm:forest_ext} produces the following corollary.

\begin{corollary}\label{cor:iso_ac0}
    There is a constant $c > 0$ such that the following holds. Let $d$ be a positive integer (treated as a constant). There exists $m_0 = n^{\Omega(1)}$ such that  for any positive integer $m \leq m_0$, there exists an explicit $(2^{-m} - 2^{-n^{\Omega(1)}}, 2^{-m} + 2^{-n^{\Omega(1)}}, n - n^{\Omega(1)})$-isolator for the class $\Xcal$ of depth-$d$ circuit sources of size at most $\exp(n^{c/d})$.
\end{corollary}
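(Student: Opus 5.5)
The plan is to apply \Cref{lem:pol+ext->rob_ext2} with $\Xcal$ the class of depth-$d$, size-$\exp(n^{c/d})$ circuit sources. We take $\Ycal$ to be the class of distributions generated by depth-$O(1)$ decision forests on $\bin^n$. This $\Ycal$ contains every constant distribution (depth-0 trees) and the uniform distribution (each output bit is a depth-1 tree querying a fresh input bit). The polarization hypothesis of the lemma comes directly from \Cref{thm:entropy_polarization}. It gives $\gamma = 2^{-n^{\Omega(1)}}$, a uniform mixture of $\ell \le 2^{n-n^{\Omega(1)}}$ forest distributions $\Yb_i$, and each $\Yb_i$ is constant or has min-entropy at least $k = n^{0.9}$. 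For the extractor we use \Cref{thm:forest_ext}. Since $O(1) \le c\log n$ for large $n$, it is an $(\eps, n^{0.9})$-extractor for $\Ycal$ with output length $m_0 = n^{\Omega(1)}$ and error $\eps = 2^{-m_0^{\Omega(1)}} = 2^{-n^{\Omega(1)}}$.

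To get an arbitrary output length $m \le m_0$, we truncate: let $\ext_m$ be the first $m$ output bits of $\ext$. Projection does not increase TV distance, so $\ext_m$ is still an $(\eps, n^{0.9})$-extractor for $\Ycal$ with the same error. It is explicit because $\ext$ is.

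It remains to choose $k'$. Write $\ell \le 2^{n-n^{a}}$ for some constant $a>0$ coming from \Cref{thm:entropy_polarization}, and set $k' = n - n^{a}/2$. Then $\ell \cdot 2^{-k'} \le 2^{-n^{a}/2} = 2^{-n^{\Omega(1)}}$, and $k' = n - n^{\Omega(1)}$ as required. \Cref{lem:pol+ext->rob_ext2} now says that $\iso(x) = \indicator(\ext_m(x) = 0^m)$ is a $(2^{-m}-\eps,\ 2^{-m}+\eps+\ell 2^{-k'}+2\gamma,\ k')$-isolator for $\Xcal$. Both error terms are of the form $2^{-n^{\Omega(1)}}$, which gives the claimed parameters.

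The only delicate step is this bookkeeping with the hidden constants. We have to make sure the $n^{\Omega(1)}$ in the bound on $\ell$ is a fixed exponent, so that $k'$ can be chosen to absorb it, and that the decision-forest depth $O(1)$ from polarization stays within the $c\log n$ allowed by \Cref{thm:forest_ext} once $n$ exceeds a constant depending on $d$. The constant $c$ in the statement is the one from \Cref{thm:entropy_polarization}. For the finitely many small $n$, the bound holds trivially after adjusting the implied constants.
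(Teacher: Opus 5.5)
Your proposal is correct and follows the paper's proof: it feeds the polarization of \Cref{thm:entropy_polarization} and the forest extractor of \Cref{thm:forest_ext} into \Cref{lem:pol+ext->rob_ext2}. You also supply details the paper leaves implicit: truncating the extractor's output keeps the $2^{-n^{\Omega(1)}}$ error for every $m \le m_0$, and choosing $k' = n - n^{a}/2$ absorbs the $\ell \cdot 2^{-k'}$ term; your final parameters also avoid the sign and exponent typos in the paper's displayed isolator.
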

\begin{proof}
    Set $m_0 = n^{\Omega(1)}$ where the implicit constant is small enough. Let $\ext: \bin^n \rightarrow \bin^{m}$ be a $(2^{-n^{\Omega(1)}}, n^{0.9})$-extractor for the class $\Ycal$ of depth-$O(1)$ decision forests with output in $\bin^n$ given by \cref{thm:forest_ext}. 
    By \cref{thm:entropy_polarization} and \cref{lem:pol+ext->rob_ext2}, we get a $(2^{-m} - 2^{-n^{\Omega(1)}},2^{-m} - 2^{-n^{\Omega(1)}},n-n^{-\Omega(1)})$-isolator for $\Xcal$ from $\ext$.
\end{proof}

Now that we have a sufficiently good isolator, we can complete the proof of \Cref{thm:hard_dist_ac0}, restated below.

\thmharddistac*

\begin{proof}
We choose parameters to optimize the final distance with respect to the distribution length $(n+1)t$ while ensuring the construction takes only $\poly(N)$ time.
We will pick $c$, $c'$, and $c_1$ to be constants so that certain bounds below hold.
Let $n = \ceilbra{(\log S)^{{c_1}}}$. Let $t = \floorbra{N/(n+1)}$. By the assumption $S \leq \exp(N^{c})$, we have $t \geq 1$.
For any $\Xb$ of depth $\Delta$ and size at most $S$, $\add_{n, t}(\Xb)$ can be computed by a circuit of depth $d = \Delta+2$ and size $S + n(t+2) \leq \exp(n^{c'}) $ where we have used $S \geq N$.

Let $m$ be the largest integer such that for $p = 2^{-m}$, we have $t \geq \frac{1}{p} \log \frac{1}{p}$. 
Let $\iso\colon\bin^n \rightarrow \bin$ be a $(2^{-m} - 2^{-n^{\Omega(1)}}, 2^{-m} + 2^{-n^{\Omega(1)}}, n -n^{\Omega(1)})$-isolator for depth-$d$ circuit sources of size $\exp(n^{c'})$ on $\bin^n$ given by \cref{cor:iso_ac0}.
Now applying \cref{thm:main}, we obtain
\begin{align*}
1-\tvdist{\Xb - \Db} &\le (1-\alpha)^{t+1} +\beta +2^{-(n-k)}\pbra{2t^3 + 1} \\
&\le \exp(-p(t+1))+ p + 2^{-\Omega(n^{\Omega(1)})} \\
&\le O(p) \leq O\left(\frac{\log t}{t}\right) \\
&\le O\left(\frac{(\log S)^{c_1}}{N}\log\frac{N}{(\log S)^{c_1}}\right).
\end{align*}
In the second inequality, we  used that $t \leq 2^{ n^{c''}}$ for a small enough constant $c''$ depending on the constant in $k = n- n^{\Omega(1)}$.  
In the last inequality, we used $t =\Omega(N/(\log S)^{c_1})$.

We now specialize to the case of $t = 1$ and $\tilde{\Db} = (\Ub^n, \tilde{\iso}(\Ub^n))$, where $\tilde{\iso}$ is an $(1/2 - 2^{-n^{\Omega(1)}}, 1/2+2^{-n^{\Omega(1)}}, n - n^{\Omega(1)})$-isolator. Invoking \cref{thm:main}, we obtain
\[
    \tvdist{\tilde{\Db} - \Yb} \geq 1- (1/2 + 2^{-n^{\Omega(1)}})^2 - (1/2 + 2^{-n^{\Omega(1)}})- 2^{-n^{\Omega(1)}} \geq 1/4 - 2^{-n^{\Omega(1)}}. \qedhere
\]
\end{proof}

\subsubsection{Local sources}

We now consider \emph{local {\normalfont($\nc$)}} sources.

\begin{definition}[Local Source]\label{def:nc0_source}
    A function $f\colon\{0, 1\}^r \to \{0, 1\}^n$ is said to be \emph{$d$-local} if each output bit of $f$ depends on at most $d$ input bits. 
    An $n$-bit source $\textbf{X}$ is said to be $d$-local if there exists some $d$-local function $f\colon \{0, 1\}^r \to \{0, 1\}^n$ such that $\Xb = f(\Ub^r)$. 
\end{definition}

Observe that $O(1)$-local functions are precisely those expressible as $\nc$ circuits, so low locality sources are occasionally referred to as $\nc$ sources.
More generally, any $d$-local source over $\bin^n$ can be expressed as a depth-2 circuit source of size $n\cdot 2^d$, so \Cref{thm:hard_dist_ac0} implies the following result for such sources.
    
\begin{corollary}\label{cor:hard_dist_nc0}
    There exist $c, c', c_1 > 0$ such that the following holds. 
    Let $N$ and $\Delta$ be positive integers with $N\cdot 2^\Delta \leq \exp(N^{c})$.
    There exist positive integers $n, t$, and $d$ satisfying $(n + 1)t \leq N$ and an isolator, $\iso\colon \{0, 1\}^n \to \{0, 1\}$ for depth-$d$ circuit sources of size $\exp(n^{c'})$ on $\{0, 1\}^n$ with suitable parameters such that the following holds.

    Define the distribution
    \[
        \Db = (\Ub_1, \Ub_2, \dots \Ub_t, \iso(\Ub_1), \iso(\Ub_2), \dots \iso(\Ub_t)),
    \]
    where $\Ub_1, \Ub_2, \dots \Ub_t$ are independent copies of $\Ub^n$.
    Let $\Xcal$ be the class of all $\Delta$-local sources.
    Then for any $\Xb \in \Xcal$,
    \[
        \tvdist{\Db - \Xb} \geq 1 - O\left(\frac{(\log (N\cdot 2^{\Delta}))^{{c_1}}}{N}\log\left(\frac{N}{(\log (N\cdot 2^{\Delta}))^{{c_1}}}\right)\right).
    \]   
    Moreover, the distribution $\Db$ can be sampled in $\poly(N)$ time.

    Additionally, there exists an isolator $\tilde{\iso} \colon \bin^n \to \bin$ (with possibly different parameters) such that the distribution $\tilde{\Db} = (\Ub^n, \tilde{\iso}(\Ub^n))$ satisfies $\tvdist{\tilde{\Db} - \Yb} \geq 1/4 - 2^{-n^{\Omega(1)}}$ for all $\Delta$-local sources $\Yb$ on $\bin^{n+1}$.
    $\tilde{\Db}$, too, can be sampled in $\poly(N)$ time.
\end{corollary}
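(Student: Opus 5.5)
This is a direct reduction to \Cref{thm:hard_dist_ac0}. The plan is to show that every $\Delta$-local source on $\bin^N$ is also a depth-$2$ circuit source of size at most $S \coloneqq N\cdot 2^\Delta$ (up to an additive $O(N)$ for negations, which we absorb). We then invoke \Cref{thm:hard_dist_ac0} with depth parameter $2$ and size $S$.

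\textbf{Step 1: simulate a $\Delta$-local function by a depth-$2$ circuit.} Let $f\colon\bin^r\to\bin^N$ be $\Delta$-local. Each output bit $f_j$ is a Boolean function of at most $\Delta$ input bits. We write $f_j$ in DNF: an \textsf{OR} of at most $2^\Delta$ terms, one for each satisfying assignment of its $\le \Delta$ relevant variables. Each term is an \textsf{AND} of literals, with negations pushed to the inputs as is standard for $\ac$. Summing over the $N$ outputs gives a depth-$2$ circuit of size at most $N(2^\Delta+1)$. So every $\Xb\in\Xcal$ is a depth-$2$ circuit source of size $O(N2^\Delta)$. The constant factor only shifts $\log S$ by $O(1)$, which the constants $c_1$ and the $O(\cdot)$ absorb.

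\textbf{Step 2: check the hypotheses of \Cref{thm:hard_dist_ac0}.} Fix the depth there as $2$; this yields absolute constants $c,c',c_1$, which become the constants of the corollary. The required size range is $N\le S\le\exp(N^c)$. The lower bound $N\le N2^\Delta$ is automatic. The upper bound is exactly the assumption $N\cdot 2^\Delta\le\exp(N^c)$, possibly after shrinking $c$ slightly to absorb the $+1$ in the size bound.

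\textbf{Step 3: conclude.} \Cref{thm:hard_dist_ac0} supplies $n,t,d$, the isolator $\iso$, and the distribution $\Db$ sampleable in $\poly(N)$ time. It gives $\tvdist{\Db-\Xb}\ge 1-O\bigl(\tfrac{(\log S)^{c_1}}{N}\log\tfrac{N}{(\log S)^{c_1}}\bigr)$ for every depth-$2$ circuit source of size $S$, hence for every $\Delta$-local source. Substituting $S=N2^\Delta$ gives the stated bound. The ``additionally'' part follows in the same way from the corresponding part of \Cref{thm:hard_dist_ac0}, since a $\Delta$-local source on $\bin^{n+1}$ is a depth-$2$ circuit of size $(n+1)2^\Delta\le S$.

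No step here is a real obstacle. The only care needed is bookkeeping: the parameters of \Cref{thm:hard_dist_ac0} must be chosen with depth fixed at $2$, so that the constants are independent of $\Delta$, and the $O(1)$ slack in the circuit size must be absorbed into the constants.
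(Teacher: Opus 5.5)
Your proposal is correct and matches the paper's argument: the paper observes that a $\Delta$-local source is a depth-$2$ circuit source of size about $N\cdot 2^{\Delta}$ and invokes \Cref{thm:hard_dist_ac0}. Your bookkeeping makes explicit two details the paper leaves implicit: the extra gate per output, absorbed into the constants, and fixing the circuit depth at $2$ so that $c,c',c_1$ do not depend on the locality $\Delta$.
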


\subsection{Communication, Small-Space, and Turing-Machine Sources}\label{ssec:communication}

We now turn our attention to several related sources.

\subsubsection{Communication sources}

We begin by analyzing communication sources, since the two subsequent sources reduce to them.
For the unfamiliar reader, the basics of communication complexity can be found in the recent textbook \cite{RY20}.

\begin{definition}[Communication Source]\label{def:communication_source}
Alice and Bob hold private randomness $\Ab$ and $\Bb$, respectively. 
They exchange bits depending on their private randomness according to some protocol fixed beforehand. 
At the end, Alice outputs some random string $\Xb \in \bin^n$ depending on the transcript and $\Ab$. 
Similarly, Bob outputs $\Yb$ depending on the transcript and $\Bb$. 
Such a distribution $(\Xb, \Yb)$ is called a \emph{communication source}.
\end{definition}

The limitations of generating distributions with communication sources (and those that reduce to them) are fairly well understood \cite{AST+03, GW20, chattopadhyay2022space, yu2024sampling}.
It is known, for example, that any source sampled by a communication protocol using $\Omega(n)$ bits of communication has distance $1 - 2^{-\Omega(n)}$ from the uniform distribution over pairs of strings $(s,t)$ where $s \land t = 0^n$ \cite{GW20}, as well as distance $\frac{1}{2} - 2^{-\Omega(n)}$ from the uniform distribution over input-output pairs of the inner-product function \cite{chattopadhyay2022space}.\footnote{This latter work takes an approach somewhat similar to our own. While \cite{chattopadhyay2022space} use specific properties about communication sources (see the discussion in \Cref{sec:main_pf}) to directly reason about sampling, we use those properties to show that an extractor for such sources is also a robust extractor.}
The purpose of this section is to illustrate that these results can be (qualitatively) recovered in our unified framework.

\begin{restatable}{theorem}{thmharddistcommunication}\label{thm:hard_dist_communication}
    There exist constants $\delta, \delta_1, C > 0$ such that the following holds. Let $c$ be an integer satisfying $C\log N \le c \leq \delta_1 N$. 
    There exist parameters $n$, $t$ satisfying $nt \leq N$ and an isolator $\iso$ with suitable parameters for the class $\Ycal$ of communication sources on $(\bin^n)^2$ of cost $\delta n$ such that the following holds.

    Define the distribution 
    \[
        \Db=\big(\Ub_1^A, \Ub_2^A, \dots, \Ub_t^A, \Ub_1^B, \Ub_2^B, \dots, \Ub_t^B, \iso(\Ub_1^A, \Ub_1^B), \iso(\Ub_2^A, \Ub_2^B),\dots, \iso(\Ub_t^A, \Ub_t^B)\big),
    \]
    where $\Ub_1^A, \Ub_2^A, \dots, \Ub_t^A, \Ub_1^B, \Ub_2^B, \dots, \Ub_t^B$ are all independent copies of $\Ub^n$.
    Let $\Xcal$ be the class of all communication sources $(\Xb, \Yb)$ of cost $c$ where $\Xb$ is on $(\bin^n)^t$ and $\Yb$ is on $(\bin^n)^t \times \bin^t$. Then for any $(\Xb, \Yb) \in \Xcal$, we have
    \[
        \tvdist{\Db - (\Xb, \Yb)} \geq 1 - O\left(\frac{c}{N} \log \frac{N}{c}\right).
    \]
    Moreover, the distribution $\Db$ can be sampled in $\poly(N)$ time.

    Additionally, there exists an isolator $\tilde{\iso} \colon \bin^{2n} \to \bin$ (with possibly different parameters) such that the distribution $\tilde{\Db} = (\Ub^A, \Ub^B,\tilde{\iso}(\Ub^A,\Ub^B))$ satisfies $\tvdist{\tilde{\Db} - (\tilde{\Xb}, \tilde{\Yb})} \geq 1/4 - 2^{-\Omega(n)}$ for all communication sources $(\tilde{\Xb}, \tilde{\Yb})$ of cost $c$ where $\tilde{\Xb}$ is on $\bin^n$ and $\tilde{\Yb}$ is on $\bin^n \times \bin$.
    $\tilde{\Db}$, too, can be sampled in $\poly(N)$ time.
\end{restatable}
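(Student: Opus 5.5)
Following the template of the polynomial and circuit cases, the plan is to build an isolator for low-cost communication sources on $(\bin^n)^2$ and then invoke \Cref{thm:main}. Communication sources are viewed as distributions over $\bin^{2n}$, and the addressed source $\add(\Xb,\Yb)$ is placed in the class $\Ycal$ of communication sources of cost $\delta n$. Here $\add$ acts coordinate-wise on the pairs $(\Ub_i^A,\Ub_i^B)$ with the bits $b_i$ held by Bob. Given a protocol of cost $c$ for $(\Xb,\Yb)$, Bob announces the first index $i$ with $b_i=1$ (or that none exists), which costs $\lceil\log(t+1)\rceil$ bits. Then Alice outputs her $i$-th block and Bob outputs his, or each outputs fresh uniform bits. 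This is a communication source of cost $c+O(\log t)$, so I choose $n=\Theta(c)$ with $n\ge (c+O(\log N))/\delta$. The hypothesis $c\ge C\log N$ absorbs the $\log t$ term, and $c\le\delta_1 N$ guarantees $t=\lfloor N/(2n+1)\rfloor\ge 1$. (The stated constraint $nt\le N$ should be read loosely up to constants; it suffices to choose $n$ so that the total length fits.)

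For the isolator I would use a two-source extractor with $m$ output bits and exponentially small error, for instance a multi-bit inner-product (Hadamard/Dodis et al.) extractor $\ext\colon\bin^n\times\bin^n\to\bin^m$ for independent sources of min-entropy $(1/2+\gamma)n$ each, with error $2^{-\Omega(n)}$ when $m=\Omega(n)$. Set $\iso(x,y)=\indicator(\ext(x,y)=0^m)$. By \Cref{fct:iso_from_robust_ext}, it then suffices to show that $\ext$ is a robust extractor for $\Ycal$ with $k=n-\Omega(n)$ chosen suitably above $(1+2\gamma)n$.

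The main step, and the main obstacle, is robustness. Given a cost-$\delta n$ source $(\Xb,\Yb)$, I condition on the transcript $\tau$. Conditioned on $\tau$, Alice's and Bob's outputs are independent, say $\Xb_\tau$ and $\Yb_\tau$. There are at most $2^{\delta n}$ transcripts, and those of probability below $2^{-2\delta n}$ carry total mass at most $2^{-\delta n}$, so I discard them. For each remaining $\tau$, I split $\Xb_\tau$ by whether the point $x$ is heavy in $\Xb_\tau$ (mass $>2^{-(1/2+\gamma)n}$), and split $\Yb_\tau$ similarly. On the light parts, the conditioned sources (light-$x$ restricted, light-$y$ restricted) are independent and each has min-entropy roughly $(1/2+\gamma)n$ minus a correction for the conditioning event. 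Conditioning on an event of probability $q$ costs only $\log(1/q)$, and events of tiny probability can be discarded into the error. The extractor then gives $\Pr[\ext=0^m]\le 2^{-m}+2^{-\Omega(n)}$ on this part. The remaining case is that $x$ is heavy in $\Xb_\tau$ or $y$ is heavy in $\Yb_\tau$. Such a pair $(x,y)$ has joint probability at least $\Pr[\tau]\cdot 2^{-(1/2+\gamma)n}\cdot\Pr[\Yb_\tau=y]$, and I must argue that the mass of such pairs lying in the light set $L=\{z:\Pr[z]\le 2^{-k}\}$ is small. The plan is a counting argument: heavy $x$'s number at most $2^{(1/2+\gamma)n}$ per transcript, and for a fixed heavy $x$, the $y$'s with $(x,y)\in L$ must each have probability at most $2^{-k}/(\Pr[\tau]2^{-(1/2+\gamma)n})$. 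Summing over at most $2^{(1/2+\gamma)n}$ such $x$, each paired with at most $2^{2n}$ choices of $y$, is too crude, so I would instead bound the mass directly. For fixed $\tau$ and heavy $x$, the mass of light pairs is at most $\Pr[\Xb_\tau=x]\cdot\Pr[\Yb_\tau\in T_x]$, where $T_x$ consists of $y$ with $\Pr[\Yb_\tau=y]\le 2^{-k}/(\Pr[\tau]\Pr[\Xb_\tau=x])$. Choosing $k$ near $n$ but with the gap tuned, Bob's side is then forced to have high min-entropy on $T_x$, so I can again apply the extractor, with $x$ fixed but $y$ of min-entropy near $n$. This needs the extractor to also work when one source is a point and the other has min-entropy near $n$. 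That is false for inner product, so I would instead require the heavy-$x$ mass itself to be small, or else pick $k$ so small relative to $n$ that heavy-$x$ light pairs are rare. This balancing of thresholds is the delicate part.

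Finally, I plug into \Cref{thm:main} with $\alpha=2^{-m}-2^{-\Omega(n)}$ and $\beta=2^{-m}+2^{-\Omega(n)}$, taking $m$ maximal such that $t\ge 2^m m$. This gives $1-\tvdist{\cdot}\le e^{-pt}+p+2^{-\Omega(n)}\poly(t)=O(\log t/t)=O(\frac{c}{N}\log\frac{N}{c})$; the $\poly(t)2^{-\Omega(n)}$ term is negligible since $n\ge C\log N$. Explicitness follows from the explicit extractor. Setting $t=1$, $m=1$ gives the $1/4-2^{-\Omega(n)}$ bound.
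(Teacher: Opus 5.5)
Your overall architecture matches the paper's: a hash/inner-product two-source extractor turned into an isolator via \Cref{fct:iso_from_robust_ext}, a decomposition of a cost-$\delta n$ source into independent pairs by transcript, Bob announcing the first index with $b_i=1$, and the same parameter choices. However, the step you flag as ``delicate'' is never resolved, and it is the heart of the argument. The routes you propose for it do not work. As you note yourself, the extractor cannot be applied with one side fixed. ``Requiring the heavy-$x$ mass to be small'' is not something you control for an arbitrary source. The suggestion about choosing $k$ is never made quantitative, and taken literally, making $k$ small goes the wrong way, since it enlarges $L$. Your stated range for $k$ is also off. The sources live on $\bin^{2n}$, so $k$ should be $2n-\Omega(n)$, not $n-\Omega(n)$. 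More importantly, the lower bound $(1+2\gamma)n$ only suffices for the light--light part.

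The missing idea is to take $k$ \emph{large}: roughly $k\ge(3/2+\gamma+\delta)n+\Omega(n)$, while still $k\le 2n-\Omega(n)$ so that the $2^{-(2n-k)}(2t^3+1)$ term of \Cref{thm:main} stays negligible. With this choice, the ``crude'' count you abandoned actually works. You overcounted: $y$ ranges over $\bin^n$, so there are $2^n$ choices, not $2^{2n}$. Each triple $(\tau,x,y)$ with $(x,y)\in L$ contributes $\Pr[\tau]\Pr[\Xb_\tau=x]\Pr[\Yb_\tau=y]\le\Pr[(\Xb,\Yb)=(x,y)]\le 2^{-k}$. There are at most $2^{\delta n}\cdot 2^{(1/2+\gamma)n}\cdot 2^{n}$ triples with $x$ heavy in $\Xb_\tau$. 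Hence the heavy-$x$ part has mass at most $2^{(3/2+\gamma+\delta)n-k}=2^{-\Omega(n)}$, and symmetrically for heavy $y$.

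Equivalently, once $k$ is this large, a light pair with a heavy coordinate forces the other coordinate to have probability below $2^{-n-\Omega(n)}$. Such strings carry total mass at most $2^{n}\cdot 2^{-n-\Omega(n)}$. This is exactly the paper's decomposition $L\subseteq(L^1_{\Xb}\times L^1_{\Yb})\cup(L^2_{\Xb}\times\bin^n)\cup(\bin^n\times L^2_{\Yb})$ with $k>k_1+k_2$ and $k_2>n$, where the very light sets $L^2$ have mass at most $2^{n-k_2}$. With this step filled in, the rest of your plan goes through.
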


The following equivalent way of defining communication sources from \cite{AST+03} will be useful. 
A source $(\Xb, \Yb)$ generated by a communication protocol of cost $c$ is equivalent to a mixture of at most $2^c$ distributions $(\Xb_i, \Yb_i)$ where $\Xb_i$ and $\Yb_i$ are independent. 

As a first step toward constructing our isolator for \Cref{thm:hard_dist_communication}, we show that good two-source extractors are also robust extractors for sources $(\Xb, \Yb)$ where $\Xb$ and $\Yb$ are independent.
Recall that a two-source extractor $\ext$ is defined as in \Cref{def:ext_and_robust_ext}, only with the pseudorandom distribution being that $\tvdist{\ext(\Xb,\Yb) - \Ub^m} \le \eps$ for any two independent sources, each with some decent min-entropy. 

\begin{lemma}
    Let $\ext\colon \bin^n \times \bin^n \rightarrow \bin^m$ be a two-source extractor for min-entropy $k_0$ in each source and error $\eps$. Let $k,k_0, k_1, k_2$ satisfy $k > k_1 + k_2$ and $k_1 > k_0$. Consider any independent sources $\Xb, \Yb$ and define $L =  \{(x, y) \in (\bin^n)^2 : \Pr[(\Xb, \Yb) = (x, y)] \leq 2^{-k}\}$. Then for every $z \in \bin^m$, we have
    \begin{align*}
        \Pr[(\Xb, \Yb) \in L \text{ and } \ext(\Xb, \Yb) = z] &\leq \frac{1}{2^{k_2-n-1}}+\max\left(\frac{1}{2^m} + \eps, \frac{1}{2^{k_1-k_0}}\right).
    \end{align*}       
\end{lemma}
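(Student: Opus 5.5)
Since $\Xb$ and $\Yb$ are independent, $\Pr[(\Xb,\Yb)=(x,y)] = \Pr[\Xb=x]\Pr[\Yb=y]$. The plan is to split the event $(\Xb,\Yb)\in L$ according to how heavy each coordinate is. Define the heavy sets
\[
H_X = \{x : \Pr[\Xb=x] > 2^{-k_1}\}, \qquad H_Y = \{y : \Pr[\Yb=y] > 2^{-k_1}\}.
\]
A pair $(x,y)$ falls into one of three cases. (i) Both coordinates are light, i.e.\ $x\notin H_X$ and $y\notin H_Y$. (ii) One coordinate, say $x$, is heavy; then for $(x,y)$ to lie in $L$ we need $\Pr[\Yb=y] \le 2^{-k}/\Pr[\Xb=x] < 2^{-(k-k_1)}$, and $k-k_1>k_2$. (iii) Both are heavy; then the product exceeds $2^{-2k_1}$. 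The threshold choices should be arranged so that case (iii) is either impossible or absorbed into the error. I may need to tune which cutoffs to use, for instance heavy meaning more than $2^{-k_1}$ versus at least $2^{-(k-k_2)}$, to match the stated terms.

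\textbf{Tiny points.} First I would handle points of tiny probability. Let $T_Y=\{y:\Pr[\Yb=y]\le 2^{-k_2}\}$, and define $T_X$ analogously. Since there are only $2^n$ strings, $\Pr[\Yb\in T_Y] \le 2^{n-k_2}$, and likewise for $\Xb$. Together these give the $2^{-(k_2-n-1)} = 2\cdot 2^{n-k_2}$ term. After discarding them, every remaining point has both marginals above $2^{-k_2}$. Membership in $L$ then forces each marginal to be at most $2^{-k}/2^{-k_2} = 2^{-(k-k_2)} < 2^{-k_1}$. So outside the tiny sets, $L$-points have both coordinates of probability at most $2^{-k_1}$, which reduces everything to the light-light case.

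\textbf{Light-light case.} For the remaining probability I would bound $\Pr[\Xb\in A,\ \Yb\in B,\ \ext(\Xb,\Yb)=z]$, where $A=\{x:\Pr[\Xb=x]\le 2^{-k_1}\}$ and $B$ is defined similarly. Writing this as $\Pr[\Xb\in A]\Pr[\Yb\in B]\Pr[\ext(\Xb',\Yb')=z]$, where $\Xb'=\Xb\mid A$ and $\Yb'=\Yb\mid B$ are independent, there are two cases.
\begin{itemize}
\item If both $\Pr[\Xb\in A]$ and $\Pr[\Yb\in B]$ are at least $2^{-(k_1-k_0)}$, then $\Xb'$ and $\Yb'$ each have min-entropy at least $k_1-(k_1-k_0)=k_0$. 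The extractor guarantee then gives $\Pr[\ext(\Xb',\Yb')=z]\le 2^{-m}+\eps$, and the product is at most $2^{-m}+\eps$.
\item Otherwise, one of the two probabilities is below $2^{-(k_1-k_0)}$, and the whole product is at most $2^{-(k_1-k_0)}$.
\end{itemize}
Combining the two cases gives the $\max$ term, and adding the tiny-point loss gives the claim.

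\textbf{Main obstacle.} The main difficulty is the threshold bookkeeping: checking that "not tiny in either coordinate, and in $L$" really forces both marginals below $2^{-k_1}$, using $k>k_1+k_2$. I expect this to work as sketched. The union bound over the two tiny sets is what produces the factor $2$ in $2^{-(k_2-n-1)}$.
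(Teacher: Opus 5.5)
Your proof is correct and follows essentially the same route as the paper. You discard the points with marginal probability at most $2^{-k_2}$, at a cost of $2\cdot 2^{n-k_2}$. You then use $k>k_1+k_2$ to show that the remaining points of $L$ have both marginals below $2^{-k_1}$. Finally, you split on whether $\Pr[\Xb\in A]$ and $\Pr[\Yb\in B]$ are at least $2^{-(k_1-k_0)}$ before applying the extractor to the conditioned sources.

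The only cosmetic difference is that the paper conditions on the band $\{x : 2^{-k_2} < \Pr[\Xb=x] \le 2^{-k_1}\}$ rather than your larger set $A$, which changes nothing. Your opening discussion of the heavy sets $H_X,H_Y$ is unnecessary and can be dropped.
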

\begin{proof}
    Define the sets
    \begin{align*}
        L_{\Xb}^1 &= \{x \in \bin^n : 2^{-k_1} \geq \Pr[\Xb = x] > 2^{-k_2}\}, \\
        L_{\Xb}^2 &= \{x \in \bin^n : \Pr[\Xb = x] \leq 2^{-k_2}\},
    \end{align*}
    and define $L_{\Yb}^1$ and $L_{\Yb}^2$ similarly. 
    Using the assumption $k > k_1 + k_2$, observe that
    \[
        L \subseteq (L_{\Xb}^1 \times L_{\Yb}^1) \cup (L_{\Xb}^2 \times \bin^n) \cup (\bin^n \times L_{\Yb}^2).
    \]
    By the union bound, we have
    \begin{multline*}
        \Pr[(\Xb, \Yb) \in L \text{ and } \ext(\Xb, \Yb) = z] \\
        \leq \Pr[(\Xb, \Yb) \in L_{\Xb}^1 \times L_{\Yb}^1 \text{ and } \ext(\Xb, \Yb) = z] + \Pr[\Xb \in L_{\Xb}^2] + \Pr[\Yb \in L_{\Yb}^2].
    \end{multline*}
    Each of the last two terms in the sum can be trivially bounded by $2^n \cdot 2^{-k_2}$.

    For the remaining first term, we consider two cases. Define $p_1 = \Pr[\Xb \in L_{\Xb}^1]$ and $p_2 = \Pr[\Yb \in L_{\Yb}^1]$. 
    If both $p_1 \geq 2^{k_0 - k_1}$ and $p_2 \geq 2^{k_0 - k_1}$, then the independent sources $\tilde{\Xb}_1 \coloneqq \left(\Xb \mid \Xb \in L_{\Xb}^1\right)$ and $\tilde{\Xb}_2 \coloneqq \left(\Yb \mid \Yb \in L_{\Yb}^1\right)$ both have min-entropy at least $k_0$.
    In this case,
    \begin{align*}
        \Pr[(\Xb, \Yb) \in L_{\Xb}^1 \times L_{\Yb}^1 \text{ and } \ext(\Xb, \Yb) = z] &= \Pr[(\Xb, \Yb) \in L_{\Xb}^1 \times L_{\Yb}^1]\cdot \Pr[\ext(\tilde{\Xb}_1, \tilde{\Xb}_2) = z] \\
        &\leq p_1p_2\left(\frac{1}{2^m} + \eps\right) \leq \frac{1}{2^m} + \eps.
    \end{align*}
    On the other hand, if $\min\{p_1, p_2\} < 2^{k_0 - k_1}$, then 
    \[
    \Pr[(\Xb, \Yb) \in L_{\Xb}^1 \times L_{\Yb}^1 \text{ and } \ext(\Xb, \Yb) = z] \leq p_1 p_2 < \frac{1}{2^{k_1-k_0}}. 
    \]
    Combining these bounds gives the desired inequality.
\end{proof}

Combined with \cref{fct:iso_from_robust_ext}, the above lemma implies that a good enough two-source extractor can be used to construct an isolator for two independent sources. 
There is a mature line of work on two-source extractors (see \cite{li2023two} and references therein), but for our purposes, it suffices to use a universal hash function. 

The following connection is a well-known variant of the leftover hash lemma \cite{hastad1999pseudorandom}. A proof can be found, for instance, in \cite{lee2005extracting}. 
A simple example of a universal hash function is given by the inner product function over $\Fbb_{2^m}$ where we view both the inputs as vectors over $\Fbb_{2^m}$.
\begin{lemma}
    If $H \colon \bin^n \times \bin^n \rightarrow \bin^m$ is a universal hash function, then $H$ is also a two-source extractor for min-entropy $k$ in each source and error $\eps$ when $k \geq (n+m)/2 + \log(1/\eps)$.
\end{lemma}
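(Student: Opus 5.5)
This is the standard leftover-hash-lemma argument. The plan is to first treat flat sources and then pass to general sources by convexity.

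\emph{Step 1 (reduction to flat sources).} Fix independent $\Xb,\Yb$ with min-entropy at least $k$. Each is a convex combination of flat sources on sets of size $2^k$ (taking $k$ integral; otherwise use $\lfloor k\rfloor$ and absorb the loss). The distribution $(\Xb,\Yb)$ is then a convex combination of products of such flat pairs. Since TV distance to $\Ub^m$ is convex in the source, it suffices to handle $\Xb$ uniform on $A$ and $\Yb$ uniform on $B$, with $|A|=|B|=2^k$.

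\emph{Step 2 (collision bound over a fixed $y$).} Fix $y\in B$. We study the distribution of $H(\Xb,y)$ through its collision probability:
\[
\Pr[H(\Xb,y)=H(\Xb',y)] \le 2^{-k} + \Pr_{x\neq x'}[H(x,y)=H(x',y)].
\]
The plan is to average this over $\Yb$. Here I interpret ``universal'' in the sense used in \cite{lee2005extracting}: for $x\ne x'$, the fraction of $y\in\bin^n$ with $H(x,y)=H(x',y)$ is at most $2^{-m}$. The inner-product example over $\Fbb_{2^m}$ satisfies this, since $\langle x-x',y\rangle=0$ cuts out a hyperplane. Averaging over $\Yb$ uniform on $B$ loses a factor of $2^n/|B|=2^{n-k}$. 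This gives
\[
\E_{y\sim\Yb}\,\mathrm{CP}\big(H(\Xb,y)\big) \le 2^{-k} + 2^{-m}\cdot 2^{n-k}.
\]

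\emph{Step 3 (from collisions to statistical distance).} For a distribution $\Zb$ on $\bin^m$, Cauchy--Schwarz gives
\[
\tvdist{\Zb-\Ub^m} \le \tfrac12\sqrt{2^m\,\mathrm{CP}(\Zb)-1}.
\]
Apply this for each fixed $y$, then use Jensen's inequality to move the expectation over $y$ inside the square root. This yields
\[
\tvdist{H(\Xb,\Yb)-\Ub^m} \le \tfrac12\sqrt{2^{m-k} + 2^{n-k}}.
\]

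\emph{Step 4 (parameters).} Bounding the right-hand side by $\eps$ needs $2^{n-k}\lesssim \eps^2$, i.e.\ $k\ge n+2\log(1/\eps)$. That is too weak for the stated threshold $(n+m)/2+\log(1/\eps)$, so the crude averaging in Step 2 must be sharpened. The fix is to not condition on $y$ first. Instead, bound the bias of $H(\Xb,\Yb)$ directly through the Fourier characters $\chi_s(H(x,y))$ for nonzero $s\in\bin^m$. For the inner-product construction, each character is the bilinear form $(-1)^{\langle x, M_s y\rangle}$ with $M_s$ invertible. The Lindsey lemma bound then gives
\[
\Big|\E_{\Xb,\Yb}\,\chi_s\big(H(\Xb,\Yb)\big)\Big| \le 2^{n/2}\,2^{-k}
\]
for flat sources of size $2^k$. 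Summing over $s$ via Vazirani's XOR lemma gives
\[
\tvdist{H(\Xb,\Yb)-\Ub^m} \le 2^{m/2}\cdot 2^{n/2-k},
\]
which is at most $\eps$ exactly when $k\ge (n+m)/2+\log(1/\eps)$.

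For a general universal $H$, the analogous step bounds the collision probability of the pair $(\Yb, H(\Xb,\Yb))$ directly. This is the main obstacle: obtaining the $(n+m)/2$ threshold rather than $n$. It requires using the two-source structure (both sources carrying min-entropy $k$) rather than treating $\Yb$ as a seed. If that cannot be done for the general case, I would follow \cite{lee2005extracting} by reducing to the bilinear/character-sum case.
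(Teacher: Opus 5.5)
\paragraph{There is a genuine gap.} The lemma is about an \emph{arbitrary} universal $H$, but the only complete argument you give (Step 4, Lindsey's lemma plus the XOR lemma) relies on the bilinear structure of the inner product over $\Fbb_{2^m}$. That step is fine for that one example. For general $H$ you leave the $(n+m)/2$ threshold as an unresolved ``obstacle.'' Your fallback of ``reducing to the bilinear/character-sum case'' is not available: a universal family gives you only the pairwise collision bound $\Pr_{y\sim\Ub^n}[H(x,y)=H(x',y)]\le 2^{-m}$ for $x\neq x'$, and no character-sum structure to exploit. The paper gives no proof of its own; it cites \cite{lee2005extracting} for this leftover-hash-lemma variant.

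\paragraph{The fix.} The real problem is a bookkeeping error in Step 2. Once it is fixed, your Steps 2--3 already give the stated bound for every universal $H$, with no Fourier analysis.

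You applied the change-of-measure factor $2^{n-k}$ to the whole collision term, including the baseline $2^{-m}$ that every distribution on $\bin^m$ has. That is why Step 3 comes out as $\tfrac12\sqrt{2^{m-k}+2^{n-k}}\ge\tfrac12$, which is vacuous. It is also why your condition $k\ge n+2\log(1/\eps)$ can never hold, since $k\le n$.

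Instead, set $\Delta(y)=\mathrm{CP}(H(\Xb,y))-2^{-m}$, which is nonnegative for every $y$. Let $\Xb'$ be an independent copy of $\Xb$. Universality gives $\E_{y\sim\Ub^n}[\Delta(y)]\le\Pr[\Xb=\Xb']\le 2^{-k}$. Since $\Delta\ge 0$ and $\Pr[\Yb=y]\le 2^{-k}=2^{n-k}\cdot 2^{-n}$, it follows that $\E_{y\sim\Yb}[\Delta(y)]\le 2^{n-2k}$. Your Step 3 with Jensen now yields
\[
\tvdist{H(\Xb,\Yb)-\Ub^m}\le \E_{y\sim\Yb}\,\tfrac12\sqrt{2^m\Delta(y)}\le \tfrac12\sqrt{2^{m}\cdot 2^{n-2k}}=\tfrac12\,2^{(n+m)/2-k},
\]
which is at most $\eps$ whenever $k\ge (n+m)/2+\log(1/\eps)$.

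This is exactly where both sources' min-entropy enters. The min-entropy of $\Xb$ caps the uniform-seed excess at $2^{-k}$, and the min-entropy of $\Yb$ caps the blow-up from changing measure at $2^{n-k}$. The flattening in Step 1 is unnecessary, since only these two min-entropy bounds are used.
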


This lemma and the earlier discussion imply the following.
\begin{corollary}\label{cor:isolator_for_two_sources}
Let $\Xcal$ denote the class of sources $(\Xb, \Yb)$ on $\bin^n \times \bin^n$ where $\Xb$ and $\Yb$ are independent. There is a constant $\delta > 0$ such that for any integer $m \leq \delta n$, there is an explicit $(2^{-m} - 2^{-\Omega(n)}, 2^{-m} + 2^{-\Omega(n)}, (1-\Omega(1))n)$-isolator for $\Xcal$.
\end{corollary}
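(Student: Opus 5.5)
The plan is to combine the three ingredients stated just above: the universal hash $H$ (inner product over $\Fbb_{2^m}$) is a two-source extractor by the leftover-hash lemma variant. The preceding lemma turns it into a robust extractor for independent pairs. Then \Cref{fct:iso_from_robust_ext} converts it into an isolator. Concretely, let $H\colon \bin^n\times\bin^n\to\bin^m$ be the inner product over $\Fbb_{2^m}$; this is explicit and computable in $\poly(n)$ time. Define $\iso(x,y)=\indicator(H(x,y)=0^m)$.

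First I will fix the parameters. Take $k_0 = 0.9n$, which is at least $(n+m)/2+\log(1/\eps)$ with $\eps = 2^{-\Omega(n)}$ provided $m\le \delta n$ for small $\delta$. Then $H$ is an $(\eps,k_0)$ two-source extractor with $\eps=2^{-\Omega(n)}$. Next set $k_1 = 0.95n$ and $k_2 = 1.05n$, so that $k_1>k_0$. Finally take $k = k_1+k_2+1 = 2n+1$. This $k$ is not of the form $(1-\Omega(1))\cdot 2n$, so these choices need adjusting; more on that below.

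Next I will apply the preceding lemma with $z=0^m$. For independent $(\Xb,\Yb)$ it bounds $\Pr[(\Xb,\Yb)\in L \text{ and } H=0^m]$ by
\[
2^{-(k_2-n-1)}+\max\pbra{2^{-m}+\eps,\,2^{-(k_1-k_0)}}.
\]
With $m\le\delta n<k_1-k_0$, the max equals $2^{-m}+\eps$, so the whole bound is $2^{-m}+2^{-\Omega(n)}$. For uniformity, note that $(\Ub^n,\Ub^n)$ is itself an independent pair with min-entropy $n\ge k_0$ per source. The extractor property then gives $\Pr[\iso(\Ub^{2n})=1]\ge 2^{-m}-\eps$. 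Together with \Cref{fct:iso_from_robust_ext} (the class contains the uniform distribution), this yields the stated $\alpha,\beta$.

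The main obstacle is getting the threshold parameter $k$ to be $(1-\Omega(1))$ times the total length $2n$ while keeping $k_2-n=\Omega(n)$ and $k_1-k_0=\Omega(n)$. The constraints are $k>k_1+k_2$, $k_1>k_0\ge (n+m)/2+\Omega(n)$, and $k_2\ge n+\Omega(n)$, so $k\approx 1.5n+\Omega(n)$ is achievable. For instance, take $k_0=0.55n+m$, $k_1=0.6n+m$, $k_2=1.1n$, and $k=1.75n$, with $m\le \delta n$ small. This gives $k=(1-1/8)\cdot 2n$, which is $(1-\Omega(1))$ of the input length $2n$ as required, with all error terms $2^{-\Omega(n)}$. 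I would double check that the $\eps$ from the hash lemma is $2^{-(k_0-(n+m)/2)}=2^{-\Omega(n)}$ under these choices, which holds since $k_0-(n+m)/2\ge 0.05n$.
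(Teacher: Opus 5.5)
Your proposal is correct and is the argument the paper leaves implicit ("this lemma and the earlier discussion imply the following"). You view the inner product over $\Fbb_{2^m}$ as a two-source extractor via the leftover-hash-lemma variant, apply the preceding robustness lemma with $z=0^m$, and then use \Cref{fct:iso_from_robust_ext} (or your direct check on $(\Ub^n,\Ub^n)$) to get $\alpha$ and $\beta$. Your corrected parameters $k_0=0.55n+m$, $k_1=0.6n+m$, $k_2=1.1n$, $k=1.75n$, with $\delta<0.05$, satisfy every constraint. Reading the threshold as $(1-\Omega(1))$ times the total length $2n$ is the right interpretation: the lemma forces $k>k_1+k_2>1.5n$, and this is all that \Cref{thm:main} needs downstream.
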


The following lemma lets us obtain an isolator for communication sources from an isolator for two independent sources.

\begin{lemma}
    Let $\Xcal, \Ycal$ be classes of distributions on $\bin^n$. 
    Suppose every source $\Xb \in \Xcal$ can be written as a mixture $\Xb = \sum_{i = 1}^{2^\ell} p_i \Yb_i$ where $\Yb_i \in \Ycal$ for all $i \in [2^\ell]$.
    For any positive real $t$, if $\iso$ is an $(\alpha, \beta, k-t)$-isolator for $\Ycal$, then it is an $(\alpha,2^{-(t-\ell)}+\beta,k)$-isolator for $\Xcal$.
\end{lemma}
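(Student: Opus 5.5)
The first isolator condition, $\Pr[\iso(\Ub^n)=1]\ge\alpha$, does not depend on the class, so it carries over unchanged. The work is in the second condition. Fix $\Xb=\sum_{i=1}^{2^\ell} p_i\Yb_i\in\Xcal$ and let $L=\{x:\Pr[\Xb=x]\le 2^{-k}\}$. Since the light set of each $\Yb_i$ at threshold $2^{-(k-t)}$ need not contain $L$, the plan is to discard the components for which $L$ sits badly inside $\Yb_i$. Concretely, for each $i$ set $L_i=\{x:\Pr[\Yb_i=x]\le 2^{-(k-t)}\}$. We expand
\[
\Pr[\Xb\in L \text{ and } \iso(\Xb)=1]=\sum_i p_i\Pr[\Yb_i\in L\text{ and }\iso(\Yb_i)=1],
\]
and for each $i$ split the event according to whether $\Yb_i\in L_i$ or not.

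The first piece is $\Pr[\Yb_i\in L\cap L_i \text{ and } \iso(\Yb_i)=1]\le \Pr[\Yb_i\in L_i\text{ and }\iso(\Yb_i)=1]\le\beta$, by the isolator guarantee for $\Yb_i\in\Ycal$ at threshold $k-t$. Averaging over $i$ with weights $p_i$ contributes at most $\beta$ in total.

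The second piece is the total mass $\sum_i p_i\Pr[\Yb_i\in L\setminus L_i]$ of points that are light for $\Xb$ but heavy for the component generating them. This is the step requiring care. For $x\in L\setminus L_i$ we have $\Pr[\Yb_i=x]>2^{-(k-t)}$, while $p_i\Pr[\Yb_i=x]\le\Pr[\Xb=x]\le 2^{-k}$. Together these force $p_i<2^{-t}$, so only components with tiny weight can contribute. For such a component, $\Yb_i$ has at most $2^{k-t}$ points of probability exceeding $2^{-(k-t)}$. Using both facts,
\[
p_i\Pr[\Yb_i\in L\setminus L_i]\le \sum_{x\in L\setminus L_i}p_i\Pr[\Yb_i=x]\le 2^{k-t}\cdot 2^{-k}=2^{-t},
\]
where the second inequality uses the bound $p_i\Pr[\Yb_i=x]\le 2^{-k}$ pointwise and counts at most $2^{k-t}$ points. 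Summing over the at most $2^\ell$ components gives at most $2^{\ell-t}=2^{-(t-\ell)}$. (The cruder bound $p_i<2^{-t}$ alone also gives $2^{\ell}\cdot 2^{-t}$, which suffices as well.)

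Adding the two pieces yields $\Pr[\Xb\in L\text{ and }\iso(\Xb)=1]\le\beta+2^{-(t-\ell)}$, which is the claimed $(\alpha,2^{-(t-\ell)}+\beta,k)$-isolator property. The only subtle point is the observation that heavy-for-$\Yb_i$ but light-for-$\Xb$ points can arise only from components of weight below $2^{-t}$. Once that is noted, the rest is a union bound over the $2^\ell$ components.
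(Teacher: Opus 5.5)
Your proof is correct and is essentially the paper's argument: your observation that a point light for $\Xb$ but heavy for $\Yb_i$ forces $p_i < 2^{-t}$ is the paper's observation that $L \subseteq L_{\Yb_i}$ whenever $p_i \ge 2^{-t}$, and the paper then bounds the small-weight components by $\sum_{i : p_i < 2^{-t}} p_i \le 2^{-(t-\ell)}$, which is your ``cruder bound''. The only difference is bookkeeping: you split each component's event by membership in $L_i$, whereas the paper splits the components by weight. The final bound is the same.
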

\begin{proof}
    By assumption, $\Pr[\iso(\Ub^n) = 1] \ge \alpha$, so it remains to verify the second condition of isolators.
    Let $L = \{x \in \bin^n : \Pr[\Xb = x] \leq 2^{-k}\}$ and $L_{\Yb_i} = \{y \in \bin^n : \Pr[\Yb_i = y] \leq 2^{-(k-t)}\}$ for all $i \in [2^\ell]$.
    Observe that $L \subseteq L_{\Yb_i}$ for every $i$ with $p_i \ge 2^{-t}$, since if $x \in L$, we have
    \[
        \Pr[\Yb_i = x] \le \frac{1}{p_i}\Pr[\Xb = x] \le 2^t \cdot 2^{-k} = 2^{-(k-t)}.
    \]
    Expanding out the mixture, we find
    \begin{align*}
        \Pr[\Xb \in L \text{ and } \iso(\Xb) = 1] &= \sum_{i = 1}^{2^\ell} p_i \Pr[\Yb_i \in L \text{ and } \iso(\Yb_i) = 1] \\
        &\le \sum_{i:p_i < 2^{-t}}p_i + \sum_{i:p_i \ge 2^{-t}} p_i\Pr[\Yb_i \in L_{\Yb_i} \text{ and } \iso(\Yb_i) = 1] \\
        &\le 2^{-(t-\ell)} + \beta. \qedhere
    \end{align*}
\end{proof}

Recalling that every communication source $(\Xb,\Yb)$ of cost $c$ can be expressed as a mixture of at most $2^c$ distributions over pairs of independent variables yields the following.

\begin{corollary}\label{cor:rext_communication}
There exists a constant $\delta > 0$ such that the following holds. Let $\Xcal$ be the class of all communication sources on $(\bin^n)^2$ of cost at most $\delta n$. For any integer $m \leq \delta n$, there is an explicit $(2^{-m} - 2^{-\Omega(n)}, 2^{-m} + 2^{-\Omega(n)}, (1-\Omega(1))n)$-isolator for $\Xcal$.
\end{corollary}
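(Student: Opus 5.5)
We obtain \Cref{cor:rext_communication} by composing the two preceding results. Start from \Cref{cor:isolator_for_two_sources}, which gives, for every $m \le \delta_0 n$, an explicit $(2^{-m} - 2^{-\Omega(n)}, 2^{-m} + 2^{-\Omega(n)}, k_0)$-isolator $\iso$ for pairs of independent sources on $\bin^n \times \bin^n$, where $k_0 = (1-\gamma)n$ for some constant $\gamma > 0$. Then apply the mixture lemma just proved. By \cite{AST+03}, every communication source of cost $c$ is a mixture of at most $2^{c}$ product distributions. So take $\Ycal$ to be the class of independent pairs and $\ell = c \le \delta n$. The lemma turns an $(\alpha,\beta,k-t')$-isolator for $\Ycal$ into an $(\alpha, \beta + 2^{-(t'-\ell)}, k)$-isolator for communication sources. (Here $t'$ is the slack parameter of the lemma, renamed to avoid a clash with other uses of $t$.)

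Parameters. Set $t' = 2\delta n$, and set $k = k_0 + t' = (1-\gamma + 2\delta)n$. With $\ell \le \delta n$, the extra error is $2^{-(t'-\ell)} \le 2^{-\delta n} = 2^{-\Omega(n)}$. Adding this to $\beta = 2^{-m} + 2^{-\Omega(n)}$ keeps the second parameter of the form $2^{-m} + 2^{-\Omega(n)}$, and the first parameter $\alpha$ is untouched. Note that the mixture in \cite{AST+03} has \emph{at most} $2^c$ components, and the lemma is stated for exactly $2^\ell$; padding with zero-weight components fixes this.

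The only real constraint is that $k$ remains of the form $(1-\Omega(1))n$, which needs $2\delta < \gamma$. Choose $\delta = \min(\delta_0, \gamma/4)$. The same $\delta$ then bounds both the cost $\delta n$ and the output length $m \le \delta n$, and $k = (1-\gamma/2)n = (1-\Omega(1))n$. Explicitness is inherited directly, since the isolator is the same function $\iso$ as in \Cref{cor:isolator_for_two_sources}.

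I expect no serious obstacle. The one step to check is that \Cref{cor:isolator_for_two_sources} really has min-entropy slack $\gamma$ bounded away from zero, so that it can absorb $t' = 2\delta n$. That corollary comes from the robustness lemma with $k > k_1 + k_2$, $k_2$ close to $n$, and $k_1 > k_0 \approx (n+m)/2$. So its threshold is about $(3/2 + o(1))n$ in the $2n$-bit domain, which is a constant fraction below $2n$. If the dimension convention is read as $2n$ total bits, the same argument works with constants rescaled.
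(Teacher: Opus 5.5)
Your proof is correct and follows the paper's route: the paper obtains this corollary in one line by feeding \Cref{cor:isolator_for_two_sources} into the mixture lemma, using the fact that a cost-$c$ protocol is a mixture of at most $2^c$ product distributions. Your choice of slack ($t' = 2\delta n$, with $\delta$ small relative to the entropy gap) simply makes those parameters explicit. Your closing observation is also right, up to one small correction. The domain is $\bin^{2n}$, and the two-source threshold is roughly $1.5n$ plus a small linear slack (not $o(1)n$). So the ``$(1-\Omega(1))n$'' in the statement must be read relative to the total length $2n$, which is exactly how it is used in \Cref{thm:hard_dist_communication}.
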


We can now prove our main result about communication sources, restated below.

\thmharddistcommunication*

\begin{proof}
We choose parameters to optimize the final distance with respect to the distribution length $(2n+1)t$ while ensuring the construction takes only $\poly(N)$ time.
Let $\delta$ be the constant in \cref{cor:rext_communication}. Let $n=\ceilbra{c/\delta_2 }$ for a small constant $\delta_2 > 0$ and $t = \floorbra{N/n}$. 
If $\delta_1 \leq \delta_2/2$, then $c \leq \delta_1 N$ implies $n \leq N$ and $t \geq 1$.
We have $\log(t+1) \leq \delta_3 n$ for a small constant $\delta_3 > 0$ because $n \geq c/\delta_2 \geq C\log N/\delta_2$ and $t \leq N-1$ as long as we pick the constants to satisfy $\delta_2/C \leq \delta_3$.
Let $m$ be the largest integer such that if we define $p = 2^{-m}$, then $t \geq \frac{1}{p} \log \frac{1}{p}$. Observe that $t = \Theta(m2^m)$. We have $ m\leq \log t \leq \delta_3 n \leq \delta n$ if we pick $\delta_3 \leq \delta$.
Let $\alpha = 2^{-m} - 2^{-\Omega(n)}, \beta= 2^{-m} + 2^{-\Omega(n)}, k = (1-\Omega(1))n$ be such that \cref{cor:rext_communication} gives an $(\alpha, \beta, k)$-isolator $\iso$ for $\Ycal$.

To apply \cref{thm:main}, we first verify that for any communication source $(\Xb, \Yb)$ in $\Xcal$, there is a communication source in $\Ycal$ generating $\add_{n, t}{(\Xb, \Yb)}$. Suppose we have a protocol $P$ of cost $c$ generating $(\Xb, \Yb)$, which we view as \[((\Xb_1, \Xb_2, \dots, \Xb_t), (\Yb_1, \dots, \Yb_t, \zb_1, \zb_2, \dots, \zb_t)).\]
The protocol $Q$ for $\add(\Xb, \Yb)$ works as follows. Start by running the protocol $P$ (but do not output according to it). At this point, Bob knows $(\Yb_1, \dots, \Yb_t, \zb_1, \zb_2, \dots, \zb_t)$ that he would output according to $P$. Let $i \in [t+1]$ be the smallest index such that $\zb_i = 1$. (If $\zb_j = 0$ for all $j$, then $i = t+1$.) Bob sends $i$ to Alice using $\ceilbra{\log(t+1)}$ bits. Now Alice outputs $\Xb_i$ where $(\Xb_1, \Xb_2, \dots, \Xb_t)$ is what we should have output based on the transcript according to protocol $P$. Similarly Bob outputs $\Yb_i$. The total cost is at most $c + \ceilbra{\log(t+1)} \leq \delta_2 n + \delta_3 n \leq \delta n$ for sufficiently small $\delta_2, \delta_3$.

Now applying \cref{thm:main}, we obtain
\begin{align*}
1-\tvdist{(\Xb, \Yb) - \Db} &\le (1-\alpha)^{t+1} +\beta +2^{-(n-k)}\pbra{2t^3 + 1} \\
&\le \exp(-p(t+1))+ p + 2^{-\Omega(n)} \\
&\le O(p) \leq O\left(\frac{\log t}{t}\right) \\
&\le O\left(\frac{c}{N}\log\frac{N}{c}\right).
\end{align*}
In the second inequality, we  used that $t \leq 2^{\delta_3 n}$ for small enough $\delta_3$ (depending on $k = (1-\Omega(1))n$).  
In the last inequality, we used $t =\Omega(N/c)$. 

We now specialize to the case of $t = 1$ and $\tilde{\Db} = (\Ub^A,\Ub^B, \tilde{\iso}(\Ub^A,\Ub^B))$ where $\tilde{\iso}$ is an $(1/2 - 2^{-\Omega(n)}, 1/2+2^{-\Omega(n)}, (1-\Omega(1))n)$-isolator. Invoking \cref{thm:main}, we obtain
\[
    \tvdist{\tilde{\Db} - (\tilde{\Xb}, \tilde{\Yb})} \geq 1/4 - 2^{-\Omega(n)}. \qedhere
\]
\end{proof}

\subsubsection{Small-space sources}

We now discuss small-space sources, which correspond to the streaming model of computation.
We will model these sources by read-once branching programs (ROBPs) as done by Chattopadhyay, Goodman, and Zuckerman \cite{chattopadhyay2022space}, although we note there is an alternative, albeit roughly equivalent \cite{chattopadhyay2022space}, model defined by Kamp, Rao, Vadhan, and Zuckerman \cite{KRVZ11}.

\begin{definition}[Small-Space Source]\label{def:small_space_source}
    A \emph{space-$s$ source} $\Xb$ on $\bin^n$ is a source generated by a width $2^s$ multi-output read-once branching program.
    More precisely, the branching program is viewed as a layered graph with $m$ layers (for some $m \geq 2$) with a single start vertex in the first layer and $2^s$ vertices in each subsequent layer. Each vertex has two outgoing edges, each labeled with some string in $\bin^*$. We assume that all strings in a layer have the same length.
    The source is generated by taking a random walk starting from the start vertex, picking the next edge uniformly at random at each step and outputting the corresponding strings on the edges in order.
\end{definition}

As observed in \cite{chattopadhyay2022space}, for any space-$s$ source $\Xb$ and any contiguous partition $\Xb = (\Xb_1, \Xb_2)$, there is a communication source of cost $s+1$ sampling $(\Xb_1, \Xb_2)$. 
We emphasize that while the length of $\Xb_1$, say $\ell$, is arbitrary, we require $\Xb_1$ to consist of the first $\ell$ bits of $\Xb$.
By combining this observation with \cref{thm:hard_dist_communication}, we immediately obtain the following.

\begin{corollary}\label{cor:hard_dist_small_space}
    There exist constants $\delta, \delta_1,  C > 0$ such that the following holds. 
    Let $s$ be an integer satisfying $C\log N \le s \leq \delta_1 N$. 
    There exist parameters $n$, $t$ satisfying $nt \leq N$ and an isolator $\iso$ with suitable parameters for the class of communication sources on $(\bin^n)^2$ of cost $\delta n$ such that the following holds.
    
    Define the distribution
    \[
        \Db=(\Ub_1^A, \Ub_2^A, \dots, \Ub_t^A, \Ub_1^B, \Ub_2^B, \dots, \Ub_t^B, \iso(\Ub_1^A, \Ub_1^B), \iso(\Ub_2^A, \Ub_2^B),\dots, \iso(\Ub_t^A, \Ub_t^B)),
    \]
    where $\Ub_1^A, \Ub_2^A, \dots, \Ub_t^A, \Ub_1^B, \Ub_2^B, \dots, \Ub_t^B$ are all independent copies of $\Ub^n$.
    Let $\Xcal$ be the class of all space-$s$ sources on $(\bin^n)^{2t} \times \bin^t$. Then for any $\Xb \in \Xcal$,
    \[
        \tvdist{\Db - \Xb} \geq 1 - O\left(\frac{s}{N} \log \frac{N}{s}\right).
    \]    
    Moreover, the distribution $\Db$ can be sampled in $\poly(N)$ time.
    
    Additionally, there exists an isolator $\tilde{\iso} \colon \bin^n \to \bin$ (with possibly different parameters) such that the distribution $\tilde{\Db} = (\Ub^n, \tilde{\iso}(\Ub^n))$ satisfies $\tvdist{\tilde{\Db} - \Yb} \geq 1/4 - 2^{-\Omega(n)}$ for all space-$s$ sources $\Yb$ on $\bin^{n+1}$.
    $\tilde{\Db}$, too, can be sampled in $\poly(N)$ time.
\end{corollary}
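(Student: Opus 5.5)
The plan is to reduce \Cref{cor:hard_dist_small_space} directly to \Cref{thm:hard_dist_communication}. We use the same $\Db$ and the same isolator $\iso$ from that theorem, with the cost parameter $c$ set to $s+1$. The only thing to check is that every space-$s$ source on $(\bin^n)^{2t}\times\bin^t$ is also a communication source of cost $s+1$ of the shape required there. Once that holds, the distance bound $1-O(\frac{c}{N}\log\frac{N}{c})$ becomes $1-O(\frac{s}{N}\log\frac{N}{s})$, since $c=s+1=\Theta(s)$. The range $C\log N\le s\le \delta_1 N$ maps to $C\log N\le c\le \delta_1' N$ after adjusting constants. Sampling time of $\Db$ is unchanged.

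For the structural step, view a space-$s$ source $\Xb$ with its coordinates in the stated order $(\Ub^A_1,\dots,\Ub^A_t,\Ub^B_1,\dots,\Ub^B_t,\text{bits})$. Split it contiguously into $\Xb_1$, the first $tn$ bits, and $\Xb_2$, the remaining $tn+t$ bits. By the observation from \cite{chattopadhyay2022space}, there is a communication protocol of cost $s+1$ sampling $(\Xb_1,\Xb_2)$. Alice simulates the ROBP up to the layer where the first $tn$ bits have been emitted and sends the current state ($s$ bits, plus one bit of slack for layer alignment). Bob then continues the walk from that state.

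The one subtlety is that the cut after $tn$ bits may fall in the middle of an edge label. If so, we first refine the ROBP by splitting each edge label into single-bit steps using deterministic intermediate vertices. This could increase the width, so the cleaner fix is to let Alice send the state at the layer containing the cut. Bob then regenerates that layer's step, and Alice outputs the prefix of the label while Bob outputs the suffix. This costs one extra bit (the random edge choice), which accounts for the $+1$. This matches exactly the class $\Xcal$ in \Cref{thm:hard_dist_communication}: Alice outputs on $(\bin^n)^t$, and Bob outputs on $(\bin^n)^t\times\bin^t$.

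The ``additionally'' part is handled the same way with $t=1$. We split a space-$s$ source on $\bin^{n+1}$ as (first $n/2$ bits, rest), treat $\tilde{\Db}$ as $(\Ub^A,\Ub^B,\tilde{\iso}(\Ub^A,\Ub^B))$ on halves, and apply the final clause of \Cref{thm:hard_dist_communication}. The main obstacle, such as it is, is this bookkeeping: making sure the coordinate order of $\Db$ puts all of Alice's blocks first, so that the contiguous-split observation applies. That ordering is why the corollary is stated with the $\Ub^A$ blocks preceding the $\Ub^B$ blocks.
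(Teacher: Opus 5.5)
Your proposal is correct and follows the paper's argument. The paper likewise uses the observation of \cite{chattopadhyay2022space} that any contiguous prefix/suffix split of a space-$s$ source is a communication source of cost $s+1$, which is why the $\Ub^A$ blocks come first, and then applies \Cref{thm:hard_dist_communication} with $c = s+1$. Your sketch of the cost-$(s+1)$ simulation, and your split of $\bin^{n+1}$ into $n/2$ and $n/2+1$ bits for the $t=1$ clause, only fill in details the paper leaves implicit. The simulation sends the state at the cut layer plus one random edge bit, relying on all labels in a layer having equal length.
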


\subsubsection{Turing-machine sources}\label{ssec:TM}

The final sources we consider in this work are generated by Turing machines.

\begin{definition}[Turing-Machine Source]\label{def:TM_source}
We consider randomized Turing machines with one (read-write) tape and one tape head with alphabet $\bin$. 
The tape is initialized to all zeros.
In one step, based on a uniform random bit (independent of previous random bits), the current state, and the symbol at the current position, the machine writes to the cell, updates the state, and moves the head to an adjacent cell. 
We do not require the machine to halt.

A \emph{Turing-machine source} on $n$ bits given by a machine $M$ running in time $T$ is sampled as follows. Run $M$ for $T$ steps and output the first $n$ bits on its tape.
\end{definition}

Viola \cite{viola2012extractors} showed that Turing-machine sources can be simulated by communication sources.

\begin{lemma}[\cite{viola2012extractors}] \label{lem:tm_to_comm}
    Let $\Zb$ be an $n$-bit Turing-machine source sampled by a machine with $Q$ states in time $T$. 
    Let $n_A, n_B, b$ be positive integers satisfying $n_A + n_B = n$ and $b \leq n_B$. 
    Let $\Zb = (\Xb, \Yb)$ where $\Xb \in \bin^{n_A}$ and $\Yb \in \bin^{n_B}$. 
    There exists a communication protocol generating $(\Xb, \Yb)$ whose cost is $O(\log Q \cdot T \log T/b)+b$.
\end{lemma}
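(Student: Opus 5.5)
The plan is to simulate the machine by splitting the tape at a cut that lies inside the $b$ cells right after Alice's part, and to make the players talk only when the head crosses the cut. Let $n_A$ be the boundary between $\Xb$ and $\Yb$, and consider the $b$ candidate cut positions between cell $n_A+j$ and cell $n_A+j+1$ for $j=0,\dots,b-1$ (possible since $b\le n_B$).

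\textbf{Simulation for a fixed cut $j$.} Alice owns every tape cell left of the cut and Bob every cell right of it. Whoever owns the cell under the head simulates the current step, drawing the step's fresh random bit from their private randomness. When the head moves across the cut, the current simulator sends the other player the machine state ($\lceil\log Q\rceil$ bits) and the current time step ($\lceil\log T\rceil$ bits); the receiver then continues. After $T$ steps, Alice holds cells $1,\dots,n_A$ and outputs $\Xb$. She then sends Bob the at most $j\le b$ cells between $n_A$ and the cut, so Bob can output $\Yb$; this message accounts for the additive $+b$. The step-by-step joint distribution of the run equals that of $M$, since each random bit is fresh and independent. The communication is $O((\log Q+\log T)\cdot r_j)$, where $r_j$ is the number of crossings of cut $j$.

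\textbf{Existence of a cheap cut.} Each step crosses at most one of the $b$ cuts, so in every run $\sum_j r_j\le T$. Hence some cut has $r_j\le T/b$, giving cost $O(\log Q\cdot\log T\cdot T/b)+b$ for that run, which matches the claimed bound.

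\textbf{Main obstacle.} The cheap cut depends on the run, but a protocol must have worst-case cost. So I cannot fix $j$ in advance and simply truncate, since truncating would distort the output distribution.

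My first attempt is to let the cut be chosen online. Both players maintain, for every candidate cut, a running crossing count; this is possible because crossings are announced with timestamps. Whenever the current cut exceeds its budget of about $2T/b$ crossings, the current head owner hands over the contents of the at most $b$ cells and the cut moves to a candidate that is still under budget. Since the total number of crossings is at most $T$, fewer than half of the cuts can ever exceed the budget, so a valid cut always remains.

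The risk is the cost of these handovers. Each one costs $O(b)$ bits, and there could be many of them. I would try to bound this by handing over only the cells between the old and new cut positions. I would also move monotonically, for example always toward the Bob side of the window, so that the total cell traffic telescopes to $O(b)$. If that fails, the fallback is to argue through the crossing-sequence decomposition: for a fixed cut and a fixed crossing sequence, the left and right computations are independent. This writes $(\Xb,\Yb)$ as a mixture of at most $2^{O(\log Q\cdot T\log T/b)}$ product distributions, and that mixture form is exactly what the later applications of this lemma use.
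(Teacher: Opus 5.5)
Your fixed-cut simulation and the averaging $\sum_j r_j\le T$ are fine. But there is a genuine gap: your main line, the online moving cut, does not give the bound. The problem is not the handover traffic (monotone moves do make the cell traffic telescope to $O(b)$). It is the announced crossings. The budget of $2T/b$ bounds the crossings of each individual cut, not the number of crossings announced over the whole run. Up to about $b/2$ cuts can be used in turn, and each can absorb about $2T/b$ announced crossings before it is abandoned, so the total can be $\Theta(T)$.

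Concretely, take $n_A$ small and a machine whose head oscillates across the first candidate boundary $2T/b+1$ times, then shifts one cell right and repeats. Under your rule of moving to the nearest under-budget cut on Bob's side, essentially every step crosses the current cut, so roughly $T$ crossings are announced. Meanwhile the offline optimum is a boundary the head never reaches. This machine needs only $O(T/b)$ states, so for, say, $b=T^{2/3}$ the cost $\Omega(T)$ far exceeds $O(\log Q\cdot T\log T/b)+b$. Also, only crossings of the current cut are announced, so the players do not both know all the counts; only the owner of a region does.

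The missing idea is that the cut need not be chosen online. Alice can simulate the entire run with her private randomness and only afterwards pick the cheap cut. This is your fallback, and it is the route the paper takes (citing Viola's Lemma 1.3, then a conditioning step). However, the two points that make it work are absent from your sketch.

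First, the cut is a function of the run, so you must check that conditioning on its selection preserves independence. Let $j$ be the first candidate whose crossing sequence $\sigma$ (states with time stamps) has length at most $T/b$. Given $\sigma$ at $j$, the probability of a run factors into a left part and a right part. The crossing sequences at cuts left of $j$ are determined by the left part, so the event that $j$ is selected is a left event, and the conditional distribution stays a product. Alice then sends $(j,\sigma)$ using $O((T/b)(\log Q+\log T))+\lceil\log b\rceil$ bits, Bob samples his side conditioned on $\sigma$, and Alice keeps hers.

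Second, this yields products over (cells left of cut $j$, cells right of cut $j$), not over $(\Xb,\Yb)$. The at most $b$ cells of $\Yb$ lying on Alice's side must still be sent to Bob. Equivalently, as in the paper, you condition on them, multiplying the number of components by $2^b$. That is exactly where the $+b$ in the statement comes from. So your claim that the decomposition writes $(\Xb,\Yb)$ itself as a mixture of $2^{O(\log Q\cdot T\log T/b)}$ products is unjustified; the correct count is $2^{O(\log Q\cdot T\log T/b)+b}$.
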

The above statement does not appear in exactly this form in \cite{viola2012extractors}, so we briefly explain how it is obtained from ideas there. Lemma 1.3 in \cite{viola2012extractors} shows that a Turing-machine source as in the above statement can be written as a convex combination of $2^{O(\log Q \cdot T \log T/b)}$ many pairs of independent sources. The statement there partitions the $n$ tape cells into regions of size $(\ell, b,\ell)$, but the proof also works if it is partitioned into $n_A, b, n_B - b$ where $n_A$ is not necessarily equal to $n_B - b$. For each pair $(\Xb_i, \Yb_i)$ occurring in this convex combination, $n_A \leq |\Xb_i| \leq n_A + b$. To obtain independent sources $(\Xb_i', \Yb_i')$ where $|\Xb_i'| = n_A, |\Yb_i'| = n_B$, we write $\Xb_i$ as a convex combination obtained by conditioning on the last $|\Xb_i| - n_A \leq b$ bits and after this conditioning, move these bits to the second source. This way of conditioning appears in the proof of Theorem 1.2 in \cite{viola2012extractors}.

Combining \cref{lem:tm_to_comm} with \cref{thm:hard_dist_communication}, we obtain the following.

\begin{corollary}\label{cor:hard_dist_TM}
    There exist constants $\delta, \delta_1, C > 0$ such that the following holds. Let $Q$ and $T$ be positive integers. Let $b = \ceilbra{\sqrt{\log Q \cdot T \log T}}$. Suppose $C\log N \le b \leq \delta_1 N$. There exist parameters $n$, $t$ satisfying $nt \leq N$ and an explicit isolator $\iso$ with suitable parameters for the class $\Ycal$ of communication sources on $(\bin^n)^2$ of cost $\delta n$ such that the following holds.
    
    Define the distribution
    \[
        \Db=(\Ub_1^A, \Ub_2^A, \dots, \Ub_t^A, \Ub_1^B, \Ub_2^B, \dots, \Ub_t^B, \iso(\Ub_1^A, \Ub_1^B), \iso(\Ub_2^A, \Ub_2^B),\dots, \iso(\Ub_t^A, \Ub_t^B)),
    \]
    where $\Ub_1^A, \Ub_2^A, \dots, \Ub_t^A, \Ub_1^B, \Ub_2^B, \dots, \Ub_t^B$ are all independent copies of $\Ub^n$.
    Let $\Xcal$ be the class of all Turing-machine sources $\Xb$ with $Q$ states and running in time $T$. Then for any $\Xb \in \Xcal$,
    \[
        \tvdist{\Db - \Xb} \geq 1 - O\left(\frac{\sqrt{\log Q \cdot T \log T}}{N} \log \frac{N}{\sqrt{\log Q \cdot T \log T}}\right).
    \]
    Moreover, the distribution $\Db$ can be sampled in $\poly(N)$ time.
    
    Additionally, there exists an isolator $\tilde{\iso} \colon \bin^n \to \bin$ (with possibly different parameters) such that the distribution $\tilde{\Db} = (\Ub^n, \tilde{\iso}(\Ub^n))$ satisfies $\tvdist{\tilde{\Db} - \Yb} \geq 1/4 - 2^{-\Omega(n)}$ for all Turing-machine sources $\Yb$ with $Q$ states and running in time $T$.
    $\tilde{\Db}$, too, can be sampled in $\poly(N)$ time.
\end{corollary}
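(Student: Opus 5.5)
The plan is to reduce to \Cref{thm:hard_dist_communication} through \Cref{lem:tm_to_comm}, in the same way \Cref{cor:hard_dist_small_space} follows from the communication case. Take $\Xb$ to be a Turing-machine source with $Q$ states running in time $T$. Its output has $2nt+t$ bits, and we read them in the order $(\Ub_1^A,\dots,\Ub_t^A,\Ub_1^B,\dots,\Ub_t^B,\text{bits})$. Split the tape into a first block of $n_A = nt$ cells, which will be Alice's part (the $A$-copies), and a second block of $n_B = nt + t$ cells, which will be Bob's part (the $B$-copies followed by the $t$ isolator bits). Apply \Cref{lem:tm_to_comm} to this split with $b = \ceilbra{\sqrt{\log Q\cdot T\log T}}$. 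This gives a communication protocol sampling the pair $(\Xb_{\le nt},\Xb_{>nt})$ of cost
\[
O\Bigl(\frac{\log Q\cdot T\log T}{b}\Bigr)+b = O(b).
\]
The lemma needs $b\le n_B$, and we will secure this when choosing parameters.

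This protocol has exactly the shape of the class $\Xcal$ in \Cref{thm:hard_dist_communication}: Alice outputs $t$ blocks of $n$ bits, and Bob outputs $t$ blocks of $n$ bits plus $t$ bits. So we invoke that theorem with communication cost $c = C_0 b$, where $C_0$ is the constant hidden in the $O(b)$ above. Its hypothesis $C\log N\le c\le \delta_1 N$ follows from our assumption $C\log N\le b\le\delta_1 N$ once the constants $C,\delta_1$ of the corollary are rescaled by $C_0$. The theorem then supplies $n,t$ with $nt\le N$, the isolator $\iso$ for cost-$\delta n$ communication sources, and the bound
\[
\tvdist{\Db-\Xb}\ge 1-O\Bigl(\frac{c}{N}\log\frac{N}{c}\Bigr).
\]
Substituting $c=\Theta(b)$ and using that $x\mapsto x\log(N/x)$ is monotone in the relevant range gives the stated $O\bigl(\frac{b}{N}\log\frac Nb\bigr)$ with $b=\Theta(\sqrt{\log Q\cdot T\log T})$. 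Sampling $\Db$ in $\poly(N)$ time is inherited directly from the theorem. The ``additionally'' part is handled the same way with $t=1$: split the $(n+1)$-bit tape (or $2n+1$ bits, matching the theorem's $(\Ub^A,\Ub^B,\tilde\iso)$ format) into Alice's and Bob's parts, and apply the $t=1$ clause of \Cref{thm:hard_dist_communication}.

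The main obstacle is bookkeeping rather than new mathematics. First, the condition $b\le n_B$ must hold; since the theorem's proof takes $n=\Theta(c/\delta_2)\gtrsim b$, we get $n_B\ge n\ge b$ after adjusting constants. Second, the constants must be chosen consistently: the protocol cost $O(b)$ has to fit within the theorem's admissible range for $c$, which means choosing $\delta_1$ smaller and $C$ larger by the factor $C_0$. Finally, if the machine's tape layout puts the bits in a different order than $\Db$, we would permute coordinates. The remark after \Cref{thm:main} allows this, provided the split keeps each party's coordinates contiguous. With our chosen ordering the split is already contiguous, so no permutation is needed.
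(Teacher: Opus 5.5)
Your proposal is correct and follows essentially the same route as the paper, which obtains this corollary in one line by combining \Cref{lem:tm_to_comm} (with $b=\ceilbra{\sqrt{\log Q\cdot T\log T}}$, giving a protocol of cost $O(b)$) with \Cref{thm:hard_dist_communication} at communication cost $c=\Theta(b)$. Your additional bookkeeping fills in details the paper leaves implicit: the contiguous Alice/Bob split at $nt$, the check that $b\le n_B$ via $n=\Theta(c)$, and the rescaling of $C$ and $\delta_1$.
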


\section{Open Problems}\label{sec:open_problems}

We conclude by highlighting several directions for future research.
One notable goal is to bring our understanding of sampling by small circuits with parity gates (i.e., $\ac[\oplus]$) up to the frontier of what is known in the computational setting.

\begin{enumerate}
    \item\label[question]{open_q1} Provide an explicit distribution over $\bin^n$ which has distance $1-o(1)$ from the output of any $\ac[\oplus]$ circuit with $n$ outputs, $\poly(n)$ many gates, and arbitrarily many random input bits.
\end{enumerate}

Even obtaining a much weaker bound would be novel; to the best of our knowledge, no hardness results for explicit\footnote{Existential results can be found in \Cref{app:non-constructive}.} distributions exist beyond the trivial arguments of finding a hard distribution for extremely small circuits via brute force, or appealing to the limitations posed by binary precision. 
For example, the $(1/3)$-biased distribution over $\bin$ cannot be exactly generated by a circuit with $r$ random bits as input, since each output occurs with probability that is an integer multiple of $2^{-r}$.
Of course, one could also ask for a similarly hard distribution in the case of circuits with mod $p$ gates for any other prime $p$.
Without these more powerful gates, it is known that such circuits cannot accurately generate the uniform distribution over the codewords of a good code \cite{lovett2011bounded, beck2012large}.
Unfortunately, the arguments do not seem to generalize.

In light of \Cref{thm:hard_dist_poly}, a reasonable approach to \Cref{open_q1} is to construct an isolator for $\Fbb_2$-polynomial sources of polylogarithmic degree, and then appeal to classical results on polynomial approximations of circuits \cite{razborov1987lower, smolensky1987algebraic}.
This is a more general degree constraint than the best-known explicit extractors can handle \cite{chattopadhyay2024extractors}, but we are optimistic further improvements in these extractors will be flexible enough for our techniques to apply.
It is also worth emphasizing that extractors and isolators are incomparable objects, and it is plausible that the latter may be easier to construct in certain settings.

An alternative strengthening of \Cref{thm:hard_dist_poly} is to improve the quantitative behavior; while we can prove a $1-o(1)$ bound, the specific decay rate of the $o(1)$ term is suboptimal.

\begin{enumerate}[resume]
    \item Provide an explicit distribution over $\bin^n$ which has distance $1 - \exp(-n^{\Omega_d(1)})$ from the output of any degree-$d$ $\Fbb_2$-polynomial source. \label{itm:Question2}
\end{enumerate}

We note that one can prove such a bound in the case of $d = 1$, where the source $P(\Ub^r)$ is uniform over an affine subspace.
Letting $\ext\colon\bin^n \to \bin^m$ be an $(\eps,k)$-extractor for affine sources, we can set the hard distribution $\Db$ to be uniform over the preimage $\ext^{-1}(0^m)$.
If $P(\Ub^r)$ has min-entropy at least $k$, then it has distance at least $1 - 2^{-m} - \eps$ from $\Db$.
Otherwise, $P(\Ub^r)$ is uniform over a set of size less than $2^k$, so it has distance at least $1 - 2^{-(n-k)}/(2^{-m}-\eps)$ from $\Db$.
By known explicit constructions of such objects \cite{Bou07, Yeh11, Li11}, we can take $k = \Omega(n)$, $\eps = 2^{-\Omega(n)}$, and $m=\Omega(n)$ with a sufficiently small implicit constant to conclude $\tvdist{P(\Ub^r) - \Db} \ge 1 - \exp(-\Omega(n))$.

Sadly, this argument already breaks down for $d=2$, since we no longer get meaningful information about $\supp{P(\Ub^r)}$ in the low min-entropy case.
Moreover, our current framework does not appear strong enough to address \hyperref[itm:Question2]{Question~\ref*{itm:Question2}}, and it would be interesting to determine whether the framework itself can be strengthened.

\begin{enumerate}[resume]
    \item Prove a quantitatively stronger version of \Cref{thm:main}.
\end{enumerate}

More specifically, it would be desirable to have the $\beta$ term in \Cref{thm:main} decrease as $t$ increases.
It also seems worthwhile to try to improve the $t=1$ setting to be able to obtain the optimal form $\frac{1}{2} - o(1)$, rather than our current $\frac{1}{4} - o(1)$.
In both cases, one may wish to consider a variant of extractors even beyond robust extractors or isolators.

\section{Acknowledgements}

We thank Jesse Goodman and Mohit Gurumukhani for a number of helpful comments on an earlier draft.
In particular, we are grateful for the suggestion of and discussion about ``smooth extractors'' as mentioned in \Cref{ssec:ext_and_rext}.
We also thank anonymous reviewers for their useful feedback.

\bibliographystyle{alpha}
\bibliography{ref}

@inproceedings{kane2024locality,
  title={Locality bounds for sampling {H}amming slices},
  author={Kane, Daniel M and Ostuni, Anthony and Wu, Kewen},
  booktitle={Proceedings of the 56th Annual ACM Symposium on Theory of Computing},
  pages={1279--1286},
  year={2024}
}

@article{viola2020sampling,
  title={Sampling lower bounds: boolean average-case and permutations},
  author={Viola, Emanuele},
  journal={SIAM Journal on Computing},
  volume={49},
  number={1},
  pages={119--137},
  year={2020},
  publisher={SIAM}
}

@inproceedings{chattopadhyay2022space,
  title={The space complexity of sampling},
  author={Chattopadhyay, Eshan and Goodman, Jesse and Zuckerman, David},
  booktitle={13th Innovations in Theoretical Computer Science Conference (ITCS)},
  year={2022}
}

@inproceedings{shaltiel2024explicit,
  title={Explicit codes for poly-size circuits and functions that are hard to sample on low entropy distributions},
  author={Shaltiel, Ronen and Silbak, Jad},
  booktitle={Proceedings of the 56th Annual ACM Symposium on Theory of Computing},
  pages={2028--2038},
  year={2024}
}

@incollection {WP26,
    AUTHOR = {Bene Watts, Adam and Parham, Natalie},
     TITLE = {Unconditional quantum advantage for sampling with shallow
              circuits},
 BOOKTITLE = {17th {I}nnovations in {T}heoretical {C}omputer {S}cience
              {C}onference},
    SERIES = {LIPIcs. Leibniz Int. Proc. Inform.},
    VOLUME = {362},
     PAGES = {Art. No. 17, 12},
 PUBLISHER = {Schloss Dagstuhl. Leibniz-Zent. Inform., Wadern},
      YEAR = {2026},
      ISBN = {978-3-95977-410-9},
   MRCLASS = {68Q12 (68Q06)},
  MRNUMBER = {5022595},
       DOI = {10.4230/lipics.itcs.2026.17},
       URL = {https://doi.org/10.4230/lipics.itcs.2026.17},
}

@article{viola2014extractors,
  title={Extractors for circuit sources},
  author={Viola, Emanuele},
  journal={SIAM Journal on Computing},
  volume={43},
  number={2},
  pages={655--672},
  year={2014},
  publisher={SIAM}
}

@inproceedings{lovett2011bounded,
  title={Bounded-depth circuits cannot sample good codes},
  author={Lovett, Shachar and Viola, Emanuele},
  booktitle={2011 IEEE 26th Annual Conference on Computational Complexity},
  pages={243--251},
  year={2011},
  organization={IEEE}
}

@inproceedings{beck2012large,
  title={Large deviation bounds for decision trees and sampling lower bounds for {AC0}-circuits},
  author={Beck, Chris and Impagliazzo, Russell and Lovett, Shachar},
  booktitle={2012 IEEE 53rd Annual Symposium on Foundations of Computer Science},
  pages={101--110},
  year={2012},
  organization={IEEE}
}

@phdthesis{haastad1986computational,
  title={Computational limitations for small depth circuits},
  author={H{\aa}stad, Johan},
  year={1986},
  school={Massachusetts Institute of Technology}
}

@inproceedings{smolensky1987algebraic,
  title={Algebraic methods in the theory of lower bounds for {B}oolean circuit complexity},
  author={Smolensky, Roman},
  booktitle={Proceedings of the nineteenth annual ACM symposium on Theory of computing},
  pages={77--82},
  year={1987}
}

@article{babai1987random,
  title={Random oracles separate {PSPACE} from the polynomial-time hierarchy},
  author={Babai, L{\'a}szi{\'o}},
  journal={Information Processing Letters},
  volume={26},
  number={1},
  pages={51--53},
  year={1987},
  publisher={Elsevier}
}

@article{boppana1987one,
  title={One-way functions and circuit complexity},
  author={Boppana, Ravi B and Lagarias, Jeffrey C},
  journal={Information and Computation},
  volume={74},
  number={3},
  pages={226--240},
  year={1987},
  publisher={Elsevier}
}

@article{viola2012complexity,
  title={The complexity of distributions},
  author={Viola, Emanuele},
  journal={SIAM Journal on Computing},
  volume={41},
  number={1},
  pages={191--218},
  year={2012},
  publisher={SIAM}
}

@inproceedings{viola2023new,
  title={New sampling lower bounds via the separator},
  author={Viola, Emanuele},
  booktitle={38th Computational Complexity Conference (CCC 2023)},
  year={2023},
  organization={Schloss Dagstuhl-Leibniz-Zentrum f{\"u}r Informatik},
}

@inproceedings{yu2024sampling,
  title={Sampling, flowers and communication},
  author={Yu, Huacheng and Zhan, Wei},
  booktitle={15th Innovations in Theoretical Computer Science Conference (ITCS 2024)},
  pages={100--1},
  year={2024},
  organization={Schloss Dagstuhl--Leibniz-Zentrum f{\"u}r Informatik}
}

@article{de2012extractors,
  title={Extractors and lower bounds for locally samplable sources},
  author={De, Anindya and Watson, Thomas},
  journal={ACM Transactions on Computation Theory (TOCT)},
  volume={4},
  number={1},
  pages={1--21},
  year={2012},
  publisher={ACM New York, NY, USA}
}

@inproceedings{viola2012extractors,
  title={Extractors for {T}uring-machine sources},
  author={Viola, Emanuele},
  booktitle={International Workshop on Approximation Algorithms for Combinatorial Optimization},
  pages={663--671},
  year={2012},
  organization={Springer}
}

@article{viola2016quadratic,
  title={Quadratic maps are hard to sample},
  author={Viola, Emanuele},
  journal={ACM Transactions on Computation Theory (TOCT)},
  volume={8},
  number={4},
  pages={1--4},
  year={2016},
  publisher={ACM New York, NY, USA}
}

@inproceedings{chattopadhyay2024extractors,
  title={Extractors for polynomial sources over {$\mathbb{F}_2$}},
  author={Chattopadhyay, Eshan and Goodman, Jesse and Gurumukhani, Mohit},
  booktitle={15th Innovations in Theoretical Computer Science Conference (ITCS)},
  pages={28--1},
  year={2024},
  organization={Schloss Dagstuhl--Leibniz-Zentrum f{\"u}r Informatik}
}

@incollection {GGH+24,
    AUTHOR = {Golovnev, Alexander and Guo, Zeyu and Hatami, Pooya and
              Nagargoje, Satyajeet and Yan, Chao},
     TITLE = {Hilbert functions and low-degree randomness extractors},
 BOOKTITLE = {Approximation, randomization, and combinatorial optimization.
              {A}lgorithms and techniques},
    SERIES = {LIPIcs. Leibniz Int. Proc. Inform.},
    VOLUME = {317},
     PAGES = {Art. No. 41, 24},
 PUBLISHER = {Schloss Dagstuhl. Leibniz-Zent. Inform., Wadern},
      YEAR = {2024},
      ISBN = {978-3-95977-348-5},
   MRCLASS = {68Q06},
  MRNUMBER = {4802551},
       DOI = {10.4230/lipics.approx/random.2024.41},
       URL = {https://doi.org/10.4230/lipics.approx/random.2024.41},
}

@incollection {AGMR25,
    AUTHOR = {Alrabiah, Omar and Goodman, Jesse and Mosheiff, Jonathan and
              Ribeiro, Jo\~{a}o},
     TITLE = {Low-degree polynomials are good extractors},
 BOOKTITLE = {Approximation, randomization, and combinatorial optimization.
              {A}lgorithms and techniques},
    SERIES = {LIPIcs. Leibniz Int. Proc. Inform.},
    VOLUME = {353},
     PAGES = {Art. No. 38, 25},
 PUBLISHER = {Schloss Dagstuhl. Leibniz-Zent. Inform., Wadern},
      YEAR = {2025},
      ISBN = {978-3-95977-397-3},
   MRCLASS = {68Q87},
  MRNUMBER = {4967596},
       DOI = {10.4230/lipics.approx/random.2025.38},
       URL = {https://doi.org/10.4230/lipics.approx/random.2025.38},
}

@inproceedings{trevisan2000extracting,
  title={Extracting randomness from samplable distributions},
  author={Trevisan, Luca and Vadhan, Salil},
  booktitle={Proceedings 41st Annual Symposium on Foundations of Computer Science},
  pages={32--42},
  year={2000},
  organization={IEEE}
}

@article {KRVZ11,
    AUTHOR = {Kamp, Jesse and Rao, Anup and Vadhan, Salil and Zuckerman,
              David},
     TITLE = {Deterministic extractors for small-space sources},
   JOURNAL = {J. Comput. System Sci.},
  FJOURNAL = {Journal of Computer and System Sciences},
    VOLUME = {77},
      YEAR = {2011},
    NUMBER = {1},
     PAGES = {191--220},
      ISSN = {0022-0000,1090-2724},
   MRCLASS = {68Q87 (68W20)},
  MRNUMBER = {2767133},
       DOI = {10.1016/j.jcss.2010.06.014},
       URL = {https://doi.org/10.1016/j.jcss.2010.06.014},
}

@inproceedings{ball2025extractors,
  title={Extractors for Samplable Distributions with Low Min-Entropy},
  author={Ball, Marshall and Shaltiel, Ronen and Silbak, Jad},
  booktitle={Proceedings of the 57th Annual ACM Symposium on Theory of Computing},
  pages={596--603},
  year={2025}
}

@inproceedings{bellare1994randomness,
  title={Randomness-efficient oblivious sampling},
  author={Bellare, Mihir and Rompel, John},
  booktitle={Proceedings 35th Annual Symposium on Foundations of Computer Science},
  pages={276--287},
  year={1994},
  organization={IEEE}
}

@article {FSS84,
    AUTHOR = {Furst, Merrick and Saxe, James B. and Sipser, Michael},
     TITLE = {Parity, circuits, and the polynomial-time hierarchy},
   JOURNAL = {Math. Systems Theory},
  FJOURNAL = {Mathematical Systems Theory. An International Journal on
              Mathematical Computing Theory},
    VOLUME = {17},
      YEAR = {1984},
    NUMBER = {1},
     PAGES = {13--27},
      ISSN = {0025-5661},
   MRCLASS = {68Q25 (94C15)},
  MRNUMBER = {738749},
MRREVIEWER = {Claus-Peter\ Schnorr},
       DOI = {10.1007/BF01744431},
       URL = {https://doi.org/10.1007/BF01744431},
}

@article {Ajt83,
    AUTHOR = {Ajtai, Mikl{\'{o}}s},
     TITLE = {{$\Sigma \sp{1}\sb{1}$}-formulae on finite structures},
   JOURNAL = {Ann. Pure Appl. Logic},
  FJOURNAL = {Annals of Pure and Applied Logic},
    VOLUME = {24},
      YEAR = {1983},
    NUMBER = {1},
     PAGES = {1--48},
      ISSN = {0168-0072,1873-2461},
   MRCLASS = {03C13 (03B10 03B15 03C35)},
  MRNUMBER = {706289},
MRREVIEWER = {A.\ H.\ Lachlan},
       DOI = {10.1016/0168-0072(83)90038-6},
       URL = {https://doi.org/10.1016/0168-0072(83)90038-6},
}

@inproceedings{yao1985separating,
  title={Separating the polynomial-time hierarchy by oracles},
  author={Yao, Andrew Chi-Chih},
  booktitle={26th Annual Symposium on Foundations of Computer Science (sfcs 1985)},
  pages={1--10},
  year={1985},
  organization={IEEE}
}

@inproceedings{hastad1986almost,
  title={Almost optimal lower bounds for small depth circuits},
  author={H{\aa}stad, Johan},
  booktitle={Proceedings of the eighteenth annual ACM symposium on Theory of computing},
  pages={6--20},
  year={1986}
}

@article{razborov1987lower,
  title={Lower bounds on the size of bounded depth circuits over a complete basis with logical addition},
  author={Razborov, Alexander A},
  journal={Mathematical Notes of the Academy of Sciences of the USSR},
  volume={41},
  number={4},
  pages={333--338},
  year={1987},
  publisher={Springer}
}

@article{HH24,
    author = {Hatami, Pooya and Hoza, William},
    title = {Theory of Unconditional Pseudorandom Generators},
    journal = {Foundations and Trends in Theoretical Computer Science},
    volume = {16},
    number = {1-2},
    pages = {1-210},
    year = {2024},
    month = {02},
    issn = {1551-305X},
    doi = {10.1561/0400000109},
    url = {https://www.emerald.com/fttcs/article-pdf/16/1-2/1/11150456/0400000109en.pdf}
}

@inproceedings {ADOY25,
    AUTHOR = {Anshu, Anurag and Dong, Yangjing and Ou, Fengning and Yao,
              Penghui},
     TITLE = {On the computational power of {QAC}0 with barely superlinear
              ancillae},
 BOOKTITLE = {S{TOC}'25---{P}roceedings of the 57th {A}nnual {ACM}
              {S}ymposium on {T}heory of {C}omputing},
     PAGES = {1476--1487},
 PUBLISHER = {ACM, New York},
      YEAR = {2025},
      ISBN = {979-8-4007-1510-5},
   MRCLASS = {68Q06},
  MRNUMBER = {4928535},
       DOI = {10.1145/3717823.3718189},
       URL = {https://doi.org/10.1145/3717823.3718189},
}

@inproceedings {NPVY24,
    AUTHOR = {Nadimpalli, Shivam and Parham, Natalie and Vasconcelos,
              Francisca and Yuen, Henry},
     TITLE = {On the {P}auli spectrum of {QAC}0},
 BOOKTITLE = {S{TOC}'24---{P}roceedings of the 56th {A}nnual {ACM}
              {S}ymposium on {T}heory of {C}omputing},
     PAGES = {1498--1506},
 PUBLISHER = {ACM, New York},
      YEAR = {2024},
      ISBN = {979-8-4007-0383-6},
   MRCLASS = {68Q12},
  MRNUMBER = {4764925},
       DOI = {10.1145/3618260.3649662},
       URL = {https://doi.org/10.1145/3618260.3649662},
}

@article{JTVW25,
  title={Improved Lower Bounds for {QAC$^0$}},
  author={Joshi, Malvika Raj and Tal, Avishay and Vasconcelos, Francisca and Wright, John},
  journal={arXiv preprint arXiv:2512.14643},
  year={2025}
}

@article{FGPT25,
  title={Tight bounds on depth-2 {QAC}-circuits computing parity},
  author={Fenner, Stephen and Grier, Daniel and Padé, Daniel and Thierauf, Thomas},
  journal={arXiv preprint arXiv:2504.06433},
  year={2025}
}

@book {RY20,
    AUTHOR = {Rao, Anup and Yehudayoff, Amir},
     TITLE = {Communication complexity and applications},
 PUBLISHER = {Cambridge University Press, Cambridge},
      YEAR = {2020},
     PAGES = {xvii+251},
      ISBN = {978-1-108-49798-5},
   MRCLASS = {68-02 (03F20 68Q06 68Q11 94A05)},
  MRNUMBER = {4312803},
}

@article{GMW26,
      title={$\mathsf{QAC}^0$ Contains $\mathsf{TC}^0$ (with Many Copies of the Input)}, 
      author={Daniel Grier and Jackson Morris and Kewen Wu},
      journal={arXiv preprint arXiv:2601.03243},
      year={2026}
}

@article {GW20,
    AUTHOR = {G\"{o}\"{o}s, Mika and Watson, Thomas},
     TITLE = {A lower bound for sampling disjoint sets},
   JOURNAL = {ACM Trans. Comput. Theory},
  FJOURNAL = {ACM Transactions on Computation Theory},
    VOLUME = {12},
      YEAR = {2020},
    NUMBER = {3},
     PAGES = {Art. 20, 13},
      ISSN = {1942-3454,1942-3462},
   MRCLASS = {68Q17 (68Q11)},
  MRNUMBER = {4144856},
MRREVIEWER = {Penghui\ Yao},
       DOI = {10.1145/3404858},
       URL = {https://doi.org/10.1145/3404858},
}

@article {AST+03,
    AUTHOR = {Ambainis, Andris and Schulman, Leonard J. and Ta-Shma, Amnon
              and Vazirani, Umesh and Wigderson, Avi},
     TITLE = {The quantum communication complexity of sampling},
   JOURNAL = {SIAM J. Comput.},
  FJOURNAL = {SIAM Journal on Computing},
    VOLUME = {32},
      YEAR = {2003},
    NUMBER = {6},
     PAGES = {1570--1585},
      ISSN = {0097-5397,1095-7111},
   MRCLASS = {94A15 (68Q05 68Q10 68Q15 81P68 94A20)},
  MRNUMBER = {2034251},
MRREVIEWER = {M.\ S.\ Burgin},
       DOI = {10.1137/S009753979935476},
       URL = {https://doi.org/10.1137/S009753979935476},
}

@article{KOW25,
      title={Symmetric Distributions from Shallow Circuits}, 
      author={Daniel M. Kane and Anthony Ostuni and Kewen Wu},
      year={2025},
      journal={arXiv preprint arXiv:2511.14127}
}

@inproceedings{alekseev2025sampling,
  title={Sampling Permutations with Cell Probes is Hard},
  author={Yaroslav Alekseev and Mika G\"o\"os and Konstantin Myasnikov and Artur Riazanov and Dmitry Sokolov},
  booktitle={Proceedings of the 58th Annual ACM Symposium on Theory of Computing},
  year={2026}
}

@incollection {GKM+26,
    AUTHOR = {Grier, Daniel and Kane, Daniel M. and Morris, Jackson and
              Ostuni, Anthony and Wu, Kewen},
     TITLE = {Quantum advantage from sampling shallow circuits: beyond
              hardness of marginals},
 BOOKTITLE = {17th {I}nnovations in {T}heoretical {C}omputer {S}cience
              {C}onference},
    SERIES = {LIPIcs. Leibniz Int. Proc. Inform.},
    VOLUME = {362},
     PAGES = {Art. No. 73, 14},
 PUBLISHER = {Schloss Dagstuhl. Leibniz-Zent. Inform., Wadern},
      YEAR = {2026},
      ISBN = {978-3-95977-410-9},
   MRCLASS = {68Q06 (68Q12)},
  MRNUMBER = {5022651},
       DOI = {10.4230/lipics.itcs.2026.73},
       URL = {https://doi.org/10.4230/lipics.itcs.2026.73},
}

@incollection {Vin25,
    AUTHOR = {Kumar, Vinayak M.},
     TITLE = {New pseudorandom generators and correlation bounds using
              extractors},
 BOOKTITLE = {16th {I}nnovations in {T}heoretical {C}omputer {S}cience
              {C}onference},
    SERIES = {LIPIcs. Leibniz Int. Proc. Inform.},
    VOLUME = {325},
     PAGES = {Art. No. 68, 23},
 PUBLISHER = {Schloss Dagstuhl. Leibniz-Zent. Inform., Wadern},
      YEAR = {2025},
      ISBN = {978-3-95977-361-4},
   MRCLASS = {68Q06},
  MRNUMBER = {4868470},
       DOI = {10.4230/lipics.itcs.2025.68},
       URL = {https://doi.org/10.4230/lipics.itcs.2025.68},
}

@incollection {HL25,
    AUTHOR = {Hoza, William M. and Lv, Zelin},
     TITLE = {On sums of {INW} pseudorandom generators},
 BOOKTITLE = {Approximation, randomization, and combinatorial optimization.
              {A}lgorithms and techniques},
    SERIES = {LIPIcs. Leibniz Int. Proc. Inform.},
    VOLUME = {353},
     PAGES = {Art. No. 67, 24},
 PUBLISHER = {Schloss Dagstuhl. Leibniz-Zent. Inform., Wadern},
      YEAR = {2025},
      ISBN = {978-3-95977-397-3},
   MRCLASS = {68Q87},
  MRNUMBER = {4967625},
       DOI = {10.4230/lipics.approx/random.2025.67},
       URL = {https://doi.org/10.4230/lipics.approx/random.2025.67},
}

@incollection {DH25,
    AUTHOR = {Doron, Dean and Hoza, William M.},
     TITLE = {Implications of better {PRG}s for permutation branching
              programs},
 BOOKTITLE = {Approximation, randomization, and combinatorial optimization.
              {A}lgorithms and techniques},
    SERIES = {LIPIcs. Leibniz Int. Proc. Inform.},
    VOLUME = {353},
     PAGES = {Art. No. 28, 20},
 PUBLISHER = {Schloss Dagstuhl. Leibniz-Zent. Inform., Wadern},
      YEAR = {2025},
      ISBN = {978-3-95977-397-3},
   MRCLASS = {68Q87},
  MRNUMBER = {4967586},
       DOI = {10.4230/lipics.approx/random.2025.28},
       URL = {https://doi.org/10.4230/lipics.approx/random.2025.28},
}

@incollection {LV25,
    AUTHOR = {Lee, Chin Ho and Viola, Emanuele},
     TITLE = {Pseudorandom bits for non-commutative programs},
 BOOKTITLE = {40th {C}omputational {C}omplexity {C}onference},
    SERIES = {LIPIcs. Leibniz Int. Proc. Inform.},
    VOLUME = {339},
     PAGES = {Art. No. 9, 22},
 PUBLISHER = {Schloss Dagstuhl. Leibniz-Zent. Inform., Wadern},
      YEAR = {2025},
      ISBN = {978-3-95977-379-9},
   MRCLASS = {68Q06 (20F05)},
  MRNUMBER = {4938509},
       DOI = {10.4230/lipics.ccc.2025.9},
       URL = {https://doi.org/10.4230/lipics.ccc.2025.9},
}

@incollection {DILV24,
    AUTHOR = {Derksen, Harm and Ivanov, Peter and Lee, Chin Ho and Viola,
              Emanuele},
     TITLE = {Pseudorandomness, symmetry, smoothing: {I}},
 BOOKTITLE = {39th {C}omputational {C}omplexity {C}onference},
    SERIES = {LIPIcs. Leibniz Int. Proc. Inform.},
    VOLUME = {300},
     PAGES = {Art. No. 18, 27},
 PUBLISHER = {Schloss Dagstuhl. Leibniz-Zent. Inform., Wadern},
      YEAR = {2024},
      ISBN = {978-3-95977-331-7},
   MRCLASS = {68Q87},
  MRNUMBER = {4774697},
       DOI = {10.4230/lipics.ccc.2024.18},
       URL = {https://doi.org/10.4230/lipics.ccc.2024.18},
}

@incollection {Sha25,
    AUTHOR = {Shaltiel, Ronen},
     TITLE = {Multiplicative extractors for samplable distributions},
 BOOKTITLE = {40th {C}omputational {C}omplexity {C}onference},
    SERIES = {LIPIcs. Leibniz Int. Proc. Inform.},
    VOLUME = {339},
     PAGES = {Art. No. 22, 22},
 PUBLISHER = {Schloss Dagstuhl. Leibniz-Zent. Inform., Wadern},
      YEAR = {2025},
      ISBN = {978-3-95977-379-9},
   MRCLASS = {68Q87},
  MRNUMBER = {4938522},
       DOI = {10.4230/lipics.ccc.2025.22},
       URL = {https://doi.org/10.4230/lipics.ccc.2025.22},
}

@inproceedings{rao2009extractors,
  title={Extractors for low-weight affine sources},
  author={Rao, Anup},
  booktitle={2009 24th Annual IEEE Conference on Computational Complexity},
  pages={95--101},
  year={2009},
  organization={IEEE}
}

@article{hastad1999pseudorandom,
  title={A pseudorandom generator from any one-way function},
  author={H{\aa}stad, Johan and Impagliazzo, Russell and Levin, Leonid A and Luby, Michael},
  journal={SIAM Journal on Computing},
  volume={28},
  number={4},
  pages={1364--1396},
  year={1999},
  publisher={SIAM}
}

@inproceedings{li2023two,
  title={Two source extractors for asymptotically optimal entropy, and (many) more},
  author={Li, Xin},
  booktitle={2023 IEEE 64th Annual Symposium on Foundations of Computer Science (FOCS)},
  pages={1271--1281},
  year={2023},
  organization={IEEE}
}

@article{vadhan2012pseudorandomness,
  title={Pseudorandomness},
  author={Vadhan, Salil P},
  journal={Foundations and Trends{\textregistered} in Theoretical Computer Science},
  volume={7},
  number={1-3},
  pages={1--336},
  year={2012},
  publisher={Emerald Publishing Limited}
}

@article {Bou07,
    AUTHOR = {Bourgain, Jean},
     TITLE = {On the construction of affine extractors},
   JOURNAL = {Geom. Funct. Anal.},
  FJOURNAL = {Geometric and Functional Analysis},
    VOLUME = {17},
      YEAR = {2007},
    NUMBER = {1},
     PAGES = {33--57},
      ISSN = {1016-443X,1420-8970},
   MRCLASS = {68R10 (11L07 60C05 68W20)},
  MRNUMBER = {2306652},
MRREVIEWER = {Alessandro\ Conflitti},
       DOI = {10.1007/s00039-007-0593-z},
       URL = {https://doi.org/10.1007/s00039-007-0593-z},
}

@article {Yeh11,
    AUTHOR = {Yehudayoff, Amir},
     TITLE = {Affine extractors over prime fields},
   JOURNAL = {Combinatorica},
  FJOURNAL = {Combinatorica. An International Journal on Combinatorics and
              the Theory of Computing},
    VOLUME = {31},
      YEAR = {2011},
    NUMBER = {2},
     PAGES = {245--256},
      ISSN = {0209-9683,1439-6912},
   MRCLASS = {11T23 (11L07)},
  MRNUMBER = {2848253},
MRREVIEWER = {Alessandro\ Conflitti},
       DOI = {10.1007/s00493-011-2604-9},
       URL = {https://doi.org/10.1007/s00493-011-2604-9},
}

@incollection {Li11,
    AUTHOR = {Li, Xin},
     TITLE = {A new approach to affine extractors and dispersers},
 BOOKTITLE = {26th {A}nnual {IEEE} {C}onference on {C}omputational
              {C}omplexity},
     PAGES = {137--147},
 PUBLISHER = {IEEE Computer Soc., Los Alamitos, CA},
      YEAR = {2011},
      ISBN = {978-0-7695-4411-3},
   MRCLASS = {68Q87 (68W20)},
  MRNUMBER = {3025368},
}

@article{lee2005extracting,
  title={Extracting randomness from multiple independent sources},
  author={Lee, Chia-Jung and Lu, Chi-Jen and Tsai, Shi-Chun and Tzeng, Wen-Guey},
  journal={IEEE Transactions on Information Theory},
  volume={51},
  number={6},
  pages={2224--2227},
  year={2005},
  publisher={IEEE}
}

@book {AB09,
    AUTHOR = {Arora, Sanjeev and Barak, Boaz},
     TITLE = {Computational complexity},
      NOTE = {A modern approach},
 PUBLISHER = {Cambridge University Press, Cambridge},
      YEAR = {2009},
     PAGES = {xxiv+579},
      ISBN = {978-0-521-42426-4},
   MRCLASS = {68-01 (03D15 68Q15 68Q17 68Q25 81P68 94A60)},
  MRNUMBER = {2500087},
MRREVIEWER = {Ulrich\ Tamm},
       DOI = {10.1017/CBO9780511804090},
       URL = {https://doi.org/10.1017/CBO9780511804090},
}

@INPROCEEDINGS{RW04,
  author={Renner, R. and Wolf, S.},
  booktitle={International Symposium onInformation Theory, 2004. ISIT 2004. Proceedings.}, 
  title={Smooth {R}enyi entropy and applications}, 
  year={2004},
  pages={233-},
  doi={10.1109/ISIT.2004.1365269}}

@misc{MKS26,
  title={On Sampling Lower Bounds for Polynomials},
  author={Mohammad Mahdi Khodabandeh and Igor Shinkar},
  note={Available at \url{https://eccc.weizmann.ac.il/report/2026/066/}},
  year={2026}
}

\appendix
\crefalias{section}{appendix}

\section{A Non-Constructive Argument}\label{app:non-constructive}

It is often much simpler to show the mere existence of specific objects than to provide explicit examples of them.
While our paper is primarily concerned with these explicit constructions, we record in this appendix a basic existential argument for hard-to-sample distributions for polynomial and $\ac[\oplus]$ sources.
We begin with the following general claim.

\begin{claim}\label{clm:non_constructive_distance}
    If $\Xcal$ is a class of distributions over $\bin^n$ of size $M$, then there exists a (uniform) distribution $\Db$ with
    \[
        \tvdist{\Xb - \Db} \ge 1 - O\pbra{\frac{\log M}{2^n}}^{1/3}
    \]
    for every distribution $\Xb \in \Xcal$.
\end{claim}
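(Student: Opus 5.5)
Write $N = 2^n$. The plan is a probabilistic construction: pick $\Db$ uniform over a random set $\mathbf{S} \subseteq \bin^n$ of size $s$ (to be chosen). Then show that, with positive probability, $\mathbf{S}$ is far from every $\Xb \in \Xcal$ simultaneously, using a union bound over the $M$ members of $\Xcal$.

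\textbf{Step 1: split the distance into heavy and light parts.} Fix $\Xb \in \Xcal$ and a threshold $\tau$. Let $H = \{x : \Pr[\Xb = x] > \tau\}$, so $|H| < 1/\tau$. Since $\Xb(\mathbf{S}) \ge 1 - \tvdist{\Xb - \Db}$, it suffices to show $\Xb(\mathbf{S})$ is small. Write
\[
\Xb(\mathbf{S}) \le \Xb(H \cap \mathbf{S}) + \Xb(\mathbf{S} \setminus H).
\]
Crudely, $\Xb(H\cap\mathbf{S}) \le \Pr[\Db \in H] + \tvdist{\Xb - \Db}$. Rather than the extra distance term, I would use a cleaner test: the event $\mathbf{S} \setminus H$ has $\Db$-mass $1 - |\mathbf{S} \cap H|/s$ and $\Xb$-mass $\Xb(\mathbf{S}\setminus H)$. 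Hence
\[
\tvdist{\Xb - \Db} \ge 1 - \frac{|\mathbf{S}\cap H|}{s} - \Xb(\mathbf{S}\setminus H).
\]

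\textbf{Step 2: concentration for each fixed $\Xb$.} Choose $\mathbf{S}$ by including each $x$ independently with probability $q = s/N$ (conditioning on the size being about $s$, or using the random-size version with a Chernoff bound on $|\mathbf{S}|$). Then:
\begin{itemize}
\item $\mathbb{E}|\mathbf{S}\cap H| \le q/\tau$, and Chernoff makes it at most $2q/\tau + O(\log M)$ except with probability $\ll 1/M$.
\item $\Xb(\mathbf{S}\setminus H)$ is a sum of independent terms bounded by $\tau$ with mean at most $q$. Hoeffding/Bernstein gives that it exceeds $q + \lambda$ with probability $\exp(-\Omega(\lambda^2/(\tau(q+\lambda))))$, which must be $\le 1/(3M)$.
\end{itemize}

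\textbf{Step 3: union bound and parameter choice.} Union bounding over the $M$ sources gives a single $\mathbf{S}$ for which the distance is at least
\[
1 - O\!\left(\frac{1}{\tau N} + \frac{\log M}{s} + q + \sqrt{\tau\log M}\right).
\]
The per-source estimates in Step 2 are routine; the main obstacle is this optimization. Take $q = \theta$, $s = \theta N$, and $\tau = \theta^2/\log M$, making the fourth term $\theta$. The first term becomes $\log M/(\theta^2 N)$ and the second $\log M/(\theta N)$. Balancing $\theta = \log M/(\theta^2 N)$ gives $\theta = (\log M/N)^{1/3}$, which yields the claimed $1 - O((\log M/2^n)^{1/3})$. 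Finally, $\Db$ is uniform on the chosen $\mathbf{S}$, as claimed; nonemptiness of $\mathbf{S}$ also follows from the Chernoff bound on $|\mathbf{S}|$.
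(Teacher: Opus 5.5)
Your proposal is correct and is essentially the paper's argument: a random set $\mathbf{S}$, a heavy/light split of each $\Xb$, concentration bounds for the $\Db$-mass of the heavy points and the $\Xb$-mass of the light points of $\mathbf{S}$, and a union bound over the $M$ sources. The differences are cosmetic. You use independent inclusion with Chernoff/Bernstein where the paper uses a fixed-size subset with Hoeffding for sampling without replacement, and your optimized threshold $\tau = \theta^2/\log M$ equals the paper's threshold $1/s$.
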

\begin{proof}
    Let $\Sb$ be a randomly chosen subset of $\bin^n$ of some size $s$ to be determined, and let $\Db_\Sb$ be the uniform distribution over $\Sb$.
    We will show there exists some choice of $\Sb$ such that $\Db_\Sb$ is far from every distribution in $\Xcal$.

    Consider an arbitrary distribution $\Xb \in \Xcal$.
    Define $L = \cbra{y \in \bin^n : \Xb(y) \le 1/s}$ and $H = \bin^n \setminus L$, noting that $|H| \le s$.
    We have
    \begin{equation*}
        1 - \tvdist{\Xb - \Db_\Sb} = \sum_{y\in \Sb}\min\pbra{\Xb(y), \frac{1}{s}} = \sum_{y\in \Sb \cap L} \Xb(y) + \sum_{y\in \Sb \cap H} \frac{1}{s}.
    \end{equation*}
    For clarity, we define $\Sigmab_L = \sum_{y\in \Sb \cap L} \Xb(y)$ and $\Sigmab_H = \sum_{y\in \Sb \cap H} \frac{1}{s}$.
    Taking expectations, we find that
    \begin{equation*}
        \E_\Sb[\Sigmab_L] = \E_\Sb\sbra{\sum_{y\in \Sb \cap L} \Xb(y)} \le \sum_{y\in L} \Xb(y)\cdot \Pr[y\in \Sb] \le \frac{s}{2^n},
    \end{equation*}
    and similarly that
    \begin{equation*}
        \E_\Sb[\Sigmab_H] = \E_\Sb\sbra{\sum_{y\in \Sb \cap H} \frac{1}{s}} \le \sum_{y\in H} \frac{1}{s}\cdot \Pr[y\in \Sb] \le \frac{|H|}{2^n} \le \frac{s}{2^n}.
    \end{equation*}
    If we apply the version of Hoeffding's inequality for sampling without replacement to the random variables $\cbra{\Xb(y) \cdot \indicator_L(y)}_{y\in \bin^n}$, we find that for any real $t > 0$, 
    \[
        \Pr\sbra{\Sigmab_L \ge \frac{s}{2^n} + t} \le \Pr\sbra{\Sigmab_L - \E[\Sigmab_L] \ge t} \le \exp\pbra{-2t^2 s}.
    \]
    A similar bound holds for $\Sigmab_H$, and combining them with a union bound yields
    \begin{align*}
        \Pr_\Sb\sbra{\tvdist{\Xb - \Db_\Sb} \le 1 - \frac{s}{2^{n-1}} - 2t} &= \Pr_\Sb\sbra{\Sigmab_L + \Sigmab_H \ge \frac{s}{2^{n-1}} + 2t} \\
        &\le \Pr\sbra{\Sigmab_L \ge \frac{s}{2^n} + t} + \Pr\sbra{\Sigmab_H \ge \frac{s}{2^n} + t} \\
        &\le 2\exp\pbra{-2t^2 s}.
    \end{align*}
    An additional union bound over all distributions in $\Xcal$ tells us that
    \begin{equation}\label{eq:prob_close_dist}
        \Pr_\Sb\sbra{\exists \Xb \in \Xcal : \tvdist{\Xb - \Db_\Sb} \le 1 - \frac{s}{2^{n-1}} - 2t} \le 2M\exp\pbra{-2t^2 s}.
    \end{equation}
    
    We conclude the proof by setting $t = c_1\sqrt{\log(M)/s}$ and $s = c_2 \cdot 2^{2n/3}\cdot (\log M)^{1/3}$ for some constants $c_1,c_2 > 0$.
    In this case, the probability in \Cref{eq:prob_close_dist} is strictly less than 1, so there exists a specific $S \subseteq \bin^n$ whose uniform distribution has TV distance $1 - O(\log(M) / 2^n)^{1/3}$ from every distribution in $\Xcal$.
\end{proof}

We cannot immediately obtain hard-to-sample distributions over $\bin^n$ for sources of interest from \Cref{clm:non_constructive_distance}, since they may take arbitrarily many random bits as input, and thus their distribution classes have unbounded size.
Fortunately, these classes can typically be approximated by a collection of sources taking much fewer random bits.
For example, the class of distributions generated by low-degree $\Fbb_2$-polynomials can be approximated by versions taking only $O(n)$ input bits, as we have seen in \Cref{lem:input_redux} (from \cite{chattopadhyay2024extractors}).
We restate this result below for the reader's convenience.

\leminputredux*

From here, one can bound the size of the distribution class, and obtain an optimally hard distribution via \Cref{clm:non_constructive_distance}.

\begin{theorem}
    There exists a constant $\delta > 0$ such that the following statement holds.
    Let $\Xcal$ be the class of $\Fbb_2$-polynomial sources of degree $\delta n$ over $\bin^n$.
    There exists a (uniform) distribution $\Db$ such that
    \[
        \tvdist{\Xb - \Db} \ge 1 - 2^{-\Omega(n)}
    \]
    for every distribution $\Xb \in \Xcal$. 
\end{theorem}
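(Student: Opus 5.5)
The approach is to reduce to \Cref{clm:non_constructive_distance} via \Cref{lem:input_redux}. We replace $\Xcal$ by a finite class $\Xcal'$ of degree-$\delta n$ polynomial sources with few inputs, such that every $\Xb \in \Xcal$ is $2\eps$-close to some member of $\Xcal'$. By the triangle inequality, a distribution $\Db$ with $\tvdist{\Xb' - \Db} \ge 1-\gamma$ for all $\Xb' \in \Xcal'$ then satisfies $\tvdist{\Xb - \Db} \ge 1 - \gamma - 2\eps$ for all $\Xb \in \Xcal$. It therefore suffices to get both $\gamma$ and $\eps$ of the form $2^{-\Omega(n)}$.

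\textbf{Step 1 (input reduction).} Set $\eps = 2^{-n}$, so $\ell = \ceilbra{n + 3\log(1/\eps)} = 4n$. Take any degree-$\delta n$ polynomial map $f\colon \Fbb_2^r \to \Fbb_2^n$. If $r \le \ell$, pad it with dummy inputs, which does not change the output distribution. If $r > \ell$, \Cref{lem:input_redux} gives $h(x) = f(Ax+b)$ on $\ell$ inputs with $\tvdist{f(\Ub^r) - h(\Ub^\ell)} \le 2\eps$. Composing with the affine map $x \mapsto Ax+b$ does not raise degree, so $h$ is again a degree-$\delta n$ polynomial map. Thus $\Xcal'$ can be taken to be the degree-$\delta n$ polynomial sources on exactly $4n$ inputs with $n$ outputs.

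\textbf{Step 2 (counting).} Each output coordinate is a multilinear polynomial with coefficients in $\Fbb_2$, determined by at most $\binom{4n}{\le \delta n}$ coefficients. Hence
\[
    \log |\Xcal'| \le n\binom{4n}{\le \delta n} \le n\, 2^{4n H(\delta/4)},
\]
where $H$ is the binary entropy function. This step is the only delicate point. We need $(\log M/2^n)^{1/3} = 2^{-\Omega(n)}$, which holds once $4H(\delta/4) < 1$, i.e.\ for $\delta$ a sufficiently small constant, giving $\log M \le 2^{(1-c)n}$ for some $c>0$.

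\textbf{Step 3 (conclusion).} Apply \Cref{clm:non_constructive_distance} to $\Xcal'$ to obtain a uniform $\Db$ with distance at least $1 - O(2^{-cn/3})$ from every member of $\Xcal'$. Subtracting $2\eps = 2^{1-n}$ yields $\tvdist{\Xb - \Db} \ge 1 - 2^{-\Omega(n)}$ for every $\Xb \in \Xcal$.
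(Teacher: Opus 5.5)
Your proposal is correct and follows essentially the same route as the paper. You reduce to $\ell = O(n)$ inputs via \Cref{lem:input_redux}, count the resulting at most $2^{n\binom{\ell}{\le \delta n}} \le 2^{2^{(1-c)n}}$ sources, apply \Cref{clm:non_constructive_distance}, and absorb the approximation error with the triangle inequality; the differences are cosmetic (you fix $\eps = 2^{-n}$ so $\ell = 4n$, bound the binomial sum via binary entropy rather than $(e\ell/d)^d$, and correctly carry the $2\eps$ error from the lemma where the paper writes $\eps$).
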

\begin{proof}
    Set $\eps = 2^{- cn}$ for some sufficiently large constant $c > 0$, and let $\ell = \ceilbra{n + 3\log(1/\eps)} = O(n)$.
    The number of degree-$d$ $\Fbb_2$-polynomial sources over $\bin^n$ with at most $\ell$ input bits is $2^{\binom{\ell}{\le d}\cdot n} \le 2^{\pbra{\frac{e\ell}{d}}^d \cdot n}$.
    For $d = \delta n$ and our setting of $\ell$, this is at most $2^{2^{\delta'\cdot n}}$ for some constant $\delta' < 1$, so \Cref{clm:non_constructive_distance} promises a uniform distribution $\Db$ over $\bin^n$ such that
    \[
         \tvdist{Q(\Ub^\ell) - \Db} \ge 1 - 2^{-\Omega(n)}
    \]
    for every degree-$d$ polynomial map $Q\colon \bin^\ell \to \bin^n$.

    We now reduce to the general case of arbitrarily many input bits.
    If $r \le \ell$, then we have already proven the desired lower bound, so assume this is not the case.
    By invoking \Cref{lem:input_redux}, we have that for any degree-$d$ polynomial map $P\colon \bin^r \to \bin^n$, there exists a degree-$d$ polynomial map $Q\colon \bin^\ell \to \bin^n$ such that $Q(\Ub^\ell)$ is $\eps$-close to $P(\Ub^r)$.
    (Note that the transformation $x \mapsto Ax + b$ cannot increase the degree.)
    By our choice of $\eps$, we conclude that any such $P(\Ub^r)$ is also $(1 - 2^{-\Omega(n)})$-far from $\Db$.
\end{proof}

Next, we consider $\ac[\oplus]$ sources.

\begin{theorem}
    There exists a constant $\delta > 0$ such that the following statement holds.
    Let $\Xcal$ be the class of $\ac[\oplus]$ sources of size $2^{\delta n}$ over $\bin^n$.
    There exists a (uniform) distribution $\Db$ such that
    \[
        \tvdist{\Xb - \Db} \ge 1 - 2^{-\Omega(n)}
    \]
    for every distribution $\Xb \in \Xcal$.
\end{theorem}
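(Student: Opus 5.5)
The plan is to reduce to \Cref{clm:non_constructive_distance} by approximating every $\ac[\oplus]$ source of size $s = 2^{\delta n}$ with one that uses only $O(n)$ random input bits, count these, and then transfer the bound back to arbitrary input length. This mirrors the polynomial-source case. The difficulty is that $\ac[\oplus]$ circuits are not closed under the input-reduction step of \Cref{lem:input_redux} for free, so this step needs care.

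\textbf{Input reduction.} Let $C\colon \bin^r \to \bin^n$ be an $\ac[\oplus]$ circuit of size $s$, with $r$ arbitrary. Apply \Cref{lem:input_redux} to $f = C$ with $\eps = 2^{-cn}$ for a large constant $c$, and set $\ell = \ceilbra{n + 3\log(1/\eps)} = O(n)$. This gives $A, b$ such that $C(A\Ub^\ell + b)$ is $2\eps$-close to $C(\Ub^r)$. Each coordinate of $Ax+b$ is a parity of at most $\ell$ input bits, so it can be computed by a single $\oplus$ gate, possibly followed by a negation. Hence $h(x) = C(Ax+b)$ is an $\ac[\oplus]$ circuit with $\ell$ inputs, depth increased by one, and size at most $s + 2r$.

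The one real obstacle is that $r$ may exceed $s$. I would handle this by first deleting input bits that feed into no gate, since they do not affect the output distribution. After deletion we may assume $r \le s$, so the resulting size is at most $3s = 2^{\delta n + O(1)}$.

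\textbf{Counting.} Every circuit with $\ell = O(n)$ inputs and size $s' = O(2^{\delta n})$, using gates from $\{\land, \lor, \neg, \oplus\}$, is determined by each gate's type and its list of inputs. The number of such circuits is therefore at most $(4 \cdot 2^{s'+\ell})^{s'} \cdot (s')^n = 2^{O(2^{2\delta n})}$. For $\delta < 1/2$ this is $M \le 2^{2^{\delta' n}}$ with $\delta' < 1$ (depth is implicitly unconstrained, which only enlarges the class). Then \Cref{clm:non_constructive_distance} yields a uniform $\Db$ with $\tvdist{\Xb - \Db} \ge 1 - O(2^{(\delta'-1)n})^{1/3} = 1 - 2^{-\Omega(n)}$ for every source in this bounded class.

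\textbf{Transfer.} For a general source $C(\Ub^r)$, the reduced source $h(\Ub^\ell)$ lies in the counted class and is $2\eps$-close to $C(\Ub^r)$. By the triangle inequality, $\tvdist{C(\Ub^r) - \Db} \ge 1 - 2^{-\Omega(n)} - 2^{-cn+1} = 1 - 2^{-\Omega(n)}$, where $\delta$ is taken small enough for the counting step.
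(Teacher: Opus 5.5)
Your counting and transfer steps are fine, but the input-reduction step has a genuine gap. The claim that after deleting unused input bits you may assume $r \le s$ is false. Size here counts gates, and the $\wedge$, $\vee$, $\oplus$ gates have unbounded fan-in, so one gate can read arbitrarily many inputs; for example, a single $\wedge$ gate of fan-in $r$ works for every $r$.

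For $h(x)=C(Ax+b)$ you then need a separate gate for each distinct affine form $(Ax+b)_i$ that feeds an $\wedge$ or $\vee$ gate. That is up to $\min(r,2^{\ell+1})$ extra gates. Since $\ell = n+3cn > n$, this can be $2^{\Omega(\ell)} \gg 2^{n}$. The number of resulting circuits is then no longer of the form $2^{2^{\delta' n}}$ with $\delta'<1$, so \Cref{clm:non_constructive_distance} yields nothing. This is exactly why the paper notes that $x \mapsto Ax+b$ ``may require $2^{\ell}$ many gates.'' For $\oplus$ gates the issue is harmless, because a parity of parities is a parity: you can rewire them to read the new inputs directly, plus at most one negation each. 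But you need to say this rather than add one gate per original input.

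The missing idea is to first prune wide $\wedge$/$\vee$ gates. An $\wedge$ (resp.\ $\vee$) gate reading more than $\delta' n$ distinct input variables equals $0$ (resp.\ $1$) except with probability $2^{-\delta' n}$. Fixing all such gates to constants therefore costs at most $s\,2^{-\delta' n} = 2^{-\Omega(n)}$ in TV distance when $\delta' > \delta$. Afterwards at most $s\delta' n$ input bits feed $\wedge$/$\vee$ gates.

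The paper then avoids \Cref{lem:input_redux} altogether. The pruned circuit depends on $x$ only through the span $V$ of these bits together with the parities feeding the $\oplus$ gates. Since $\dim V \le s(\delta' n+1)$, this gives an exact reduction to $\dim V$ uniform inputs with no additional gates, and the count $2^{O(s(\ell+s)+n\log(\ell+s))}$ is still $2^{2^{\delta'' n}}$.

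With the pruning step added, your route can also be repaired. Add $\oplus$ gates (plus negations) only for the $O(sn)$ input bits read by $\wedge$/$\vee$/$\neg$ gates or output directly, and rewire the $\oplus$ gates. This keeps the size at $O(sn)$. Without the pruning, however, your argument does not go through.
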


We cannot apply the same argument as in the case of polynomial sources, since the transformation $x \mapsto Ax + b$ may require $2^{\ell}$ many gates at the start, which is unaffordable.
Fortunately, there is an alternative, elementary analysis to bound the number of inputs.

\begin{proof}
    Let $C\colon \bin^r \to \bin^n$ be an $\ac[\oplus]$ circuit with $g$ gates, and let $\delta' > 2\delta$ be a sufficiently small constant.
    We claim that there exists another such circuit $\tilde{C}\colon \bin^\ell \to \bin^n$ on $\ell \le g(\delta' n + 1)$ many input bits and at most $g$ gates such that $C(\Ub^r)$ is $2^{-\Omega(n)}$-close to $\tilde{C}(\Ub^\ell)$.
    We define $\tilde{C}$ from $C$ in two stages.
    First, fix any \textsf{AND} and \textsf{OR} gates in $C$ which take more than $\delta' n$ original input bits to the constant values 0 and 1, respectively.
    By a union bound, the probability that this changes $C$'s output is less than $2^{-\delta' n}\cdot g \le 2^{-\Omega(n)}$.
    
    Next, consider the subspace $V$ spanned by each of the original input bits feeding into any remaining \textsf{AND} and \textsf{OR} gates and by each of the sums of the original input bits feeding into an \textsf{XOR} gate.
    We claim that we only need as many input bits for $\tilde{C}$ as the dimension of $V$. 
    For this, it suffices to reason about the \textsf{XOR} gates, since the original inputs to the \textsf{AND} and \textsf{OR} gates do not change.
    If any bits feeding into such a gate are independent from the set of input bits we have already constructed, then make the sum of the bits feeding into it a new input bit; otherwise, we can use an appropriate sum of existing inputs.
    
    Setting $\ell = \dim(V) \le g(\delta' n + 1)$ completes the claim that $C(\Ub^r)$ can be approximated by some $\tilde{C}(\Ub^\ell)$, so it remains to bound the number of these latter sources.
    By standard counting arguments (see, e.g., \cite[Section 6.5]{AB09}), the number of size-$g$ $\ac[\oplus]$ sources over $\bin^n$ with at most $\ell$ input bits is at most $2^{O(g(\ell+g) + n\log(\ell + g))}$.
    For $g \le 2^{\delta n}$ and our bound on $\ell$, this is at most $2^{2^{\delta'' \cdot n}}$ for some constant $\delta'' < 1$, so \Cref{clm:non_constructive_distance} promises a uniform distribution $\Db$ over $\bin^n$ such that
    \[
        \tvdist{\tilde{C}(\Ub^\ell) - \Db} \ge 1 - 2^{-\Omega(n)}
    \]
    for every size-$g$ $\ac[\oplus]$ circuit $\tilde{C}\colon \bin^\ell \to \bin^n$.
    Recalling that such sources approximate any size-$g$ $\ac[\oplus]$ circuit to error $2^{-\Omega(n)}$ concludes the proof.
\end{proof}

\end{document}